\newtheorem{theorem}{{Theorem}}[section] 
\newtheorem{defi}{{Definition}}[section]
\newtheorem{proposition}{Proposition}[section]
\newtheorem{coro}[theorem]{Corollary}
\newtheorem{lemme}{{Lemma}}[section]
\begin{document}

\title{Hamiltonian hydrodynamic reductions of one-dimensional Vlasov equations}

\author{Rayan Oufar}
\email{rayan.oufar@univ-amu.fr}

\author{Cristel Chandre}
\email{cristel.chandre@cnrs.fr}

\affiliation{CNRS, Aix Marseille Univ, I2M, Marseille, France}

\begin{abstract}
    We investigate Hamiltonian fluid reductions of the one-dimensional Vlasov-Poisson equation. Our approach utilizes the hydrodynamic Poisson bracket framework, which allows us to systematically identify fundamental normal variables derived from the analysis of the Casimir invariants of the resulting Poisson bracket. This framework is then applied to analyze several well-established Hamiltonian closures of the one-dimensional Vlasov equation, including the multi-delta distribution and the waterbag models. Our key finding is that all of these seemingly distinct closures consistently lead to the formulation of a unified form of parametric closures: When expressed in terms of the identified normal variables, the parameterization across all these closures is revealed to be polynomial and of the same degree. All these parametric closures are uniquely generated from one of the moments, called $\mu_2$, a cubic polynomial in the normal variables. This result establishes a structural connection between these different physical models, offering a path toward a more unified and simplified description of the one-dimensional Vlasov-Poisson dynamics through its reduced hydrodynamic forms with an arbitrary number of fluid variables.  
\end{abstract}

\maketitle

\section{Introduction}

The one-dimensional Vlasov--Poisson system describes the dynamics of a collisionless plasma through its phase--space distribution function \(f(t,x,p)\), which gives the density of particles at time \(t\), position \(x\), and momentum \(p\). It is written as
\begin{equation}
    \label{eq:vp-vlasov}
    \frac{\partial f}{\partial t}=-v \frac{\partial f}{\partial x} - q E(t,x) \frac{\partial f}{\partial p},
\end{equation}
where \(v = p/m\) denotes the particle velocity and $q$ is the electric charge of the particle species. The self-consistent electrostatic field \(E(t,x)\) follows from Poisson’s equation
\begin{equation}
    \frac{\partial E}{\partial x} = q \left(\int f(t,x,p)\,\mathrm{d}p -n_0\right),
\label{eq:vp-poisson}
\end{equation}
with \(n_0\) denoting the density of a uniform neutralizing background that ensures overall charge neutrality. In what follows, we assume $q=m=1$. Although the kinetic description \eqref{eq:vp-vlasov}--\eqref{eq:vp-poisson} is complete, it evolves a function over the full phase space \((x,p)\), which can be computationally and analytically demanding, especially in higher dimensions, and provide more information than needed. A standard approach toward model reduction is to introduce fluid moments of the distribution $f$
\begin{equation}
    \label{eqn:moments}
    P_n(t,x) = \int p^n\, f(t,x,p)\,\mathrm{d}p,
\end{equation}
for $n=0,1,\ldots$, under the assumption that \(f\) decays sufficiently rapidly as \(|p|\to\infty\). These moments correspond to macroscopic plasma quantities: \(P_0\) is the particle density, \(P_1\) the momentum density, \(P_2\) is related to pressure and temperature, and \(P_3\) to the heat flux. By evolving a reduced set of variables in configuration space, one obtains a more tractable model with clearer and more intuitive physical interpretations.

Each dynamical equation for $P_n$, e.g., mass conservation, momentum balance, energy balance, involves a higher-order moment \(P_{n+1}\). The term in the Vlasov equation responsible for this shift in moments is the streaming term $-v~\partial f/\partial x$ (see also Ref.~\cite{Hazeltine2004}).  Thus, the hierarchy does not naturally close and, without additional modeling assumptions, remains untractable as an infinite ladder of partial differential equations.

To obtain a closed fluid model, one must prescribe a closure relation expressing higher-order moments in terms of lower-order ones~\cite{Braginskii_1965,Ott_1969,Hazeltine_1985,Hammett_1990,Hammett_1993,Sugama_2003,Passot_2004,Shadwick_2004,Shadwick_2005,Goswami_2005,Sarazin_2009,Shadwick_2012}. Common closure strategies include:
\begin{itemize}
    \item Isothermal closure: Pressure proportional to density, modeling a fixed-temperature regime.
    \item Adiabatic closure: A polytropic relation such as \(P_2 \propto P_0^\gamma\), appropriate for collisionless compressions.
    \item Cold plasma closure: Neglecting higher-order moments (\(P_{n\ge 2}=0\)), suitable for mono-energetic beams.
    \item Kinetic-based closures: Derived via asymptotic methods such as Chapman--Enskog expansions for weakly collisional regimes.
    \item Closures based on an ansatz on the shape of the distribution function $f$, such as the waterbags~\cite{depackh_water-bag_1962,bertrand_non_1968}, the multi-delta distributions~\cite{Gosse2003,fox_higher-order_2009}, or a generalized Maxwellian distribution~\cite{Shadwick_2004,de_guillebon_hamiltonian_2012}.  
    \item Hamiltonian closures~\cite{perin_higher-order_2014-1,perin_hamiltonian_2015-2,chandre_four-field_2022}: Ensuring the Hamiltonian structure for the truncated system, retaining the Hamiltonian character of the Vlasov equation.
\end{itemize}
The chosen closure controls the level of kinetic information retained and must balance physical accuracy, mathematical well-posedness, and computational feasibility. When selected appropriately, moment closures capture important macroscopic behavior---including wave propagation and pressure effects---while reducing the complexity inherent to full kinetic simulations. Usually these closures amount to finding suitable functions, aka, equations of state, of the form
\begin{equation}
    \label{eqn:state}
    P_{n\geq N}=P_{n\geq N}(P_0,P_1,\ldots,P_{N-1}),
\end{equation}
in order to truncate the infinite ladder to the first $N$ fluid moments. Such equations of state are often selected for their practicality and their modeling of specific conditions of the plasma. Given the explicit form of the bracket, $N-3$ functions $P_{n\geq N}$ of $N$ variables should be specified. 

In Refs.~\cite{perin_higher-order_2014-1,perin_hamiltonian_2015-2,chandre_four-field_2022}, a strategy to find relevant closures was developed based on the Hamiltonian structure of the parent kinetic equation. By solving directly the set of nonlinear partial differential equations for the closure functions resulting from the Jacobi identity, Hamiltonian closure functions were identified and studied for $N\leq 4$. In particular, in Ref.~\cite{chandre_four-field_2022}, it was shown that these closure functions for $N=4$ can take very intricate forms, and sometimes cannot be formulated in an explicit way. Instead, it was shown that a parametric expression of the same cumbersome closure can express it in a quite simple form. These parametric forms for the moments are
$$
    P_n = P_n(\nu_1,\ldots,\nu_N), 
$$
for $n\geq 0$, and with as many carefully selected parameters $\nu_i$ as the number of field variables. In particular, for $n=0,\ldots,N-1$, the first $N$ relations can be locally inverted $\nu_i=\nu_i(P_0,\ldots,P_{N-1})$ and inserted in the remaining $N-3$ equations $P_{n\geq N}$ to define (locally) an equation of state of the form~\eqref{eqn:state}. Of course, this inversion should always be possible in principle (to ensure the Jacobi identity), but often cannot be written explicitly. 
In the following, we demonstrate that defining closures parametrically significantly simplifies the reduction: We show that the closure is generated by a single function $\mu_2$ of $N-2$ variables, instead of the initial $N-3$ functions $P_{n\geq N}$ of $N$ variables. Furthermore, by examining all known Hamiltonian closures of the one-dimensional Vlasov-Poisson equation, we show that they each admit a polynomial parameterization in $\nu_k$ uniquely determined by a single cubic polynomial $\mu_2$ in $N-2$ normal variables. This underlying structure reveals a deep connection between seemingly distinct closures, providing a unified framework for describing the one-dimensional Vlasov-Poisson dynamics through reduced hydrodynamic models with an arbitrary number of fluid variables.

In Sec.~\ref{sec:part0}, we recall some known elements on the Hamiltonian formulation of the Vlasov equation and its fluid reductions.
In Sec.~\ref{sec:part1}, we define what hydrodynamic Poisson brackets for fluid and plasmas are, and provide some general properties. In Sec.~\ref{sec:part2}, we investigate all known Hamiltonian closures in the light of the properties of the hydrodynamic brackets. 

\section{Preliminaries}
\label{sec:part0}

The Vlasov equation possesses a Hamiltonian structure~\cite{morrison_poisson_1982,morrison_hamiltonian_1998}, in the sense that it can be derived from a Poisson bracket $\{\cdot,\cdot\}$ and a Hamiltonian $H$ as $\dot{f}=\{f,H\}$.

We remind that a \emph{Poisson bracket} on an algebra $\mathcal{A}$ is a bilinear operator $\{\cdot, \cdot\} : \mathcal{A} \times \mathcal{A} \to \mathcal{A}$
satisfying the following properties for all $F, G, H \in \mathcal{A}$:\\
- antisymmetry: $\{F, G\} = -\{G, F\}$, \\
- Jacobi identity: $\{F, \{G, H\}\} + \{G, \{H, F\}\} + \{H, \{F, G\}\} = 0$, \\
- Leibniz rule: $\{F, GH\} = \{F, G\} H + G \{F, H\}$.

The Hamiltonian associated with the Vlasov-Poisson equation is defined as
$$
    H[f]=\iint f\frac{p^2}{2}\ {\rm d} x {\rm d} p + \frac{1}{2}\int E^2 \ {\rm d} x,
$$
where $E[f]$ is given by Eq.~\eqref{eq:vp-poisson} and the Poisson bracket
\begin{equation}
\label{eqn:PB_Vlasov}
     \{F,G\} = \iint f\left( \frac{\partial}{\partial x}\frac{\delta F} {\delta f} \frac{\partial}{\partial p}\frac{\delta G}{\delta f} - \frac{\partial}{\partial p}\frac{\delta F} {\delta f} \frac{\partial}{\partial x}\frac{\delta G}{\delta f}\right) \ {\rm d} x {\rm d} p.
\end{equation}
Using the moments~\eqref{eqn:moments}, the Poisson bracket reduces to the Kupershmidt–Manin bracket~\cite{kupershmidt_equations_1978,kupershmidt_hydrodynamical_1987}
\begin{equation}
\label{eqn:PB_KM}
    \{F,G\} = \int m P_{n+m-1} \left( \partial_x F_n G_m -F_m \partial_x G_n\right) \ {\rm d} x,
\end{equation}
with implicit summation over repeated indices $n$ and $m$, and where $F_n$ denotes the functional derivative of the observable $F$ with respect to $P_n$. Using an integration by parts, this Poisson bracket is rewritten as
\begin{equation}
    \label{eqn:PB_P}
    \{F,G\} = \int \left(\partial_x F_n \alpha_{nm}({\bf P}) G_m + F_n \beta_{nm}({\bf P}, \partial_x {\bf P}) G_m\right) \ {\rm d} x,
\end{equation}
where 
\begin{subequations}
\label{eqn:albeP}
    \begin{eqnarray}
        && \alpha_{nm}({\bf P}) = (n+m) P_{n+m-1},\\
        && \beta_{nm}({\bf P}, \partial_x {\bf P}) = n \partial_x P_{n+m-1}.
    \end{eqnarray}
\end{subequations}

We notice that $\alpha$ is symmetric and $\partial_x \alpha = \beta + \beta^t$, to ensure that the resulting bracket is antisymmetric.
If the change of variables ${\bf P}\mapsto {\bf Q}={\bf Q} ({\bf P})$, i.e., a change of variables which only depend on the values of the original field variables (and not, e.g., of its derivatives) is performed, the form~\eqref{eqn:PB_P} of the Poisson bracket remains unchanged with $\alpha$ and $\beta$ replaced by $\overline{\alpha}$ and $\overline{\beta}$ given by
\begin{subequations}
\label{eqn:change}
\begin{eqnarray}
    && \overline{\alpha}_{kl}({\bf Q}) = \frac{\partial Q_k}{\partial P_n}\alpha_{nm}\frac{\partial Q_l}{\partial P_m},\label{eqn:change_1} \\
    && \overline{\beta}_{kl}({\bf Q}, \partial_x {\bf Q}) = \partial_x \frac{\partial Q_k}{\partial P_n}\alpha_{nm}\frac{\partial Q_l}{\partial P_m}+\frac{\partial Q_k}{\partial P_n}\beta_{nm}\frac{\partial Q_l}{\partial P_m}.
\end{eqnarray}
\end{subequations}
In particular, we notice that $\overline{\alpha}$ is still a function of the values of the field variables (and not of their derivatives), and that if $\beta$ is linear in $\partial_x {\bf P}$, then $\overline{\beta}$ is a function of ${\bf Q}$ and $\partial_x {\bf Q}$, linear in $\partial_x {\bf Q}$. We also notice that the matrix $\overline{\alpha}$ is still symmetric and that it satisfies $\partial_x \overline{\alpha} = \overline{\beta} + \overline{\beta}^t$. The invariance of the Poisson brackets of the form~\eqref{eqn:PB_P} led to the definition of hydrodynamic brackets~\cite{dubrovin_hamiltonian_1983,mokhov_dubrovin-novikov_1988,dubrovin_hydrodynamics_1989,mokhov_classification_2008}. Given this invariance, we explore the Hamiltonian closures within this family of hydrodynamic Poisson brackets. 

Truncating Poisson brackets of the form~\eqref{eqn:PB_P}, i.e., restricting the sum over $n$ and $m$ from 0 to $N-1$, and imposing a closure of the form~\eqref{eqn:state}, still preserve the shape of the bracket, in the sense that $\alpha$ remains only a function of the values of the field variables, and $\beta$ remains a function of the values and first derivatives of these field variables, linear in the first derivatives. However, the Jacobi identity identity is most likely broken for a generic equation of state of the form~\eqref{eqn:state}. As a result, the dynamics generated by truncated brackets of the form~\eqref{eqn:PB_P} is no longer Hamiltonian. The central question in finding Hamiltonian closures if for which Eq.~\eqref{eqn:state}, the bracket~\eqref{eqn:PB_P} is a Poisson bracket? 

\section{Hydrodynamic Poisson brackets for fluid and plasmas}
\label{sec:part1}

Here we consider the algebra $\mathcal{A}$ of functionals $F$ of the $N$ fields ${\bf u}(x)=(u_1(x),u_2(x),\ldots,u_N(x))$. In what follows, we assume that the field variables $u_n$, and all their derivatives vanish at infinity. We also assume implicit summation over repeated indices.

\subsection{Definition and properties of hydrodynamic Poisson brackets}

Following the above discussion and in particular Eq.~\eqref{eqn:PB_P}, we define a hydrodynamic bracket~\cite{dubrovin_hamiltonian_1983,dubrovin_hydrodynamics_1989,mokhov_dubrovin-novikov_1988} as
\begin{equation}
    \label{eqn:PBh}
    \{F, G\} = \int \left( \partial_x F_n \alpha_{nm}({\bf u}) G_m + F_n \beta_{nm}({\bf u}, \partial_x {\bf u}) G_m\right) {\rm d}x,
\end{equation}
where $F_n={\delta F}/{\delta u_n}$ is the functional derivatives of $F$ with respect to the field variable $u_n$, and the implicit summation over $n$ and $m$ is from 1 to $N$.
For these brackets, $\alpha({\bf u})$ is symmetric, i.e., $\alpha=\alpha^t$, and only depends on the values of the functions $u_n(x)$ and $\beta_{nm}({\bf u},\partial_x {\bf u}) = \beta_{nmk}({\bf u})\partial_x u_k$, for all $n,m=1,\ldots, N$, i.e., $\beta$ is linear in $\partial_x {\bf u}$. These brackets are antisymmetric if and only if
$$
    \partial_x \alpha_{nm} = \beta_{nm} + \beta_{mn},
$$
and they obviously satisfy the Leibniz rule.  The truncated bracket~\eqref{eqn:PB_P} in the moments $(P_0,\ldots,P_{N-1})$ is a hydrodynamic bracket for all $N$. 

{\em Remark:} Using an integration by parts, a hydrodynamic bracket can be mapped to:
\begin{equation}
    \label{eqn:PBa}
    \{F,G\} = \int a_{nm}({\bf u})\left( F_n\partial_x G_m - G_n\partial_x F_m \right)\, {\rm d}x, 
\end{equation}
where $\alpha_{nm}= a_{nm} + a_{mn}$ and $\beta_{nm}= \partial_x a_{mn}$, as it is for the bracket~\eqref{eqn:PB_KM}. The inconvenience of working with the brackets of the form~\eqref{eqn:PBa} is the presence of a gauge invariance: By adding a constant antisymmetric matrix to $a$, the expression of the bracket, and in particular, $\alpha$ and $\beta$, does not change. 

Under some non-degeneracy condition, the Poisson brackets~\eqref{eqn:PBh} can be locally mapped to a canonical form where $\alpha$ is constant and $\beta=0$. This result can be viewed as the ``equivalent'' of the Lie-Darboux theorem for finite dimensional Hamiltonian systems. It is formalized using the following theorem~\cite{dubrovin_hamiltonian_1983}:  

\begin{theorem}(see Ref.~\cite{dubrovin_hamiltonian_1983})
    \label{thm:flat}
    For the bracket~\eqref{eqn:PBh} with non-degenerate $\alpha$ to satisfy the Jacobi identity, a necessary and sufficient condition is that there exists a locally invertible change of field variables ${\bf u}\mapsto {\bm \nu}={\bm \nu}({\bf u})$ such that the bracket expressed in the new coordinates ${\bm \nu}$ is of hydrodynamic type with a constant metric $g$ and no $\beta$-part, i.e., in the field variables ${\bm\nu}$, bracket~\eqref{eqn:PBh} becomes
    \begin{equation}
        \label{eqn:PBnu}
        \{\overline{F}, \overline{G}\} = \int  \partial_x \overline{F}_n g_{nm} \overline{G}_m\ {\rm d}x,
    \end{equation}
    where $\overline{F}[{\bm \nu}]=F[{\bf u}]$ and $\overline{F}_n={\delta \overline{F}}/{\delta \nu_n}$. 
\end{theorem}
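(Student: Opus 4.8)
The plan is to recognize the statement as the classical Dubrovin--Novikov theorem and to recast the analytic data $(\alpha,\beta)$ as geometric data on the target space of the fields. First I would treat the non-degenerate symmetric matrix $\alpha_{nm}({\bf u})$ as a (pseudo-)Riemannian metric on the $N$-dimensional space coordinatized by ${\bf u}$, with inverse $\alpha^{-1}$, and I would write $\beta_{nm}=\beta_{nmk}\,\partial_x u_k$ so that the three-index object $\beta_{nmk}({\bf u})$ defines a linear connection $\Gamma$ on the target space, obtained by raising an index of $\beta$ with $\alpha^{-1}$ in the standard Dubrovin--Novikov fashion. The antisymmetry condition $\partial_k\alpha_{nm}=\beta_{nmk}+\beta_{mnk}$ already recorded in the excerpt is, in this dictionary, precisely the statement that this connection is compatible with the metric, $\nabla\alpha=0$.

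The reverse implication is the short half. If there is a locally invertible change ${\bf u}\mapsto{\bm\nu}$ bringing the bracket to the constant-metric, $\beta$-free form~\eqref{eqn:PBnu}, then in the ${\bm\nu}$-coordinates the bracket has constant coefficients and is of the Gardner--Zakharov--Faddeev type, for which the Jacobi identity is a direct and trivial verification. Since the Jacobi identity is an intrinsic property of the bracket, and since the transformation rules~\eqref{eqn:change} show that the hydrodynamic form~\eqref{eqn:PBh} is preserved under such point transformations, the identity transported back to the ${\bf u}$-coordinates still holds.

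The forward implication is where the real work lies. I would substitute the bracket~\eqref{eqn:PBh} into the Jacobi identity $\{F,\{G,H\}\}+\mathrm{cyclic}=0$, evaluated on the generating functionals $u_n$, and expand. Organizing the result by the order of the $x$-derivatives hitting the delta functions, equivalently by collecting the coefficients of the independent combinations $\partial_x^2 u$, $\partial_x u\,\partial_x u$, and so on, produces a finite list of algebraic-differential constraints on $\alpha_{nm}$ and $\beta_{nmk}$. I expect these constraints to reduce, after using metric compatibility and the symmetry $\Gamma^k_{nm}=\Gamma^k_{mn}$ (vanishing torsion), to the single geometric condition that the Riemann curvature tensor of $\alpha$ vanishes identically. \textbf{This coefficient-matching --- isolating the flatness condition from the tangle of Jacobi terms --- is the main obstacle}, as it is the computational heart of the Dubrovin--Novikov argument and requires careful index bookkeeping.

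Finally I would invoke the fundamental theorem of (pseudo-)Riemannian geometry: a non-degenerate metric with identically vanishing curvature is flat, hence admits local flat coordinates ${\bm\nu}={\bm\nu}({\bf u})$ in which its components $g_{nm}$ are constant and the Levi-Civita Christoffel symbols vanish. Because metric compatibility together with the torsion-free property identifies the connection built from $\beta$ with the Levi-Civita connection of $\alpha$, its Christoffel symbols also vanish in these coordinates, so $\beta=0$ there. The transformation rules~\eqref{eqn:change} then confirm that in the coordinates ${\bm\nu}$ the bracket takes exactly the form~\eqref{eqn:PBnu} with constant metric $g$, which closes the equivalence.
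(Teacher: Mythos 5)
The paper offers no proof of Theorem~\ref{thm:flat}: it is stated as a known result and attributed to Ref.~\cite{dubrovin_hamiltonian_1983}, so the relevant comparison is with the classical Dubrovin--Novikov argument, and your proposal is precisely that argument --- read $\alpha$ as a contravariant (pseudo-)metric, read $\beta$ (with an index raised by $\alpha^{-1}$) as a connection, identify the antisymmetry condition $\partial_x\alpha_{nm}=\beta_{nm}+\beta_{mn}$ with metric compatibility, extract geometric constraints from the Jacobi identity, and finish with flat coordinates. Your reverse implication is complete as stated: the constant-coefficient bracket~\eqref{eqn:PBnu} satisfies Jacobi trivially, and the transformation rules~\eqref{eqn:change} transport the identity back. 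Two caveats on the forward direction. First, a logical one: you write that the Jacobi constraints reduce to vanishing curvature \emph{after using} the symmetry $\Gamma^k_{nm}=\Gamma^k_{mn}$, but torsion-freeness is not available as an input --- it is itself one of the conditions that must be extracted from the coefficient matching (it comes from the highest-derivative terms in the expanded Jacobi identity). If you assume it rather than derive it, the ``necessary and sufficient'' equivalence is not established; the correct structure is that Jacobi, given antisymmetry, is equivalent to the pair (zero torsion, zero curvature), which then identifies the connection as Levi-Civita and the metric as flat. Second, the coefficient matching that yields these two conditions is exactly the content of the cited theorem, and you explicitly leave it unexecuted; so what you have is a correct and well-organized strategy outline matching the source, not yet a self-contained proof. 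Since the paper itself delegates this same computation to Ref.~\cite{dubrovin_hamiltonian_1983}, your proposal is at the same level of completeness as the paper, modulo the torsion bookkeeping point above, which you should repair before calling the argument a proof.
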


\begin{lemme}
\label{lem:albe}
    The elements of the bracket~\eqref{eqn:PBh}, namely $\alpha_{nm}$ and $\beta_{nm}$ are given by
    \begin{eqnarray}
        && \alpha_{nm}({\bm\nu})=\frac{\partial u_n}{\partial\nu_k}g_{kl} \frac{\partial u_m}{\partial\nu_l},\\
        && \beta_{nm}({\bm\nu},\partial_x{\bm\nu})=\partial_x \frac{\partial u_n}{\partial\nu_k}g_{kl} \frac{\partial u_m}{\partial\nu_l},
    \end{eqnarray}
\end{lemme}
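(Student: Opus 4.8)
The plan is to obtain the two formulas as an immediate corollary of Theorem~\ref{thm:flat} together with the transformation rule~\eqref{eqn:change} for the components of a hydrodynamic bracket under a point change of field variables. By Theorem~\ref{thm:flat}, the non-degeneracy of $\alpha$ and the Jacobi identity guarantee a locally invertible map ${\bf u}\mapsto{\bm\nu}={\bm\nu}({\bf u})$ in which the bracket takes the flat form~\eqref{eqn:PBnu}, with constant metric $g$ and vanishing $\beta$-part. First I would invert this map and regard the original fields as functions ${\bf u}={\bf u}({\bm\nu})$ of the flat coordinates, so that the passage from ${\bm\nu}$ to ${\bf u}$ is itself a point transformation to which~\eqref{eqn:change} applies.

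The key step is then to read~\eqref{eqn:change} in the ${\bm\nu}\to{\bf u}$ direction: taking ${\bm\nu}$ in the role of the ``old'' variables ${\bf P}$ and ${\bf u}$ in the role of the ``new'' variables ${\bf Q}$, and inserting the flat data $\alpha\to g$ (constant) and $\beta\to 0$. Equation~\eqref{eqn:change_1} then gives immediately
$$
\alpha_{nm}({\bm\nu})=\frac{\partial u_n}{\partial\nu_k}\,g_{kl}\,\frac{\partial u_m}{\partial\nu_l},
$$
while the $\beta$-formula in~\eqref{eqn:change} produces two terms, of which the second is proportional to the old $\beta=0$ and drops out, leaving
$$
\beta_{nm}({\bm\nu},\partial_x{\bm\nu})=\partial_x\frac{\partial u_n}{\partial\nu_k}\,g_{kl}\,\frac{\partial u_m}{\partial\nu_l},
$$
as claimed. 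Alternatively, and to keep the argument self-contained, I would derive the same result directly from the definition: since ${\bf u}={\bf u}({\bm\nu})$ is a local (derivative-free) change of variables, the chain rule for functional derivatives gives $\overline{F}_n=F_k\,\partial u_k/\partial\nu_n$; substituting this into the flat bracket~\eqref{eqn:PBnu} and expanding $\partial_x(F_k\,\partial u_k/\partial\nu_n)$ by the Leibniz rule produces exactly one term carrying $\partial_x F_k$ and one term with $F_k$ undifferentiated, which upon comparison with~\eqref{eqn:PBh} yield the stated $\alpha$ and $\beta$.

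The only point requiring care---and the main (minor) obstacle---is the legitimacy of ``reading off'' $\alpha$ and $\beta$ by comparing integrands. This rests on the uniqueness of the decomposition~\eqref{eqn:PBh}: if $\int(\partial_x F_n\,\alpha_{nm}G_m+F_n\,\beta_{nm}G_m)\,{\rm d}x$ vanishes for all $F,G$, then, prescribing $F_n$, $\partial_x F_n$ and $G_m$ independently at a point, one forces $\alpha=\beta=0$, so the coefficients are uniquely determined and the comparison is valid. A secondary consistency check is that~\eqref{eqn:change} is a genuine covariance law, invariant under composition and inversion of point transformations, which is precisely what justifies applying it in the ${\bm\nu}\to{\bf u}$ direction; this is already implicit in the remark that the form~\eqref{eqn:PBh} is preserved under such transformations. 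Finally, as a sanity check I would verify that the resulting $\beta_{nm}=\frac{\partial^2 u_n}{\partial\nu_k\partial\nu_j}\,g_{kl}\,\frac{\partial u_m}{\partial\nu_l}\,\partial_x\nu_j$ is automatically linear in $\partial_x{\bm\nu}$ and, using the symmetry of $g$, that $\partial_x\alpha_{nm}=\beta_{nm}+\beta_{mn}$ holds, confirming antisymmetry of the reconstructed bracket.
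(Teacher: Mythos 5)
Your proposal is correct and follows essentially the same route as the paper: the paper's proof is precisely the one-line observation that the formulas follow from the transformation identities~\eqref{eqn:change} applied in the ${\bm\nu}\to{\bf u}$ direction, where $\alpha=g$ is constant and $\beta=0$. Your additional material (the chain-rule rederivation, the uniqueness of the $(\alpha,\beta)$ decomposition, and the check that $\partial_x\alpha_{nm}=\beta_{nm}+\beta_{mn}$) supplies rigor the paper leaves implicit, but does not change the argument.
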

\begin{proof}
    This is a consequence of the identities~\eqref{eqn:change} since $\beta=0$ in the variables ${\bm\nu}$.
\end{proof}

A Casimir invariant is an observable that lies in the center of $\mathcal{A}$, i.e, $ \{C,F\} = 0$ for all $F\in \mathcal{A}$. As a consequence of Theorem~\ref{thm:flat}, a non-degenerate Poisson bracket of hydrodynamic type~\eqref{eqn:PBh} has as many independent Casimir invariants as the number of field variables, and they are of the form
$$
    C_n[{\bf u}]=\int \nu_n({\bf u}) \ {\rm d}x,
$$
for $n=1,\ldots,N$.

\begin{coro}
    Non-degenerate Poisson brackets of hydrodynamic type are classified by the signature $(l,q)$ of the metric $g$, where $q$ is the number of negative eigenvalues and $l$ is the number of positive ones. This signature is invariant by a change of coordinates of the type ${\bm \nu} \mapsto {\bm \eta}={\bm \eta}({\bm \nu})$.
\end{coro}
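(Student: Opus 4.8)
The plan is to reduce the statement to Sylvester's law of inertia for real symmetric bilinear forms. For a given non-degenerate hydrodynamic bracket, I would first invoke Theorem~\ref{thm:flat} to pass to flat coordinates ${\bm\nu}$ in which $\beta=0$ and the metric is a constant symmetric matrix $g$. The pair $(l,q)$ is then \emph{defined} as the signature of $g$, and non-degeneracy forces $l+q=N$, so there are no vanishing eigenvalues. The content of the corollary is that this pair is intrinsic to the bracket: it does not depend on the particular flat coordinates selected, and it is unchanged under arbitrary point transformations ${\bm\nu}\mapsto{\bm\eta}={\bm\eta}({\bm\nu})$.

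The key observation is that the metric transforms as a $(0,2)$-tensor. Applying the transformation rule~\eqref{eqn:change_1} with ${\bm\nu}$ in the role of ${\bf P}$ and ${\bm\eta}$ in the role of ${\bf Q}$, and using that the metric in the ${\bm\nu}$-coordinates is the constant $g$, I obtain
$$
\overline{\alpha}_{kl}({\bm\eta}) = \frac{\partial \eta_k}{\partial \nu_n}\, g_{nm}\, \frac{\partial \eta_l}{\partial \nu_m} = \left(J\,g\,J^{t}\right)_{kl}, \qquad J_{kn}=\frac{\partial \eta_k}{\partial \nu_n},
$$
which is exactly the congruence of $g$ by the Jacobian $J$; the same congruence structure already appears in Lemma~\ref{lem:albe} when expressing $\alpha$ in arbitrary field variables. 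Since ${\bm\eta}={\bm\eta}({\bm\nu})$ is locally invertible, $J$ is invertible at every point, so $\overline{\alpha}$ stays non-degenerate, and Sylvester's law of inertia guarantees that $J\,g\,J^{t}$ carries the same signature as $g$ at each point. Hence $\overline{\alpha}({\bm\eta})$ has signature $(l,q)$ everywhere, which is the asserted invariance.

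To complete the claim that the signature is a \emph{complete} classifying invariant, I would argue both implications. If two non-degenerate hydrodynamic brackets share the same signature, Theorem~\ref{thm:flat} brings each to a constant-metric form $g$ and $g'$; being real, symmetric, non-degenerate, and of equal signature, $g$ and $g'$ are both congruent to the common normal form $\mathrm{diag}(I_l,-I_q)$ through constant invertible matrices, and such constant congruences are realized by \emph{linear} changes of field variables, so the two brackets are equivalent. The converse—that equivalent brackets have equal signature—is precisely the invariance established above. The main (and essentially only) subtlety to handle carefully is that after a general point transformation $\overline{\alpha}({\bm\eta})$ need not remain constant, so the signature is a priori a pointwise quantity; one must combine the pointwise application of Sylvester's law with the invertibility of $J$ and the connectedness of the coordinate domain to conclude that the integer-valued, hence locally constant, signature is in fact globally constant. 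Everything else is a direct reading of the tensorial transformation law as a congruence together with an appeal to a standard linear-algebra theorem.
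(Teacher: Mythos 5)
Your proof is correct and follows essentially the same route as the paper's: the transformation rule~\eqref{eqn:change_1} exhibits the metric change as a congruence by the Jacobian, and Sylvester's law of inertia then yields the invariance of the signature. The extra material you include---the completeness direction via reduction to the normal form $\mathrm{diag}({\mathbb I}_l,-{\mathbb I}_q)$ (which the paper relegates to a remark after the corollary), and the pointwise-versus-global discussion (where connectedness is actually superfluous, since at every point the metric is congruent to the \emph{same} constant $g$)---elaborates on but does not alter the paper's argument.
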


\begin{proof}
     By Sylvester's law of inertia, two congruent metrics have the same signature, and since under a coordinate change ${\bm \nu} \mapsto {\bm \eta}={\bm \eta}({\bm \nu})$, we have $\bar{g} = J^tgJ$ (see Eq.~\eqref{eqn:change_1}, where $J$ is the Jacobian of the change of variables, $\bar{g}$ and $g$ have the same signature.
\end{proof}

By extension, we define the signature of a Poisson bracket of hydrodynamic type~\eqref{eqn:PBh} as the signature of the associated metric $g$.

{\em Remark:} Using an appropriate rescaling and reshuffling of the field variables, a Poisson bracket of hydrodynamic type with signature $(l,q)$, can be cast into the form~\eqref{eqn:PBnu} with
$$
    g = \begin{pmatrix} {\mathbb I}_l&0\\ \:0&-{\mathbb I}_q\end{pmatrix},
$$
where ${\mathbb I}_l$ is the $l\times l$ identity matrix. 

\begin{defi}[Kinetic hydrodynamic bracket]
    A hydrodynamic bracket of the form~\eqref{eqn:PBh} comes from a kinetic model when there exists a local change of variables ${\bf u}\mapsto {\bf P}({\bf u})$ that maps it to a truncated Kupershmidt-Manin bracket~\eqref{eqn:PB_P} (i.e., truncated in the sense that the summations are restricted to $n,m=0,\ldots,N-1$).
\end{defi}

\begin{coro}
    \label{coro:PBksign}
    A non-degenerate hydrodynamic Poisson bracket~\eqref{eqn:PBh} coming from a kinetic model has a signature $(l,q)$ with $l,q\geq 1$.
\end{coro}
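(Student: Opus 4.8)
The plan is to reduce the computation of the signature to a pointwise linear-algebra statement about the matrix $\alpha$, and then to exhibit an explicit isotropic direction that forces $\alpha$ to be indefinite. First I would observe that the signature $(l,q)$ attached to the bracket is by definition the signature of the constant metric $g$ of its flat form~\eqref{eqn:PBnu}, and that by Lemma~\ref{lem:albe} the matrix $\alpha$, evaluated in \emph{any} admissible coordinates at a point where the relevant Jacobian is invertible, is congruent to $g$. Indeed, writing $J_{nk}=\partial u_n/\partial\nu_k$, Lemma~\ref{lem:albe} gives $\alpha = J\,g\,J^{t}$; since the bracket is non-degenerate, $J$ is invertible, so by Sylvester's law of inertia (already invoked in the proof of the classification corollary) $\alpha$ and $g$ share the same signature $(l,q)$ at every such point.

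Next, because the bracket comes from a kinetic model, the Kinetic hydrodynamic bracket definition furnishes moment coordinates ${\bf P}=(P_0,\ldots,P_{N-1})$ in which the bracket is the truncated Kupershmidt--Manin bracket~\eqref{eqn:PB_P} with coefficients~\eqref{eqn:albeP}. I would then simply read off the $(0,0)$ entry of $\alpha$ in these coordinates: from $\alpha_{nm}=(n+m)P_{n+m-1}$ one gets $\alpha_{00}=(0+0)P_{-1}=0$, the vanishing prefactor $n+m$ killing the ill-defined moment $P_{-1}$. Hence the first basis vector $e_0=(1,0,\ldots,0)^{t}$ is isotropic for the quadratic form associated with $\alpha$, i.e. $e_0^{t}\alpha\,e_0 = \alpha_{00} = 0$ with $e_0\neq 0$.

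Finally I would conclude via the standard dichotomy for non-degenerate symmetric forms. A non-degenerate symmetric matrix admitting a nonzero isotropic vector can be neither positive definite (for which $v^{t}\alpha v>0$ for all $v\neq0$) nor negative definite; since non-degeneracy excludes a zero eigenvalue, $\alpha$ must be indefinite, with at least one positive and at least one negative eigenvalue. By the congruence above this transfers to $g$, giving $l\geq 1$ and $q\geq 1$, as claimed.

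I do not expect a serious obstacle here; the only two points requiring care are (i) justifying that the pointwise matrix $\alpha$ in the moment coordinates genuinely carries the signature of $g$, which is exactly the congruence $\alpha=J\,g\,J^{t}$ with $J$ invertible, and (ii) checking that the formally undefined moment $P_{-1}$ does not spoil the argument, which it does not because its coefficient $n+m$ vanishes at $n=m=0$. One might also remark that this single entry provides a uniform, coordinate-free reason why no kinetic bracket can be definite, independently of $N$.
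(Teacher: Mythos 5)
Your proof is correct, and it follows the same overall strategy as the paper---pass to the moment coordinates supplied by the kinetic hypothesis, exploit the structure of the Kupershmidt--Manin coefficients $\alpha_{nm}=(n+m)P_{n+m-1}$, and transfer the conclusion to $g$ by congruence---but the key linear-algebra step is different. The paper restricts to the subalgebra of functionals $F[P_0,P_1]$, producing the $2\times 2$ block
\begin{equation*}
    \begin{pmatrix} 0 & P_0 \\ P_0 & 2P_1 \end{pmatrix},
\end{equation*}
whose determinant $-P_0^2$ is negative, so this block (and hence the full quadratic form) takes both signs. You instead use only the single entry $\alpha_{00}=0$: the vector $e_0$ is isotropic, and a non-degenerate symmetric form admitting a nonzero isotropic vector must be indefinite. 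The trade-off is the following. Your argument leans on the non-degeneracy hypothesis for the dichotomy step (legitimate, since it is in the statement), but it works even at points where $P_0=0$. The paper's argument does not need non-degeneracy of the full matrix at that step---an indefinite restriction to a two-dimensional subspace already forces the full form to take both signs---but it silently requires $P_0\neq 0$ (a strictly negative determinant needs a nonvanishing density), which is physically innocuous but worth making explicit. Both routes then conclude by Sylvester's law of inertia, which you spell out via $\alpha = J\,g\,J^{t}$ with $J$ invertible and which the paper leaves implicit; your handling of the formally undefined $P_{-1}$ (killed by the vanishing coefficient $n+m$) is also a point the paper glosses over.
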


\begin{proof}
    By focusing on the subalgebra of observables $F[P_0,P_1]$ for the bracket~\eqref{eqn:PB_P}, the matrix $\alpha$ becomes
    $$
        \alpha = \begin{pmatrix}
                    0 & P_0 \\
                    P_0 & 2P_1
                \end{pmatrix},
    $$
    which has a negative determinant. Henceforth, it has a positive and a negative eigenvalue. 
\end{proof}
This corollary leads to the definition of partially decoupled brackets, which has the purpose of partially isolating $(P_0,P_1)$ from the rest.

\subsection{Partial decoupling of hydrodynamic Poisson brackets}

\begin{defi}[Partially decoupled bracket]
     A Poisson bracket of hydrodynamic type~\eqref{eqn:PBh} is said to be partially decoupled if it has the following form: 
     \begin{eqnarray}
         \{F,G\} &=& \int\left( \partial_xF_{u}G_{\rho} -F_{\rho} \partial_x G_{u} 
        + \partial_x w_k\left(F_u \frac{G_k}{\rho}-\frac{F_k}{\rho} G_u \right) \right. \nonumber\\
        && \qquad \left. + \partial_x \frac{F_k}{\rho} \alpha_{kl}({\bf w}) \frac{G_l}{\rho}  +  \frac{F_k}{\rho}  \beta_{kl}({\bf w}, \partial_x {\bf w}) \frac{G_l}{\rho}\right) \, {\rm d}x, \label{eqn:PBdec}
     \end{eqnarray}
    where the implicit sums over $k,l$ are from 1 to $N-2$ and $F_k = {\delta F}/{\delta w_k}$.
\end{defi}
Here the field variables are $(\rho, u, w_1,\ldots,w_{N-2})$. We notice that partially decoupled brackets are invariant under any change of field variables ${\bf w}\mapsto {\bf z}({\bf w})$. 

{\em Remark:} The set of functionals $F[\rho,u]$ and the set of functionals $F[\rho,{\bf w}]$ form subalgebras of the bracket~\eqref{eqn:PBdec}. This remark leads to the definition of the microscopic part of a partially decoupled hydrodynamic bracket with $N-2$ field variables:
\begin{equation}
    \label{eqn:PBmicro}
    \{F,G\}_{\rm m}=\int\left( \partial_x F_k \alpha_{kl}({\bf w}) G_l  +  F_k  \beta_{kl}({\bf w}, \partial_x {\bf w}) G_l\right) \, {\rm d}x,
\end{equation}
i.e., where the macroscopic degree of freedom represented by the fluid density and fluid velocity $(\rho, u)$ has been eliminated. 

\begin{proposition}
    The bracket~\eqref{eqn:PBdec} with $N$ field variables is a Poisson bracket if and only if the microscopic bracket~\eqref{eqn:PBmicro} with $N-2$ field variables is a Poisson bracket.
\end{proposition}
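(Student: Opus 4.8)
The plan is to reduce the statement to the Jacobi identity and then to exploit the block structure that \eqref{eqn:PBdec} acquires once it is written in the canonical hydrodynamic form \eqref{eqn:PBh}. Both \eqref{eqn:PBdec} and \eqref{eqn:PBmicro} are of hydrodynamic type, so they are automatically antisymmetric (their $\alpha$ and $\beta$ satisfy $\partial_x\alpha=\beta+\beta^t$) and obey the Leibniz rule; the only nontrivial point is the Jacobi identity. First I would bring \eqref{eqn:PBdec} to the form \eqref{eqn:PBh} by integrating the terms carrying a derivative on the second argument by parts and by expanding $\partial_x(F_k/\rho)=\partial_x F_k/\rho-F_k\partial_x\rho/\rho^2$. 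Reading off the coefficients in the ordered variables $(\rho,u,w_1,\dots,w_{N-2})$ gives a block-diagonal matrix $\tilde\alpha=\mathrm{diag}\!\left(\left(\begin{smallmatrix}0&1\\1&0\end{smallmatrix}\right),\,\rho^{-2}\alpha(\mathbf{w})\right)$, together with a sparse $\tilde\beta$ whose nonzero entries are the advection coupling $\tilde\beta_{uk}=-\tilde\beta_{ku}=\rho^{-1}\partial_x w_k$ and the internal block $\tilde\beta_{kl}=\rho^{-2}\beta_{kl}-\rho^{-3}\alpha_{kl}\partial_x\rho$. A quick check that $\partial_x\tilde\alpha=\tilde\beta+\tilde\beta^t$ confirms the rewriting.

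In the nondegenerate case I would then invoke Theorem~\ref{thm:flat}. Since $\det\tilde\alpha=-\rho^{-2(N-2)}\det\alpha$, the full metric $\tilde\alpha$ is nondegenerate exactly when $\alpha$ is, and Theorem~\ref{thm:flat} says that the full bracket is Poisson iff $\tilde\alpha$ is a flat metric (with $\tilde\beta$ as its Levi-Civita connection), while the microscopic bracket is Poisson iff $\alpha$ is flat. It therefore suffices to prove that the metric associated with $\tilde\alpha$ --- the cone-type metric $2\,{\rm d}\rho\,{\rm d}u+\rho^2 h_{kl}(\mathbf{w})\,{\rm d}w^k{\rm d}w^l$, where $h=\alpha^{-1}$ is the metric of the microscopic bracket --- is flat if and only if $h$ (equivalently $\alpha$) is flat.

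The core of the argument is the curvature computation. A direct evaluation of the Christoffel symbols of the cone metric gives only three families of nonzero symbols: the purely internal ones $\widehat\Gamma^{\,k}_{lm}=\gamma^k_{lm}(\mathbf{w})$ (those of $h$), the advection symbols $\widehat\Gamma^{\,k}_{\rho l}=\rho^{-1}\delta^k_l$, and $\widehat\Gamma^{\,u}_{kl}=-\rho\, h_{kl}$. Feeding these into the Riemann tensor, I expect every component carrying a $\rho$ or a $u$ index to cancel identically --- the $\rho^{-1}$ and $-\rho h_{kl}$ symbols are tuned so that their contributions annihilate after using the metric compatibility $\nabla h=0$ --- while the purely internal components reproduce exactly the Riemann tensor of $h$. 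Hence the cone metric is flat iff $h$ is flat, which closes the equivalence. One should also confirm, at the level of the connection coefficients, that the $\tilde\beta$ read off above is precisely the Levi-Civita connection of the cone metric, so that Theorem~\ref{thm:flat} applies to the bracket as given and not merely to its metric.

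The main obstacle is this mixed-index cancellation: one must verify that the advection block $\widehat\Gamma^{\,k}_{\rho l}$ and the coupling block $\widehat\Gamma^{\,u}_{kl}$ contribute with opposite signs in each relevant curvature component, which is where the specific $\rho$-weights in \eqref{eqn:PBdec} (the factors $1/\rho$ in the coupling and $1/\rho^2$ in the internal part) are essential. A secondary point is the degenerate case, in which Theorem~\ref{thm:flat} is unavailable; there I would instead compute the Jacobiator of \eqref{eqn:PBdec} directly, using that for brackets bilinear in the first functional derivatives the terms involving second functional derivatives of the test functionals cancel by antisymmetry, leaving algebraic-differential identities that organize into exactly the same blocks as the curvature components above --- the mixed ones holding identically and the internal ones coinciding with the Jacobiator of the microscopic bracket \eqref{eqn:PBmicro}.
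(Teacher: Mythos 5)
Your proposal is correct in its core and takes a genuinely different route from the paper. The paper treats the two directions asymmetrically: the forward implication is a one-line subalgebra argument (functionals $F[\rho,\mathbf{w}]$ close under the bracket~\eqref{eqn:PBdec} and induce~\eqref{eqn:PBmicro}), while the converse flattens the microscopic bracket via Theorem~\ref{thm:flat}, notes that the resulting constant-coefficient bracket in $(\rho,\psi,\bm{\nu})$ trivially satisfies the Jacobi identity, and exhibits the explicit change of variables $u=\bm{\nu}\cdot g^{-1}\bm{\nu}/(2\rho)+\psi$, $\eta_k=\nu_k/\rho$ carrying it to~\eqref{eqn:PBdec}. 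You instead treat both directions symmetrically through the Dubrovin--Novikov characterization (Poisson $\Leftrightarrow$ the metric is flat and $\beta$ encodes its Levi-Civita connection) and reduce everything to the curvature of the covariant metric $2\,\mathrm{d}\rho\,\mathrm{d}u+\rho^{2}h_{kl}\,\mathrm{d}w^{k}\mathrm{d}w^{l}$ with $h=\alpha^{-1}$. Your block computation of $\tilde{\alpha}$, $\tilde{\beta}$ is right, and so is the key curvature claim: the only nonvanishing Riemann components are the purely internal ones, equal to those of $h$. It is worth stressing that this hinges on the null pairing $2\,\mathrm{d}\rho\,\mathrm{d}u$, which forces $\Gamma^{\rho}_{kl}=0$ so that the internal curvature acquires no $\delta^{k}_{m}h_{nl}$ corrections; for a genuine cone $\mathrm{d}r^{2}+r^{2}h$ flatness would instead require $h$ to have constant curvature $+1$, so your "cone-type" terminology hides the essential point that saves the argument. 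What your route buys is finer information: the Jacobi obstruction is localized, component by component of the Riemann tensor, in the $\mathbf{w}$-sector. What the paper's route buys is economy and reusability: its forward direction needs no nondegeneracy at all, and its backward direction produces the explicit flat coordinates $(\rho,\psi,\bm{\nu})$ that are recycled in Lemma~\ref{lem:cas} to write the Casimir invariants.

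Two caveats. First, the Levi-Civita verification you defer to a side remark is not a sanity check but half of the criterion: the Dubrovin--Novikov conditions are that the connection determined by $\tilde{\beta}$ be torsion-free (compatibility already follows from antisymmetry) and that the metric be flat. In the backward direction you must therefore actually check that the mixed blocks you read off, namely $\tilde{\beta}_{uk}=-\tilde{\beta}_{ku}=\rho^{-1}\partial_{x}w_{k}$ and the $\partial_x\rho$-part $-\rho^{-3}\alpha_{kl}$ of $\tilde{\beta}_{kl}$, reproduce the cone Christoffel symbols $\Gamma^{k}_{\rho l}=\rho^{-1}\delta^{k}_{l}$ and $\Gamma^{u}_{kl}=-\rho h_{kl}$, given only that the internal block does so for $h$; this is mechanical and does work out, but without it the equivalence is not established. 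Second, your degenerate-case sketch (direct Jacobiator) is the least developed part of the argument; note, however, that the paper's own proof also rests on Theorem~\ref{thm:flat} and hence carries the same implicit nondegeneracy assumption, so this is not a point on which you fall short of the paper.
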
 
\begin{proof}
    Suppose first that the bracket~\eqref{eqn:PBdec} is Poisson. Since the set of functionals $F[\rho,{\bf w}]$ is a subalgebra, it follows that the induced bracket~\eqref{eqn:PBmicro} is a Poisson bracket.\\
    Conversely, suppose that bracket~\eqref{eqn:PBmicro} is a Poisson bracket. By Theorem~\ref{thm:flat} there exists a local change of variables ${\bf w}\mapsto {\bm \nu}({\bf w})$ such that $\{F,G\}_{\rm m} = \int g_{kl} \partial_x F_k\ G_l \, {\rm d}x$, where $F_k=\delta F / \delta \nu_k$.\\
    In the variables $(\rho,\psi,{\bm \nu})$, the bracket $\{F,G\}= \int (\partial_x F_{\psi} G_{\rho} - F_{\rho} \partial_x G_{\psi} + \partial_x F_k g_{kl} \ G_l) \ {\rm d}x$ is of Poisson type since the coefficients do not depend on the field variables. Under the following changes of variables $(\rho,\psi,{\bm \nu}) \mapsto (\rho,u,{\bm \eta})$ where $u={\bm \nu}\cdot g^{-1} {\bm \nu}/(2\rho)+ \psi$ and $\eta_k=\nu_k / \rho$ and then ${\bm \nu}\mapsto {\bf w}({\bm \nu})$, this bracket is mapped to bracket~\eqref{eqn:PBh}.
\end{proof}

\begin{proposition}
    Any hydrodynamic bracket~\eqref{eqn:PBh} with signature $(l,q)$ where $l,q\geq1$ can be cast into a partially decoupled hydrodynamic bracket. In particular, any hydrodynamic bracket coming from a kinetic equation, such as Vlasov, can be cast into a partially decoupled hydrodynamic bracket.
\end{proposition}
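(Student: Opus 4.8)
The plan is to reduce the statement to the constructive change of variables already used in the proof of the preceding proposition, after first bringing the bracket to a ``split'' canonical form in which one positive and one negative direction have been isolated. Since the bracket is assumed to have a well-defined signature $(l,q)$, it is a non-degenerate Poisson bracket of hydrodynamic type, so Theorem~\ref{thm:flat} applies: there is a local change of field variables ${\bf u}\mapsto{\bm\xi}$ bringing it to the constant-metric form~\eqref{eqn:PBnu}, and by the remark following the corollary we may take $g=\mathrm{diag}(\mathbb{I}_l,-\mathbb{I}_q)$.

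Next I would use the hypothesis $l,q\geq 1$ to peel off a $(1,1)$ block. Choosing one field attached to a $+1$ eigenvalue and one attached to a $-1$ eigenvalue, the corresponding $2\times 2$ block $\mathrm{diag}(1,-1)$ is congruent to the off-diagonal matrix $\left(\begin{smallmatrix}0&1\\1&0\end{smallmatrix}\right)$, so a linear change acting on those two variables alone, which is an admissible change of field variables preserving the hydrodynamic form, turns this block into the macroscopic coupling while leaving the other variables untouched. Renaming the two variables $(\rho,\psi)$ and the remaining $N-2$ variables ${\bm\nu}$, and noting that $g$ restricted to the latter has signature $(l-1,q-1)$ and is therefore a non-degenerate constant metric on the $(N-2)$-dimensional microscopic space, the bracket takes precisely the split canonical form
$$
\{F,G\}=\int\left(\partial_x F_\psi\,G_\rho - F_\rho\,\partial_x G_\psi + \partial_x F_k\,g_{kl}\,G_l\right){\rm d}x
$$
that appears as the intermediate object in the proof of the preceding proposition.

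From here I would invoke that proof directly: the change of variables $(\rho,\psi,{\bm\nu})\mapsto(\rho,u,{\bm\eta})$ with $u={\bm\nu}\cdot g^{-1}{\bm\nu}/(2\rho)+\psi$ and $\eta_k=\nu_k/\rho$, valid locally where $\rho\neq 0$ and where $g^{-1}$ exists by non-degeneracy, maps this split canonical bracket to the partially decoupled form~\eqref{eqn:PBdec}, with microscopic coefficients $\alpha_{kl}=g_{kl}$ constant and $\beta_{kl}=0$. Since each step is a change of field variables and the Poisson property is invariant under such changes, this establishes the first assertion. For the ``in particular'' claim, Corollary~\ref{coro:PBksign} guarantees that any non-degenerate hydrodynamic bracket arising from a kinetic model, and in particular the Vlasov bracket, has signature $(l,q)$ with $l,q\geq 1$, so it falls under the case just treated.

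The one substantive point, and the place where the hypothesis $l,q\geq1$ is genuinely used, is the second step: one must verify that a single indefinite $(1,1)$ block can be split off cleanly by a block-diagonal linear transformation, so that the remaining constant metric is undisturbed and the resulting off-diagonal block matches the macroscopic structure of~\eqref{eqn:PBdec} with the correct normalization. This is elementary linear algebra together with Sylvester's law of inertia, but it is exactly the obstruction when the metric is definite ($q=0$ or $l=0$), which is consistent with such brackets not arising from kinetic models.
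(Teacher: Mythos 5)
Your proof is correct and is essentially the paper's own argument: the paper likewise flattens the metric via Theorem~\ref{thm:flat} (taking $g$ diagonal with entries $1,-1,\varepsilon_3,\ldots,\varepsilon_N$) and then applies the single change of variables $\rho=(\nu_1+\nu_2)/\sqrt{2}$, $u=(\nu_1-\nu_2)/\sqrt{2}+\bigl(\sum_{k\geq3}\varepsilon_k \nu_k^2\bigr)/(2\rho)$, $\bar{\nu}_k=\nu_k/\rho$, which is exactly the composite of your two steps, namely the linear hyperbolic rotation isolating the $(1,1)$ block followed by the shift borrowed from the preceding proposition's proof. The only differences are presentational: you factor the map through the intermediate split canonical form and make explicit the locality caveat $\rho\neq 0$, which the paper leaves implicit.
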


\begin{proof}
    We consider a hydrodynamic Poisson bracket~\eqref{eqn:PBh} with signature $(l,q)$ with $l,q\geq1$. Using Corollary~\ref{coro:PBksign}, it is rewritten in the ${\bm \nu}$ variables as 
    $$
        \{F,G\} = \int \left( \partial_x F_1 G_1 - \partial_x F_2 G_2 + \varepsilon_k\partial_x F_k G_k \right) \ {\rm d}x,
    $$
    where $\epsilon_k=\pm 1$ for $k=3,\ldots,N$.
    Under the following change of variables, $\rho = (\nu_1+\nu_2)/\sqrt{2}$, $u = (\nu_1-\nu_2)/\sqrt{2} + (\sum_{k\geq3}\varepsilon_k \nu_k^2)/ (2\rho)$ and $\bar{\nu}_k = {\nu_k}/{\rho}$, it takes the partially decoupled form~\eqref{eqn:PBdec}.
\end{proof}
It follows that a non-degenerate hydrodynamic bracket of kinetic origin can be cast in the following form
\begin{equation}
     \{F,G\} = \int\left( \partial_xF_{u}G_{\rho} -F_{\rho} \partial_x G_{u} 
    + \partial_x \nu_k\left(F_u \frac{G_k}{\rho}-\frac{F_k}{\rho} G_u \right) + \partial_x \frac{F_k}{\rho} g_{kl} \frac{G_l}{\rho}\right) \, {\rm d}x, \label{eqn:PBdec_nu}
 \end{equation}
 where $F_k={\delta F}/{\delta \nu_k}$ and $g$ is a constant symmetric matrix.

\begin{lemme}
\label{lem:cas}
    A non-degenerate partially decoupled Poisson bracket of hydrodynamic type~\eqref{eqn:PBh} has $N-2$ Casimir invariants of the form 
    $$
        C[\rho, {\bm \nu}] = \int \rho {\bm \nu} \ {\rm d}x,
    $$
    in addition to the two Casimir invariants
    \begin{align}
        & C[\rho] = \int \rho \ {\rm d}x,\\
        & C[\rho,u,{\bm \nu}] = \int \left( u - \frac{\rho}{2}{\bm \nu}\cdot g^{-1}{\bm \nu}\right) \ {\rm d}x, \label{eqn:CasPsi}
    \end{align}
    where ${\bm \nu} = {\bm \nu}({\bf w})$ are the coordinates which flatten the metric of bracket~\eqref{eqn:PBmicro} into the metric $g$ as in Eq.~\eqref{eqn:PBdec_nu}.
\end{lemme}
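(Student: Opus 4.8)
The plan is to verify directly that each proposed functional Poisson-commutes with an arbitrary observable, working in the explicit decoupled form~\eqref{eqn:PBdec_nu} in which the microscopic metric is already the constant matrix $g$. By antisymmetry it suffices, for each candidate $C$, to show $\{F,C\}=0$ for all $F$; I would substitute $C$ into the $G$-slot of~\eqref{eqn:PBdec_nu} after recording its three functional derivatives $C_\rho=\delta C/\delta\rho$, $C_u=\delta C/\delta u$ and $C_k=\delta C/\delta\nu_k$. The whole computation then reduces to checking that the integrand collapses to total $x$-derivatives, which integrate to zero because the fields and their derivatives vanish at infinity.

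For $C[\rho]=\int\rho\,{\rm d}x$ one has $(C_\rho,C_u,C_k)=(1,0,0)$, so every term of~\eqref{eqn:PBdec_nu} except the first drops out and $\{F,C\}=\int\partial_x F_u\,{\rm d}x=0$. For each of the $N-2$ functionals $C^{(j)}=\int\rho\nu_j\,{\rm d}x$ the derivatives are $C^{(j)}_\rho=\nu_j$, $C^{(j)}_u=0$, $C^{(j)}_k=\rho\delta_{jk}$; substituting and using that $g$ is constant, the integrand becomes $\partial_x(F_u\nu_j)+g_{kj}\,\partial_x(F_k/\rho)$, a sum of total derivatives, hence $\{F,C^{(j)}\}=0$. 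The only genuinely computational case is~\eqref{eqn:CasPsi}: writing $s=\tfrac12{\bm\nu}\cdot g^{-1}{\bm\nu}$, one gets $C_u=1$, $C_\rho=-s$ and, using the symmetry $g=g^t$ (so $g^{-1}$ is symmetric), $C_m=-\rho\,(g^{-1}{\bm\nu})_m$; the cross term and the $\beta$-type term of~\eqref{eqn:PBdec_nu} then combine through the identity $g_{kl}(g^{-1}{\bm\nu})_l=\nu_k$ and, together with $\partial_x s=(g^{-1}{\bm\nu})_k\,\partial_x\nu_k$, the integrand reduces to $-\partial_x(F_u s)-\partial_x(\nu_k F_k/\rho)$, again a total derivative.

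The step that needs the most care, and the one I regard as the main obstacle, is this last case: one must differentiate the quadratic density correctly (the factor $\rho$ and the two occurrences of ${\bm\nu}$ each contribute), keep track of the symmetry of $g^{-1}$, and recognize the telescoping $g\,g^{-1}=\mathbb{I}$ that lets the $\partial_x(F_k/\rho)\,g_{kl}(C_l/\rho)$ term cancel the $\partial_x\nu_k\,(F_k/\rho)C_u$ term up to a total derivative. Nothing else is delicate.

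Finally I would address completeness. The three families provide $N-2+2=N$ functionals whose densities $\rho$, $\rho\nu_k$ and $u-\tfrac\rho2{\bm\nu}\cdot g^{-1}{\bm\nu}$ are functionally independent on the field-value space $(\rho,u,{\bm\nu})$: their gradients form a triangular, nonsingular system for $\rho\neq0$, using the non-degeneracy of $g$. Since, by the consequence of Theorem~\ref{thm:flat}, a non-degenerate hydrodynamic bracket admits exactly $N$ independent Casimir invariants, these constitute a complete set. As a structural check on where the densities come from, I would note that they are precisely the pullbacks of the flat-coordinate densities $\rho$, $\rho\nu_k$ and $\psi=u-\tfrac\rho2{\bm\nu}\cdot g^{-1}{\bm\nu}$ under the point transformation used in the preceding proposition, for which the Casimir densities of the flat bracket $\int\partial_x\overline F_a\,\tilde g_{ab}\overline G_b\,{\rm d}x$ are simply the coordinates themselves.
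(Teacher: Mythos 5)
Your proof is correct: the functional derivatives of each candidate are computed correctly, the symmetry of $g^{-1}$ and the identity $g_{kl}(g^{-1}{\bm\nu})_l=\nu_k$ are invoked exactly where needed, and each integrand does collapse to a total $x$-derivative; moreover, working directly in the flattened form~\eqref{eqn:PBdec_nu} is legitimate, since the partially decoupled structure is invariant under ${\bf w}\mapsto{\bf z}({\bf w})$ and the lemma itself is phrased in terms of the flattening coordinates. The paper reaches the same conclusion with a slightly different organization. For the $N-2$ invariants it uses a lifting principle---any Casimir $\int f({\bf w})\,{\rm d}x$ of the microscopic bracket~\eqref{eqn:PBmicro} yields the Casimir $\int \rho f({\bf w})\,{\rm d}x$ of the full bracket~\eqref{eqn:PBdec}---which is coordinate-free and applies to any microscopic Casimir, even when the flattening coordinates are not known explicitly. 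For the invariant~\eqref{eqn:CasPsi} it verifies only $\{C,\rho\}$, $\{C,u\}$ and $\{C,{\bm\nu}\}$ and then concludes $\{C,F\}=0$ for all $F$ from the Leibniz rule, which avoids carrying an arbitrary functional through the computation. Your uniform substitution buys explicitness (every case is visibly a total derivative, handled by one and the same mechanism) and, as a bonus, the independence and completeness count of the $N$ Casimir invariants, which the paper leaves implicit in the corollary following Theorem~\ref{thm:flat}; the paper's version buys generality of the lifting step and brevity.
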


\begin{proof}
    A Casimir invariant $C = \int f({\bf w})\, {\rm d}x$ of the bracket~\eqref{eqn:PBmicro} yields a Casimir invariant $C = \int \rho f({\bf w})\,{\rm d}x$ of the bracket~\eqref{eqn:PBdec}.
    The conservation of the Casimir invariant $C[\rho,u,{\bm \nu}]$ can be verified by a direct and straightforward calculation of $\{C,\rho\}$, $\{C,u\}$ and $\{C,{\bm \nu}\}$. It then follows from the Leibniz rule that $\{C,F\}=0$ for all $F$.
\end{proof}

\subsection{Centered moments}

To write a hydrodynamic bracket coming from a kinetic model into a partially decoupled hydrodynamic bracket, a strategy used in Refs.~\cite{perin_higher-order_2014-1,perin_hamiltonian_2015-2,perin_hamiltonian_2015-1,chandre_four-field_2022} is to move to the moments centered around the velocity, i.e.,
\begin{equation}
\label{eqn:S}
    S_n(t,x) = \frac{1}{\rho^{n+1}}\int (v-u)^n f(t,x,v)\ {\rm d}v,
\end{equation}
where the fluid velocity is given by
$$
    u = \frac{1}{\rho}\int v f(t,x,v)\ {\rm d}v.
$$
In this way, we notice that $S_0=1$ and $S_1=0$. The centered moments ${\bf S}$ are obtained from the moments ${\bf P}$ as
$$
    S_n = \frac{1}{\rho^{n+1}}\sum_{k=0}^n \binom{n}{k} (-u)^{n-k} P_k ,
$$
and inverted as
$$
    P_n = \sum_{k=0}^n \binom{n}{k} \rho^{k+1} u^{n-k} S_k .
$$
In particular, we notice that $S_n$ depends on $(P_0,\ldots,P_n)$, and vice versa, $P_n$ depends on $(\rho, u, S_2,\ldots,S_n)$. 
Using the change of variables $(P_0,P_1,\ldots,P_{N-1})\mapsto (\rho, u, S_2, \ldots, S_{N-1})$, the bracket becomes a partially decoupled hydrodynamic bracket of the form~\eqref{eqn:PBdec} with 
\begin{eqnarray*}
    && \alpha_{nm}({\bf S})= (n+m) S_{n+m-1} -m(n+1)S_n S_{m-1}-n(m+1)S_m S_{n-1},\\
    && \beta_{nm}({\bf S},\partial_x {\bf S}) = n \partial_x S_{n+m-1} -n S_{n-1} \partial_x S_{m} -n(m+1) S_{m} \partial_x S_{n-1}-nm S_{m-1} \partial_x S_{n},
\end{eqnarray*}
where the closure is simplified to
$$
    S_{n\geq N}=S_{n\geq N}(S_2,\ldots, S_{N-1}).
$$
In particular, we notice that these closures do not depend on $\rho$ and $u$. 

Following Lemma~\ref{lem:cas}, another interesting set of centered variables is obtained by centering the moments not with respect to $u$ but with respect to $\psi(x) \equiv u-\rho \mu_1$ such that $C=\int \psi(x) \ {\rm d}x$ given by Eq.~\eqref{eqn:CasPsi} is a Casimir invariant of the bracket. Imposing that $C$ is a Casimir invariant defines the variable $\mu_1$. For the bracket expressed in the ${\bm\nu}$-variables, it takes the form
\begin{equation}
    \label{eqn:mu1}
    \mu_1(x)=\frac{1}{2}{\bm\nu} \cdot g^{-1} {\bm\nu},
\end{equation}
which is a homogeneous quadratic polynomial in the $N-2$ variables $\nu_k$. 
We also define $\mu_n$ for $n\geq 2$ as 
\begin{equation}
\label{eqn:mu}
    \mu_n(t,x) = \frac{1}{\rho^{n+1}}\int (v-\psi)^n f(t,x,v) \ {\rm d}v.
\end{equation}
These variables are linked to the ones used in Ref.~\cite{burby_variable-moment_2023}.  
The ${\bm \nu}$-variables are linked to the ${\bf S}$ variables in the following way:
$$
    S_n=\sum_{k=0}^n \binom{n}{k}\left(-\mu_1\right)^{n-k}\mu_k.
$$
Next we perform the change of variables $(\rho,u,S_2,\ldots,S_{N-1})\mapsto (\rho, u, \mu_1,\ldots,\mu_{N-2})$. In Appendix~\ref{app:PBmu}, we show that the bracket expressed in these variables remains of the type~\eqref{eqn:PBdec} with
\begin{subequations}
\label{eqn:albemu}
    \begin{eqnarray}
    && \alpha_{nm}({\bm\mu}) = (n+m)\mu_{n+m-1}-m\mu_{m-1}\gamma_n-n\mu_{n-1}\gamma_m,\\
    && \beta_{nm}({\bm\mu},\partial_x {\bm\mu}) = n\partial_x\mu_{n+m-1}-n\partial_x\mu_{n-1}\gamma_m-m\mu_{m-1}\partial_x\gamma_n,
\end{eqnarray}
\end{subequations}
where the explicit expression of $\gamma_n$ depends on the specific closure. In the variables $\nu_k$, it becomes 
\begin{equation}
\label{eqn:gamma}
    \gamma_n = (n+1)\mu_n -\nu_k\frac{\partial \mu_n}{\partial \nu_k}.
\end{equation}
If the parametric expression of $\mu_n$ is a homogeneous polynomial of degree $n+1$, we have that $\gamma_n=0$. As a consequence, the equations for the elements $\alpha$ and $\beta$ in $\bm\mu$ variables are identical as the equations in the $\bf P$ variables (see Eq.~\eqref{eqn:albeP}). This is the case for some of the cases examined below, such as the multi-delta closure and Burby's closure~\cite{burby_variable-moment_2023}. However, there exist cases where $\gamma_n\not= 0$, such as the waterbags. 

In what follows, we use the notation for the field variables and the closure functions as follows:
\begin{itemize}
    \item $P_n$: moments of $f$ in velocity given by Eq.~\eqref{eqn:moments},
    \item $S_n$: centered moments of $f$ around the fluid velocity $u$, given by Eq.~\eqref{eqn:S},
    \item $\mu_n$: centered moments of $f$ around $\psi=u-\rho \mu_1$, given by Eq.~\eqref{eqn:mu},
    \item Greek letters such as $\nu_k$, $\eta_k$ or $\xi_k$ for the field variables which flatten $\alpha$ into a constant metric $g$ (and consequently no $\beta$-part).
\end{itemize}
Also, in the rest of the article, $N$ always designates the number of field variables of the main fluid model (i.e., including $\rho$ and fluid velocity).

\begin{proposition}
     The closure, i.e., $\mu_n=\mu_n(\nu_1,\ldots,\nu_{N-2})$ for $n\geq 1$, is generated by a single moment $\mu_2$. 
\end{proposition}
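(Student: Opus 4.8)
The plan is to play the two available expressions for the microscopic metric against each other. On the one hand, Lemma~\ref{lem:albe} applied to the microscopic bracket~\eqref{eqn:PBmicro}, whose field variables in this description are $\mu_1,\ldots,\mu_{N-2}$ and whose flattening coordinates are the $\nu_k$, gives
$$
\alpha_{nm}(\bm\nu)=\frac{\partial\mu_n}{\partial\nu_k}g_{kl}\frac{\partial\mu_m}{\partial\nu_l}.
$$
On the other hand, \eqref{eqn:albemu} furnishes the same coefficients as $\alpha_{nm}(\bm\mu)=(n+m)\mu_{n+m-1}-m\mu_{m-1}\gamma_n-n\mu_{n-1}\gamma_m$. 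Equating the two would yield the master identity
$$
\frac{\partial\mu_n}{\partial\nu_k}g_{kl}\frac{\partial\mu_m}{\partial\nu_l}=(n+m)\mu_{n+m-1}-m\mu_{m-1}\gamma_n-n\mu_{n-1}\gamma_m.
$$
Since the derivation of~\eqref{eqn:albemu} rests only on the moment definitions of the full hierarchy, I would read this as an identity among the functions $\mu_n(\bm\nu)$ valid for all $n,m\ge 1$.

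Next I would record the data that is fixed once and for all: $\mu_0=1$; the quantity $\mu_1=\tfrac12\bm\nu\cdot g^{-1}\bm\nu$ is the structurally prescribed quadratic of~\eqref{eqn:mu1}; and, applying Euler's relation to~\eqref{eqn:gamma}, the homogeneity of degree two of $\mu_1$ forces $\gamma_1=0$. With these in hand I would specialize the master identity to $m=2$. Because $\mu_0=1$ and $\gamma_1=0$, the $n=1$ case merely reproduces the definition~\eqref{eqn:gamma} of $\gamma_2$ and carries no information; but for $n\ge 2$ the term $(n+2)\mu_{n+1}$ is the unique highest-index moment present, so the identity rearranges into the recursion
$$
\mu_{n+1}=\frac{1}{n+2}\left(\frac{\partial\mu_n}{\partial\nu_k}g_{kl}\frac{\partial\mu_2}{\partial\nu_l}+2\mu_1\gamma_n+n\,\mu_{n-1}\gamma_2\right),
$$
where, from~\eqref{eqn:gamma}, each $\gamma_j$ depends only on $\mu_j$ and its first derivatives in $\bm\nu$.

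I would then close by induction. The right-hand side involves only $\mu_1$ (fixed), $\mu_2$ (the chosen generator), the previously constructed moments $\mu_{n-1},\mu_n$, and their gradients, so $\mu_{n+1}$ is uniquely determined; starting from $\mu_1$ and $\mu_2$ this produces $\mu_3,\mu_4,\ldots$, and hence the entire parametric closure $\{\mu_n(\bm\nu)\}_{n\ge 1}$ is generated by $\mu_2$ alone. As a consistency check, when $\mu_2$ is a homogeneous cubic one has $\gamma_2=0$ and the recursion collapses to $\mu_{n+1}=\tfrac{1}{n+2}\,\partial_{\nu_k}\mu_n\,g_{kl}\,\partial_{\nu_l}\mu_2$, whose iterates are homogeneous polynomials of degree growing by one at each step, matching the announced polynomial parameterizations.

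The main obstacle I anticipate is justifying the master identity beyond the field-variable range: Lemma~\ref{lem:albe} is phrased for the $N-2$ microscopic field coordinates, whereas generating the closure moments $\mu_{N-1},\mu_N,\ldots$ requires the identity for $n\ge N-2$. The resolution I would adopt is to treat~\eqref{eqn:albemu} as an algebraic relation inherited from the Kupershmidt--Manin hierarchy, which is defined for all moments independently of truncation, so that the relation persists for every $n,m\ge 1$ once each $\mu_n$ is regarded as the function of $\bm\nu$ fixed by the closure.
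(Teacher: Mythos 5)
Your proposal is correct and takes essentially the same route as the paper: both equate the flattened form of the microscopic metric given by Lemma~\ref{lem:albe} with the expression~\eqref{eqn:albemu}, specialize the resulting identity to $m=2$, and solve for the highest moment to obtain the recursion $\mu_{n+1}=\frac{1}{n+2}\bigl[\partial_{\nu_k}\mu_n\, g_{kl}\,\partial_{\nu_l}\mu_2+2\mu_1\gamma_n+n\mu_{n-1}\gamma_2\bigr]$, which determines all $\mu_{n\geq 3}$ from $\mu_2$ by induction. Your supplementary observations (that $\gamma_1=0$ by Euler's relation, that the $n=1$ case is vacuous, and that extending the identity beyond the field-variable range $n\leq N-2$ is justified by inheritance from the untruncated Kupershmidt--Manin hierarchy) make explicit points the paper's proof leaves implicit, but do not alter the argument.
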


\begin{proof}
    We assume that the hydrodynamic bracket~\eqref{eqn:PBdec} with elements $\alpha$ and $\beta$ given by Eq.~\eqref{eqn:albemu} is a Poisson bracket. The change of variables from $\bm\mu$ to $\bm\nu$ ensures that 
    \begin{eqnarray*}
        && \frac{\partial \mu_n}{\partial \nu_k} g_{kl} \frac{\partial \mu_m}{\partial \nu_l} = (n+m)\mu_{n+m-1}-m\mu_{m-1}\gamma_n-n\mu_{n-1}\gamma_m,\\
        && \partial_x \frac{\partial \mu_n}{\partial \nu_k} g_{kl} \frac{\partial \mu_m}{\partial \nu_l} = n\partial_x\mu_{n+m-1}-n\partial_x\mu_{n-1}\gamma_m-m\mu_{m-1}\partial_x\gamma_n.
    \end{eqnarray*}
    For  $m=2$, the first identity becomes
    $$
        \mu_{n+1}=\frac{1}{n+2} \left[\frac{\partial \mu_n}{\partial \nu_k} g_{kl} \frac{\partial \mu_2}{\partial \nu_l} +2\mu_{1}\gamma_n+n\mu_{n-1}\gamma_2 \right],
    $$
    where $\gamma_n$ is given by Eq.~\eqref{eqn:gamma} and $\mu_1$ given by Eq.~\eqref{eqn:mu1}. Since $\gamma_n$ only depends on $\mu_n$ and its first derivatives, this recurrence relation ensures that all $\mu_{n\geq 3}$ are uniquely determined by $\mu_2$. 
\end{proof}

\section{Classification of known Hamiltonian fluid closures}
\label{sec:part2}

In this section, we investigate all the known Hamiltonian closures in the light of the results listed in Sec.~\ref{sec:part1}. 

\subsection{The multi-delta distribution}
\label{sec:part1delta}

We consider the Vlasov distribution (as considered in Refs.~\cite{Gosse2003,fox_higher-order_2009,yuan_conditional_2011,chalons_beyond_2012,cheng_class_2014,inglebert_multi-stream_2011, inglebert_multi-stream_2012, inglebert_electron_2012,antoine_multi-stream_2025}) given by $N=2M$ fields $a_k(x)$ and $v_k(x)$ for $k=1,\ldots, M$:
\begin{equation}
\label{eqn:deltas}
    f(t,x,v) = \sum_{k=1}^M a_k(t,x)\delta(v-v_k(t,x)),
\end{equation}
where $M$ is the number of Dirac delta functions. From the Poisson bracket~\eqref{eqn:PB_Vlasov}, the Poisson bracket associated with the distribution~\eqref{eqn:deltas} is given by:
\begin{equation}
\label{eqn:PBdelta}
    \{F,G\} =\int \left( \partial_x F_{a_k}G_{v_k} - \partial_x G_{a_k} F_{v_k}\right)\,{\rm d}x. 
\end{equation}
We notice that the above bracket is readily of hydrodynamic form~\eqref{eqn:PBh} and its metric is flat, i.e., $\beta=0$ and $\alpha$ is given by
$$
    \alpha = \begin{pmatrix}0&{\mathbb I}_{M}\\ {\mathbb I}_{M} &0\end{pmatrix}.
$$
Consequently, the signature of the bracket~\eqref{eqn:PBdelta} is $(M,M)$. 
The moments ${\bf P}$ are given by
$$
    P_n = \sum_{k=1}^M a_k v_k^n.
$$
We notice that the moment $P_n$ is a homogeneous polynomial of degree $n+1$ in $N=2M$ variables. 
Next, we proceed with the partial decoupling by considering the fluid density and fluid velocity. We consider the following variables:
\begin{align*}
    \rho &= \sum_{l=1}^M a_l,\\ 
    u &= \frac{1}{\rho}\sum_{l=1}^{M}a_l v_l,\\
    \xi_k &= \frac{a_k}{\rho},\\
    \eta_k &= \frac{v_k-v_1}{\rho},
\end{align*}
for $k=2,\ldots,M$. This change of variables is invertible, and its inverse is given by
\begin{align*}
    a_1 &= \rho\left(1-\sum_{l=2}^M \xi_l\right),\\ 
    v_1 &= u- \rho\sum_{l=2}^{M}\xi_l \eta_l,\\
    a_k &= \rho \xi_k,\\
    v_k &= u- \rho\sum_{l=2}^{M}\xi_l \eta_l +\rho \eta_k,
\end{align*}
for $k=2,\ldots, N-1$.
In the normal variables $\boldsymbol{\nu} =(\boldsymbol{\xi},\boldsymbol{\eta})$, the bracket becomes of the form~\eqref{eqn:PBdec_nu} with 
$$
    g = \begin{pmatrix}0&{\mathbb I}_{M-1}\\ {\mathbb I}_{M-1} &0\end{pmatrix}.
$$
As a consequence, the signature of the microscopic bracket~\eqref{eqn:PBmicro} is $(M-1,M-1)$.

The centered fluid moments ${\bf S}$ are expressed as, 
\begin{align*}
    S_n &= \frac{1}{\rho^{n+1}}\sum_{i=1}^M a_i(v_i-u)^n,\\
    &= \left(1-\sum_{k=2}^M \xi_k\right)\left(-\sum_{k=2}^M \eta_k \xi_k\right)^n + \sum_{k=2}^M \xi_k\left(\eta_k - \sum_{l=2}^M \xi_l \eta_l\right)^n.
\end{align*}

{\em Remark:} $S_n$ is a polynomial of degree $2n$ in the $2M-2$ normal variables $({\bm \xi},{\bm\eta})$.

Next we compute the $\bm\mu$-variables by centering the moments around $\psi=u-\rho\mu_1$ where $\mu_1$ is given by Eq.~\eqref{eqn:mu1} as $\mu_1 \equiv \sum_{k=2}^M \xi_k \eta_k$, i.e.,
\begin{align*}
    \mu_n &= \frac{1}{\rho^{n+1}}\sum_{k=1}^M a_k(v_k-u+\rho\mu_1)^n,\\
    &= \sum_{k=2}^M \xi_k \eta_k^{n},
\end{align*}
since $v_1-u+\rho\mu_1=0$. Given that $\mu_n$ is a homogeneous polynomial of degree $n+1$, the functions $\gamma_n$ given by Eq.~\eqref{eqn:gamma} vanish. Consequently, under the changes of variables $(\rho,u,\xi_2,\ldots,\xi_{M},\eta_2,\ldots,\eta_M) \to (\rho,u,\mu_1,\ldots,\mu_{N-2})$, one gets the partially decoupled hydrodynamic bracket of the form~\eqref{eqn:PBdec} with
\begin{eqnarray*}
    && \alpha_{nm}({\bm \mu}) = (n+m)\mu_{n+m-1},\\
    && \beta_{nm}({\bm \mu},\partial_x {\bm \mu}) = m\partial_x \mu_{n+m-1},
\end{eqnarray*}
for $n,m=1,\ldots,N-2$ (see also Appendix~\ref{app:PBmu}).

We notice that the expression of the moments $\mu_n$ in the $2M-2$ variables $(\bm \xi,\bm\eta)$ is the same as the moments $P_n$ in the $2M$ variables $(\bf a, \bf v)$. 

The cubic polynomial generating the closure is
$$
    \mu_2 = \sum_{k=2}^M \xi_k \eta_k^{2}. 
$$

\subsection{The waterbag distribution}

We consider the Vlasov distributions given by the waterbag ansatz (first introduced in Refs.~\cite{depackh_water-bag_1962, bertrand_non_1968, bertrand_frequency_1969, bertrand_non-linear_1976}), i.e., a piecewise constant function in the velocity domain,
$$
    f(t,x,v) = \sum_{n=1}^{N} a_n\Theta\left(v-v_n(t,x)\right),
$$
where $a_n$ are constant, $\Theta$ is the Heaviside distribution, and $v_n(t,x)$ are the contour velocities of the $N-1$ bags of height $\sigma_k=\sum_{n=1}^{k} a_n$ for $k=1,\ldots, N-1$. To ensure a compact support, we impose the condition $\sigma_N=\sum_{n=1}^{N} a_n =0$.

In the algebra of functionals of ${\bf v} = (v_1,\cdots,v_{N})$, the Poisson bracket is given by~\cite{yu_waterbag_2000,chesnokov_reductions_2012,morrison2013-1,morrison2013-2}
$$
    \{F,G\}_{{\rm wb}} = - \int \sum_{n=1}^{N}\frac{1}{a_n}\partial_x F_n G_n\ {\rm d}x,
$$
where $F_n={\delta F}/{\delta v_n}$. We notice that this bracket is a hydrodynamic bracket~\eqref{eqn:PBh} with $\beta=0$ and $\alpha_{nm}=-\delta_{nm}/a_n$. As a consequence, $N$ independent Casimir invariants in the variables $v_n$ are given by
\begin{equation}
\label{eqn:CasWB}
    C_n = \int v_n \ {\rm d}x.
\end{equation}
We observe that the moments~\eqref{eqn:moments}, given by
$$ 
    P_n = -\frac{1}{n+1}\sum_{n=1}^{N} a_n v_n^{n+1},
$$
are homogeneous polynomials of degree $n+1$ in the $N$ variables $v_n$.

Next, we cast this Poisson bracket into a partially decoupled form~\eqref{eqn:PBdec} by introducing the normal variables :
\begin{eqnarray*}
    && \rho = -\sum_{n=1}^{N} a_n v_n,\\
    && u = -\frac{1}{2\rho} \sum_{n=1}^{N} a_n v_n^2,\\
    && \nu_k = \frac{1}{\rho}\sum_{l=1}^{k} \sigma_l (v_{l+1}-v_l),
\end{eqnarray*}
for $k=1,\ldots,N-2$.
We notice that $\nu_k$ is the cumulative normalized particle density up to the $k$-th bag and, using the definition of $\nu_k$, we have $\nu_{N-1}=1$ and $\nu_0=0$. This change of variables is invertible and its inverse is given by
\begin{subequations}
\label{eqn:WBinv}
    \begin{eqnarray}
    && v_1 = u+\frac{\rho}{2}\sum_{k=2}^N a_k\left(\sum_{l=1}^{k-1}\frac{\nu_l-\nu_{l-1}}{\sigma_l} \right)^2,\\
    && v_k = v_1 + \rho \sum_{l=1}^{k-1}\frac{\nu_l-\nu_{l-1}}{\sigma_l},
\end{eqnarray}
\end{subequations}
for $k=2,\ldots,N$.
By performing the change of variables $(v_1,\ldots,v_N)\mapsto (\rho, u, \nu_1,\ldots, \nu_{N-2})$, the Poisson bracket becomes a partially decoupled hydrodynamic bracket~\eqref{eqn:PBdec_nu} with
$$
    g_{nm} =-\delta_{nm}  \frac{\sigma_n \sigma_{n+1}}{a_{n+1}}.
$$
Since $\sigma_n$ corresponds to the height of the $n$-th waterbag, the sign of $g_{nm}$ is the opposite of the sign of $a_{n+1}$. Consequently, the signature of $\{\cdot ,\cdot \}_{{\rm wb}}$ is $(N-k,k)$ where $k\in \{1,\cdots,N-1\}$ is the number of positive $a$'s. This means that the signature of the bracket is directly linked to the predefined number of increasing vs decreasing bags in the distribution. Since the heights of the bags are constant, this signature is a characteristic of the shape of the chosen initial distribution. 

The centered fluids moments ${\bf S}$ are given by~\cite{perin_hamiltonian_2015-1}
\begin{align*}
    S_n &= -\frac{1}{(n+1)\rho^{n+1}}\sum_{k=1}^{N} a_k (v_k-u)^{n+1},\\
        &= -\frac{1}{(n+1)}\sum_{k=1}^{N} a_k \eta_k^{n+1}(\nu_1,\cdots,\nu_{N-2}),
\end{align*}
where, using Eq.~\eqref{eqn:WBinv}, the expressions of $\eta_k$ are given by
\begin{align*}
    \eta_{1} &= \frac{1}{2}\sum_{k=2}^{N}a_k\left(\sum_{l=1}^{k-1}\frac{\nu_{l}-\nu_{l-1}}{\sigma_l}\right)^2,\\
    \eta_{k} &=\eta_{1} +\sum_{l=1}^{k-1}\frac{\nu_{l}-\nu_{l-1}}{\sigma_l},
\end{align*}
for $k=2,\ldots,N$, where $\nu_0=0$ and $\nu_{N-1}=1$. 

{\em Remark:} $S_n$ is a polynomial of degree $2n$ in the normal variables ${\bm \nu}$. We notice that for this closure, the polynomial is not homogeneous since 
$$
    S_n({\bm\nu}=0)=\frac{1+(-1)^n}{(n+1)2^{n+1}a_N^n}.
$$
Instead of centering the moments around $u$, we center the moments around $\psi$ defined by $\psi=u-\rho\mu_1$. The first moment $\mu_1$ is given by Eq.~\eqref{eqn:mu1}:
$$
    \mu_1=\frac{1}{2}\sum_{k=1}^{N-2}\frac{\nu_k^2}{\lambda_k},
$$
where $\lambda_k=-\sigma_k \sigma_{k+1}/a_{k+1}$.
The explicit expression of $\psi$ as a function of $v_n$ becomes
$$
    \psi = \frac{v_N}{2}+\frac{\rho}{2 a_N}=-\frac{1}{2a_N}\sum_{n=1}^{N-1} a_n v_n,
$$
where, as expected, it can be shown to be a Casimir invariant given Eq.~\eqref{eqn:CasWB}.
The moments $\mu_n$ for $n\geq 2$ as functions of the variables $\nu_k$ are given by
\begin{eqnarray*}
    \mu_n &=& -\frac{1}{n+1}\frac{1}{\rho^{n+1}}\sum_{k=1}^N a_k \left(v_k-\psi \right)^{n+1},\\
    &=& \frac{(-1)^n}{n+1}\left(\sum_{k=1}^{N-1}a_k \left(\frac{1}{2a_N}+\sum_{l=k}^{N-1}\frac{\nu_l-\nu_{l-1}}{\sigma_l} \right)^{n+1} + \frac{1}{2^{n+1}a_N^n} \right).
\end{eqnarray*}
From this expression, we notice that $\mu_n$ is a polynomial of degree $n+1$ in the $N-2$ variables $\nu_k$.
In the variables ${\bm \mu}$ the brackets takes the form~\eqref{eqn:PBdec} with
\begin{eqnarray*}
    && \alpha_{nm}=(n+m)\mu_{n+m-1}-m\mu_{m-1}\Lambda^n -n\mu_{n-1}\Lambda^m+2\Lambda nm \mu_{n-1}\mu_{m-1},\\
    && \beta_{nm}=n\partial_x \mu_{n+m-1} -n\partial_x \mu_{n-1}\Lambda^m+2\Lambda nm \partial_x \mu_{n-1}\mu_{m-1},
\end{eqnarray*}
where $\Lambda=-1/(2a_N)$. The inhomogeneous character of the polynomial $\mu_n$ is expressed by a non-zero $\gamma_n$ where $\gamma_n$ is given by Eq.~\eqref{eqn:gamma}
$$
    \gamma_n=\Lambda^n-n\Lambda \mu_{n-1},
$$
whereas this quantity is zero for all other known closures.

Although the coefficients $\alpha$ and $\beta$ depend only on the constant $\Lambda$, the realization of the moment variables $\boldsymbol{\mu}$ as functions of the normal variables $\boldsymbol{\nu}$ depends explicitly on the signs of the waterbag heights. Consequently, two waterbag distributions with identical $\Lambda$ but different signatures give rise to non-isomorphic Poisson structures, despite having formally identical reduced brackets in the $\boldsymbol{\mu}$-variables.

The cubic polynomial generating the closure is 
$$
    \mu_2 = \frac{1}{3}\left(\sum_{k=1}^{N-1}a_k \left(\frac{1}{2a_N}+\sum_{l=k}^{N-1}\frac{\nu_l-\nu_{l-1}}{\sigma_l} \right)^{3} + \frac{1}{8 a_N^2} \right).
$$

\subsection{Burby's Hamiltonian closure (see Ref.~\texorpdfstring{\cite{burby_variable-moment_2023}}{[Burby, 2023]})}
\label{sec:part1JB}

In this section we consider the Hamiltonian closure obtained in Ref.~\cite{burby_variable-moment_2023}. In the one-dimensional case, this closure corresponds to considering a Vlasov distribution of the form 
$$
    f(t,x,v) = \sum_{k=0}^{N-2}\frac{(-1)^k}{k!}\mu_k(t,x) \delta^{(k)}(v-\psi(t,x)),
$$
where $\delta^{(k)}$ is the $k$-th derivative of the Dirac delta function defined as
$$
    \int \delta^{(k)}(v) \phi(v) \ {\rm d}v = (-1)^k \phi^{(k)}(0),
$$
for an arbitrary function $\phi \in C^\infty ({\mathbb R})$. The moments $P_n$ are given by
$$
    P_n=\sum_{k=0}^n \binom{n}{k} \mu_k\psi^{n-k}.
$$
The centered moments $S_n$ are given by
$$
    S_n = \sum_{k=0}^n \binom{n}{k} \overline{\mu}_k \left(-\overline{\mu}_1 \right)^{n-k},
$$
where $\overline{\mu}_k=\mu_k/\rho^{k+1}$. As expected, we have $S_0=\overline{\mu}_0=1$ and $S_1=0$. Therefore the centered moments $S_n$ are only functions of $\overline{\mu}_k$ for $k=1,\ldots,N-2$. 

{\em Remark:} If $\overline{\mu}_n$ is a polynomial of degree $n+1$ for $n\geq 1$, then $S_n$ is a polynomial of degree $2n$. Therefore we notice the advantage of working with the variables $\overline{\mu}_n$ rather than the centered variables $S_n$, to obtain closure functions of the lowest possible degree.

We note that the first fluid moments are given by
\begin{eqnarray*}
    && \rho = \mu_0,\\
    && u = \frac{\mu_1}{\rho} + \psi,
\end{eqnarray*}
In the variables $(\rho, \psi, \mu_1,\ldots, \mu_{N-2})$, the Poisson bracket becomes
\begin{equation}
\label{eqn:PBJB}
    \{F,G\} = \int \left( \partial_x F_{\psi}G_{\rho} - F_{\rho}\partial_x G_{\psi}+ \partial_x F_k \alpha_{kl}({\bm\mu}) G_l + F_k \beta_{kl}({\bm\mu}, \partial_x{\bm\mu}) G_l \right) \ {\rm d}x,
\end{equation}
where the implicit summations over $k$ and $l$ run from 1 to $N-2$, and
\begin{subequations}
\label{eqn:JBalbe}
    \begin{eqnarray}
    && \alpha_{kl}({\bm\mu}) = (k+l)\mu_{k+l-1},\\
    && \beta_{kl}({\bm\mu}, \partial_x{\bm\mu}) = k\partial_x \mu_{k+l-1},
\end{eqnarray}
\end{subequations}
and $\mu_k=0$ for $k\geq N-1$. In other terms, the bracket is of hydrodynamic type and the elements $\alpha$ and $\beta$ are upper anti-triangular. Contrary to the examples of the multi-delta distribution and the waterbags, the expression of the Poisson bracket has an explicit dependence on the field variables, so computing the signature of the bracket in the original variables is not trivial.

In order to proceed, we consider the following change of variables: $(\rho, \psi, \mu_1,\cdots,\mu_{N-2}) \mapsto (\rho,u, \bar{\mu}_1 , \cdots, \bar{\mu}_{N-2})$ defined by
\begin{eqnarray*}
    && u = \frac{\mu_1}{\rho} + \psi, \\
    && \bar{\mu}_k = \frac{\mu_k}{\rho^{k+1}}, 
\end{eqnarray*}
for $k=1,\ldots, N-2$. This leads to the partial decoupling of the hydrodynamic bracket in the form given by Eq.~\eqref{eqn:PBdec}, where the expressions of the elements $\bar\alpha$ and $\bar\beta$ satisfy Eq.~\eqref{eqn:JBalbe} with $\bar{\bm\mu}$ instead of ${\bm\mu}$. In this way, the Poisson bracket is already in the reduced variables $\bar{\bm\mu}$. 

\subsubsection{Expression of the Casimir invariants}

Since the bracket~\eqref{eqn:PBJB} satisfies the Jacobi identity, it has $N$ Casimir invariants. Two obvious Casimir invariants are
\begin{eqnarray*}
    && C[\rho]=\int \rho \ {\rm d}x,\\
    && C[\psi]= \int \psi \ {\rm d}x.
\end{eqnarray*}
The other $N-2$ Casimir invariants related to the specific shape of the elements $\alpha$ and $\beta$ are less trivial, but are of the form 
$$
    C[\rho,\bar{\bm\mu}]=\int \rho {\bm\nu}(\bar{\bm\mu})\  {\rm d}x,
$$
as mentioned in Lemma~\ref{lem:cas}.
The goal of this section to obtain their explicit expressions as functions of $\bar{\bm\mu}$. In what follows, we drop the bars above ${\bm\mu}$, and the scaling with $\rho$ is implicitly assumed. 
 
Finding the expressions of the Casimir invariants amounts to finding the change of variables ${\bm\mu}\mapsto {\bm\nu}={\bm\nu}({\bm\mu})$, or alternatively to determine the parametrization ${\bm\mu} = {\bm\mu}({\bm\nu})$ which maps the bracket~\eqref{eqn:PBdec} into the bracket~\eqref{eqn:PBdec_nu} and invert it. 
The equations to be solved are obtained from Lemma~\ref{lem:albe} and Eq.~\eqref{eqn:JBalbe} for $m\equiv N-2$:
\begin{subequations}
\label{eqn:JBeqn}
    \begin{eqnarray}
    && \frac{\partial \mu_k}{\partial\nu_i}g_{ij}\frac{\partial \mu_l}{\partial\nu_{j}} = (k+l)\mu_{k+l-1},\label{eqn:JBeqn_a} \\
    && \partial_x \frac{\partial \mu_k}{\partial\nu_i}g_{ij}\frac{\partial \mu_l}{\partial\nu_{j}} = k\partial_x \mu_{k+l-1}, \label{eqn:JBeqn_b}
\end{eqnarray}
\end{subequations}
for $k,l=1,\ldots,m$, and where $\mu_k=0$ for $k\geq m+1$. The implicit summations over $i$ and $j$ are taken from 1 to $m$.
First, we assume the matrix $g$ to be the following $m\times m$ antidiagonal matrix:
\begin{equation}
\label{eqn:JBg}
    g =
        \begin{pmatrix}
        0      & \cdots & 0      & 1      \\
        \vdots & \iddots & 1      & 0      \\
        0      & 1      & \iddots & \vdots \\
        1      & 0      & \cdots & 0
        \end{pmatrix}.
\end{equation}
The identities~\eqref{eqn:JBeqn} to be solved simplify to:
\begin{eqnarray*}
    && \frac{\partial \mu_k}{\partial\nu_j}\frac{\partial \mu_l}{\partial\nu_{m+1-j}} = (k+l)\mu_{k+l-1}, \\
    && \partial_x \frac{\partial \mu_k}{\partial\nu_j}\frac{\partial \mu_l}{\partial\nu_{m+1-j}} = k\partial_x \mu_{k+l-1}.
\end{eqnarray*}
The goal is to find the solution of these coupled partial differential equations for an arbitrary $m$.

\begin{lemme}
\label{prop:munu}
    The functions $\mu_n^{(m)}$ of the $m+1-n$ last variables ${\bm\nu}$, i.e.,
    $$
        \mu_n^{(m)}=\mu_n^{(m)}(\nu_n,\ldots,\nu_m),
    $$
    for $n=1,\ldots, m$, satisfy the following relations
    \begin{eqnarray*}
        && \frac{\partial \mu_k^{(m)}}{\partial\nu_j}\frac{\partial \mu_l^{(m)}}{\partial\nu_{m+1-j}} = 0,\\
        && \partial_x \frac{\partial \mu_k^{(m)}}{\partial\nu_j}\frac{\partial \mu_l^{(m)}}{\partial\nu_{m+1-j}} =0,
    \end{eqnarray*}
    for $k+l > m+1$.
\end{lemme}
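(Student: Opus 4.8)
The plan is to prove both identities at once by a pure support (index-range) argument, exploiting the assumed triangular dependence $\mu_n^{(m)}=\mu_n^{(m)}(\nu_n,\ldots,\nu_m)$. The central observation is that, in the implicit sum over $j$ from $1$ to $m$, \emph{each individual summand} vanishes identically whenever $k+l>m+1$, so the full sum vanishes irrespective of the explicit form of the functions $\mu_n^{(m)}$.

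First I would record the $\bm\nu$-support of each functional derivative. Since $\mu_k^{(m)}$ depends only on $\nu_k,\ldots,\nu_m$, the derivative $\partial \mu_k^{(m)}/\partial \nu_j$ is identically zero unless $k\le j\le m$; in particular it vanishes for every $j<k$. Likewise $\partial \mu_l^{(m)}/\partial \nu_{m+1-j}$ is identically zero unless $l\le m+1-j\le m$, that is unless $1\le j\le m+1-l$. Hence, for the summand $\frac{\partial \mu_k^{(m)}}{\partial \nu_j}\frac{\partial \mu_l^{(m)}}{\partial \nu_{m+1-j}}$ to be nonzero, the index $j$ must lie in the intersection $k\le j\le m+1-l$. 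This range is empty precisely when $k+l>m+1$, which establishes the first identity termwise, and therefore as a sum.

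For the second identity I would argue the same way, taking care that $\partial_x$ acts only on the first factor, in keeping with the convention of Lemma~\ref{lem:albe} and Eq.~\eqref{eqn:JBeqn_b}. If $j<k$, then $\partial \mu_k^{(m)}/\partial \nu_j$ vanishes identically as a function of $\bm\nu$, so its $x$-derivative vanishes too; thus $\partial_x\frac{\partial \mu_k^{(m)}}{\partial \nu_j}$ is still supported on $j\ge k$. Combined with the support $j\le m+1-l$ of the second factor, the same empty-intersection argument forces every summand to vanish when $k+l>m+1$.

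I do not expect a genuine obstacle here: the statement is essentially the bookkeeping consistency that the triangular ansatz matches the right-hand sides $(k+l)\mu_{k+l-1}$ and $k\,\partial_x\mu_{k+l-1}$, which themselves vanish for $k+l>m+1$ because $\mu_n=0$ for $n\ge m+1$. The only point that requires care is the chain-rule bookkeeping for $\partial_x\frac{\partial \mu_k^{(m)}}{\partial \nu_j}$: one must note that differentiating in $x$ only reintroduces factors $\partial_x\nu_p$ with $p\in\{k,\ldots,m\}$ and does not enlarge the $\nu_j$-support, so the vanishing for $j<k$ is preserved. With this, both identities follow for all $k,l\in\{1,\ldots,m\}$ with $k+l>m+1$.
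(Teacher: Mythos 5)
Your proof is correct and follows essentially the same route as the paper's: both rest on the observation that $\partial \mu_k^{(m)}/\partial\nu_j$ is supported on $j\geq k$ while $\partial \mu_l^{(m)}/\partial\nu_{m+1-j}$ is supported on $j\leq m+1-l$, and these ranges are disjoint when $k+l>m+1$. Your extra care in noting that $\partial_x$ of an identically vanishing function of $\bm\nu$ still vanishes (so the second identity follows termwise as well) is a point the paper leaves implicit, but it is the same argument.
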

\begin{proof}
    The derivative ${\partial \mu_k^{(m)}}/{\partial\nu_j}$ is nonzero if $j\geq k$, and ${\partial \mu_l^{(m)}}/{\partial\nu_{m+1-j}}$ is nonzero if $m+1-j\geq l$. The condition $m+1-j\geq l$ together with $k+l > m+1$ leads to $j< k$. So there is no possible index $j$ such that both partial derivatives are non-zero. 
\end{proof}
We look for solutions to Eq.~\eqref{eqn:JBeqn} of this form. In what follows, we consider functions $\mu_n$ for different values of $m$, i.e., we define the family $\cal M$ of centered moments $\mu_n^{(m)}$ for $n=1,\ldots,m$ and $m=1,\ldots, N-2$. Here $m$ is referred to as the level of the moment(s), and therefore $\mu_n^{(m)}$ is the $n$-th centered moment of level $m$.

\begin{lemme}
\label{prop:murel}
    Let $\cal M$ be a family of centered moments $\mu_n^{(m)}(\bm\nu)$ for $n=1,\ldots, m$ and $m=1,\ldots,M$, such that $\mu_n^{(m)}(\bf 0) = 0$. The two proposals are equivalent: \\
    {\rm ($i$)} $\cal M$ satisfies the following differential identities:
    \begin{equation}
    \label{eqn:dmu_JB}
        \frac{\partial}{\partial \nu_k}\left[\mu_n^{(m)}(\nu_n, \cdots, \nu_m)\right] = n\mu_{n-1}^{(k-1)}(\nu_{m-k+n}, \cdots, \nu_{m}),
    \end{equation}
    for all $k=n,\ldots,m$, with $\mu_0^{(l)}(x_0,\ldots,x_l) = x_0$ for all levels $l$.\\
    {\rm ($ii$)} $\cal M$ satisfies the following recursive relations:
    $$
        \mu_n^{(m)}(\nu_n, \ldots, \nu_m) =
        \begin{cases}
        \displaystyle\frac{\nu_m^{m+1}}{m+1} & \text{if } n = m, \\
        \displaystyle\sum_{k=0}^{n} \binom{n}{k} \nu_m^{n-k} \mu_k^{(m - n - 1)}(\nu_{k+n}, \ldots, \nu_{m-1}) & \text{if } n < m.
        \end{cases}
    $$
\end{lemme}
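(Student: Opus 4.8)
The plan is to prove both implications by induction on the level $m$, reducing each to a single differentiation-and-binomial computation together with a uniqueness argument. Throughout I would use that both the identities~\eqref{eqn:dmu_JB} and the recursion in (ii) are manifestly covariant under a common shift of all argument indices, so that each $\mu_n^{(m)}$ may be regarded as a fixed function of its $m-n+1$ ordered argument values, independent of their absolute labels (consistent with the support structure of Lemma~\ref{prop:munu}). I would also adopt the conventions already implicit in the statement: $\mu_0^{(l)}=(\text{first argument})$, the diagonal value $\mu_m^{(m)}=\nu_m^{m+1}/(m+1)$, and $\mu_k^{(l)}\equiv 0$ whenever the order exceeds the level, $k>l$. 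These make both (i) and (ii) consistent with the normalization $\mu_n^{(m)}(\mathbf 0)=0$.

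The computational heart is to verify that the explicit formula (ii) satisfies the differential identities~\eqref{eqn:dmu_JB}. I would differentiate the off-diagonal case of (ii) with respect to $\nu_j$ for $n\le j\le m$, treating $j=m$ (where $\nu_m$ enters only through the explicit factors $\nu_m^{n-k}$) and $n\le j<m$ (where $\nu_j$ enters only inside the lower-level factors $\mu_k^{(m-n-1)}$) separately. In the second regime I would apply~\eqref{eqn:dmu_JB} at the strictly lower level $m-n-1$ to each $\partial_{\nu_j}\mu_k^{(m-n-1)}$, collapse the resulting prefactor with the identity $k\binom{n}{k}=n\binom{n-1}{k-1}$, and reindex $k\mapsto k+1$. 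The outcome should be recognized, term by term, as $n$ times the expansion of $\mu_{n-1}^{(j-1)}$ given by (ii), i.e.\ exactly the right-hand side of~\eqref{eqn:dmu_JB}. The case $j=m$ is the analogous simpler manipulation, while $j=n$ reduces to the diagonal value $\mu_{n-1}^{(n-1)}$ and yields $\nu_m^n$ directly.

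The step I expect to be most delicate is matching the two summation ranges at the end of this computation: the differentiated formula naturally runs over $0\le k\le j-n-1$, whereas the expansion of $\mu_{n-1}^{(j-1)}$ runs over $0\le k\le n-1$. I would show the discrepancy is harmless by a case split. When $j\ge 2n$ the surplus terms carry a vanishing binomial coefficient; when $j<2n$ the surplus terms contain a factor $\mu_k^{(j-n-1)}$ whose order exceeds its level and hence vanishes by convention. Getting the index shifts in the arguments exactly right—so that the surviving summands coincide—is the main bookkeeping obstacle.

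With this computation in hand, both directions follow by induction on $m$, the base case $m=1$ being $\mu_1^{(1)}=\nu_1^2/2$. For (ii)~$\Rightarrow$~(i), the computation at level $m$ only invokes~\eqref{eqn:dmu_JB} at levels $m-n-1<m$, which hold by the inductive hypothesis, so (i) follows at level $m$. For (i)~$\Rightarrow$~(ii), the gradient prescribed by~\eqref{eqn:dmu_JB} involves only moments of level $\le m-1$, which by the inductive hypothesis are given explicitly by (ii); the function $\hat\mu_n^{(m)}$ defined by formula (ii) from these lower-level data satisfies~\eqref{eqn:dmu_JB} by the same computation and vanishes at the origin, hence has the same gradient and the same value at $\mathbf 0$ as $\mu_n^{(m)}$. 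Since these data determine a function uniquely on a connected neighborhood of the origin, $\mu_n^{(m)}=\hat\mu_n^{(m)}$, which is precisely (ii).
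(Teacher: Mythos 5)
Your proposal is correct, and its second half takes a genuinely different route from the paper for the direction ($i$)~$\Rightarrow$~($ii$). The direction ($ii$)~$\Rightarrow$~($i$) is essentially the paper's own argument: differentiate formula ($ii$), treat the last variable separately from the interior ones, invoke \eqref{eqn:dmu_JB} at the strictly lower level $m-n-1$, and collapse with $k\binom{n}{k}=n\binom{n-1}{k-1}$; your explicit case split $j\ge 2n$ versus $j<2n$ for the mismatched summation ranges makes precise a point the paper absorbs silently into its conventions (vanishing binomials, and $\mu_k^{(l)}\equiv 0$ when $k>l$). Where you diverge is ($i$)~$\Rightarrow$~($ii$): the paper uses \eqref{eqn:dmu_JB} to compute iterated $\nu_m$-derivatives of $\mu_n^{(m)}$, deduces that it is a polynomial of degree $n$ in $\nu_m$ (degree $m+1$ on the diagonal), Taylor-expands about $\nu_m=0$, and then identifies the Taylor coefficients $\mu_k^{(m-n+k)}(\ldots,\nu_{m-1},0)$ with $\mu_k^{(m-n-1)}(\ldots,\nu_{m-1})$ through a separate gradient-plus-origin sub-argument; you instead define the candidate $\hat\mu_n^{(m)}$ by formula ($ii$) from the lower-level data (which, by induction, satisfy both ($i$) and ($ii$)), observe that the single core computation shows $\hat\mu_n^{(m)}$ and $\mu_n^{(m)}$ have identical gradients prescribed by \eqref{eqn:dmu_JB}, and conclude by uniqueness of a function with given gradient and given value at the origin on a connected domain. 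Your route is more economical---one computation serves both implications---while the paper's is more constructive, deriving formula ($ii$) rather than verifying a guessed candidate, and yielding polynomiality in $\nu_m$ as a by-product. One framing slip worth fixing: the diagonal value $\mu_m^{(m)}=\nu_m^{m+1}/(m+1)$ is not a ``convention implicit in the statement''---in the direction ($i$)~$\Rightarrow$~($ii$) it is part of what must be proved---but your own machinery disposes of it at once, since \eqref{eqn:dmu_JB} together with the lower-level identity $\mu_{m-1}^{(m-1)}(\nu_m)=\nu_m^m/m$ gives $\partial_{\nu_m}\mu_m^{(m)}=\nu_m^m$, and integration with $\mu_m^{(m)}(0)=0$ forces the stated value.
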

We notice that the identity ($i$) states that the partial derivatives of the $m$-th level moments $\mu_n^{(m)}$ are given by elements of $\cal M$ of lower levels, $\mu^{(k - 1)}$ for $k=n,\ldots, m$. In a similar way, we notice that the identity ($ii$) states that the elements of level $m$, namely, $\mu_n^{(m)}$ for $n<m$, are polynomials of degree $n$ in the last variable $\nu_m$ whose coefficients depend on the other elements of $\cal M$ of lower levels, $\mu^{(m - n - 1)}$. 

\begin{proof}
    See Appendix~\ref{proof_murel}.
\end{proof}

The induction identities $(ii)$ are used to compute the analytic expressions of the moments $\mu_n^{(n)}$. The Python code and the expression of the first moments for levels $m\leq 5$ are given in Appendix~\ref{app:code}.  

\begin{coro}
    The centered moments $\mu_n^{(m)}$ of $\cal M$ are homogeneous polynomials of degree $n+1$ for $n=0,\ldots, m$.
\end{coro}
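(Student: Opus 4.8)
The plan is to prove the claim by induction on the level $m$, leaning entirely on the explicit recursion $(ii)$ of Lemma~\ref{prop:murel}. Homogeneity of degree $n+1$ is precisely the degree bookkeeping that the recursion is built to preserve, so no analytic input beyond the recursion should be required; the whole argument reduces to counting degrees term by term.

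For the base of the induction I would use the zeroth moment together with the top moment at each level. By definition $\mu_0^{(l)}(x_0,\ldots,x_l)=x_0$ is homogeneous of degree $1=0+1$ at every level, and the first branch of $(ii)$ gives $\mu_m^{(m)}=\nu_m^{m+1}/(m+1)$, which is manifestly a homogeneous polynomial of degree $m+1=n+1$ in the case $n=m$. These two facts seed the induction at the smallest levels, for instance $m=1$, where $\mu_1^{(1)}=\nu_1^2/2$ is homogeneous of degree $2$.

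For the inductive step, I would suppose the statement holds for every moment of level strictly less than $m$ and fix $n<m$. The second branch of $(ii)$ writes $\mu_n^{(m)}=\sum_{k=0}^{n}\binom{n}{k}\nu_m^{n-k}\mu_k^{(m-n-1)}$. Since $n\ge 1$ in this branch, the inner level $m-n-1$ is at most $m-2$, hence strictly below $m$, so the inductive hypothesis applies to each factor $\mu_k^{(m-n-1)}$ and endows it with degree $k+1$. The prefactor $\nu_m^{n-k}$ contributes degree $n-k$, so every term has total degree $(n-k)+(k+1)=n+1$, independently of $k$. A sum of homogeneous polynomials of a common degree is again homogeneous of that degree, which closes the induction and yields the stated degree $n+1$.

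The one point requiring care, and the only place the argument could stumble, is the range of the summation index. For $k>m-n-1$ the symbol $\mu_k^{(m-n-1)}$ lies outside the admissible range $n\le m$ of the family $\mathcal M$; equivalently its argument list $\nu_{k+n},\ldots,\nu_{m-1}$ becomes empty, in line with the support statement of Lemma~\ref{prop:munu}. I would make explicit the convention that such terms vanish, note that zero summands do not affect homogeneity, and observe that the effective sum therefore runs over $k=0,\ldots,\min(n,m-n-1)$ with each surviving term of degree $n+1$ as computed. With this caveat dispatched, the induction proceeds uniformly in $m$.
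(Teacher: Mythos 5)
Your proof is correct and follows essentially the same route as the paper: induction on the level, applying identity $(ii)$ of Lemma~\ref{prop:murel} and counting degrees term by term, $(n-k)+(k+1)=n+1$. Your extra care about the summation range (adopting the convention $\mu_k^{(j)}=0$ when $k>j$, so that out-of-range summands vanish) is a detail the paper's proof glosses over and only states explicitly in the appendix proof of Proposition~\ref{coro:JBrel}, but it does not alter the argument.
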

\begin{proof}
    This follows by induction from the identity $(ii)$ in Lemma~\ref{prop:murel}. If $\mu_k^{(m - n - 1)}$ is a homogeneous polynomial of degree $k+1$ for all $k=0,\ldots,n$ and $n=0,\ldots,m-1$, then $\nu_m^{n-k} \mu_k^{(m - n - 1)}$ and, consequently, $\mu_n^{(m)}$ are homogeneous polynomials of degree $n+1$.
\end{proof}

\begin{proposition}
\label{coro:JBrel}
    Let the centered moments $\mu_n^{(m)}$ satisfying the hypotheses of Lemma~\ref{prop:murel}. Then the family of moments $\mu_n^{(m)}$ is a solution of Eq.~\eqref{eqn:JBeqn}.
\end{proposition}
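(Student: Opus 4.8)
The plan is to verify the two identities~\eqref{eqn:JBeqn_a} and~\eqref{eqn:JBeqn_b} directly, using the differential characterization~\eqref{eqn:dmu_JB} of the family $\cal M$ to turn the quadratic expression on the left-hand side into a convolution of moments of strictly lower level, and then to match this convolution with the right-hand side by induction. First I would dispose of the indices with $k+l>m+1$: there the right-hand sides vanish because $\mu_{k+l-1}=0$, while the left-hand sides vanish by Lemma~\ref{prop:munu}. This reduces the problem to $k+l\le m+1$, the range in which the differential identity~\eqref{eqn:dmu_JB} applies to both factors. Substituting it (with $g$ the antidiagonal matrix~\eqref{eqn:JBg}), the left-hand side of~\eqref{eqn:JBeqn_a} becomes
$$
\sum_{j=1}^{m}\frac{\partial \mu_k^{(m)}}{\partial\nu_j}\,\frac{\partial \mu_l^{(m)}}{\partial\nu_{m+1-j}}=kl\sum_{j=k}^{m+1-l}\mu_{k-1}^{(j-1)}\,\mu_{l-1}^{(m-j)},
$$
where the summation range is precisely the set of $j$ for which both partial derivatives are nonzero, and where each factor now carries a strictly smaller level. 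The target is then the convolution identity $kl\sum_j \mu_{k-1}^{(j-1)}\mu_{l-1}^{(m-j)}=(k+l)\mu_{k+l-1}^{(m)}$.

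The base case $k=1$ is immediate and instructive. By~\eqref{eqn:dmu_JB} together with the convention $\mu_0^{(l)}(x_0,\ldots,x_l)=x_0$, one has $\partial_{\nu_j}\mu_1^{(m)}=\nu_{m+1-j}$, so after reindexing $p=m+1-j$ the left-hand side collapses to $\sum_p \nu_p\,\partial_{\nu_p}\mu_l^{(m)}$, which equals $(l+1)\mu_l^{(m)}$ by Euler's theorem since $\mu_l^{(m)}$ is homogeneous of degree $l+1$; this matches the right-hand side $(1+l)\mu_l^{(m)}$. The very same computation settles~\eqref{eqn:JBeqn_b} for $k=1$, since applying $\partial_x$ to $\partial_{\nu_j}\mu_1^{(m)}=\nu_{m+1-j}$ gives $\sum_p \partial_x\nu_p\,\partial_{\nu_p}\mu_l^{(m)}=\partial_x\mu_l^{(m)}=\partial_x\mu_{1+l-1}^{(m)}$, as required. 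I would then run an induction to reach general $k$ — either on the level $m$, peeling off the last variable $\nu_m$ through the recursion~$(ii)$ of Lemma~\ref{prop:murel}, or on $k$ by establishing that the bilinear sum depends only on $k+l$ (a shift invariance $A(k,l)=A(k-1,l+1)$ seeded by the Euler computation above).

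The main obstacle is exactly this convolution identity for general $k$: after expanding both sides with the recursion~$(ii)$, one is confronted with a Vandermonde-type double binomial sum whose bookkeeping — the shifting windows of consecutive variables $\nu_\bullet$ attached to each lower-level moment $\mu_{k-1}^{(j-1)}$, $\mu_{l-1}^{(m-j)}$ — is delicate, and the cleanest route may well be to linearise it through an exponential generating function in an auxiliary variable that converts the binomial recursion into a product and the sum over $j$ into a Cauchy product.

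Finally, once~\eqref{eqn:JBeqn_a} is established as a polynomial identity in the $\nu_k$, identity~\eqref{eqn:JBeqn_b} follows with little extra effort. Differentiating~\eqref{eqn:JBeqn_a} with respect to $x$ through the chain rule and the product rule yields $\tilde\beta_{kl}+\tilde\beta_{lk}=(k+l)\partial_x\mu_{k+l-1}$, where $\tilde\beta_{kl}$ denotes the left-hand side of~\eqref{eqn:JBeqn_b}; this is just the automatic antisymmetry relation $\partial_x\alpha=\beta+\beta^t$ already noted for hydrodynamic brackets, so that only the correct splitting into the halves $\tilde\beta_{kl}=k\partial_x\mu_{k+l-1}$ remains to be justified. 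I would obtain that splitting from the same inductive scheme, anchored by the explicit $k=1$ evaluation above, which fixes one half unambiguously and propagates to all $k$.
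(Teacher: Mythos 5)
Your setup coincides with the paper's: you eliminate the range $k+l>m+1$ with Lemma~\ref{prop:munu}, use the differential identity~\eqref{eqn:dmu_JB} to rewrite the left-hand side of~\eqref{eqn:JBeqn_a} as $kl\sum_j \mu_{k-1}^{(j-1)}\mu_{l-1}^{(m-j)}$, and correctly identify the convolution identity $kl\sum_j\mu_{k-1}^{(j-1)}\mu_{l-1}^{(m-j)}=(k+l)\mu_{k+l-1}^{(m)}$ as the target. Your $k=1$ anchor via Euler's theorem (using homogeneity of degree $l+1$) is correct and is a nice alternative to the paper's trivial level-$1$ base case. However, the proof stops exactly where the real work begins: you declare the convolution identity for general $(k,l)$ to be ``the main obstacle'' and only gesture at possible routes (induction on the level, shift invariance in $k$, exponential generating functions) without carrying any of them out. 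This is precisely the step the paper's proof actually performs: it inducts on the level, expands both factors $\mu_{k-1}^{(j-1)}$ and $\mu_{l-1}^{(m-j)}$ in powers of $\nu_m$ via identity $(ii)$ of Lemma~\ref{prop:murel}, applies the induction hypothesis (the convolution identity at level $m-k-l$) to the products of lower-level moments, and resums using $k\binom{k-1}{k'}=(k'+1)\binom{k}{k'+1}$, the Vandermonde identity $\sum_{k'}\binom{k}{k'+1}\binom{l}{s-k'+1}=\binom{k+l}{s+2}$, and $(s+2)\binom{k+l}{s+2}=(k+l)\binom{k+l-1}{s+1}$ to reconstruct $(k+l)\mu_{k+l-1}^{(m)}$ through $(ii)$ again. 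Without this (or the generating-function computation you only hypothesize might work), the proposal is a plan rather than a proof.

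The treatment of~\eqref{eqn:JBeqn_b} has the same defect in sharper form. Your observation that applying $\partial_x$ to~\eqref{eqn:JBeqn_a} yields only the symmetrized relation $\tilde\beta_{kl}+\tilde\beta_{lk}=(k+l)\partial_x\mu_{k+l-1}$ is correct, but it means that the entire content of~\eqref{eqn:JBeqn_b} is the asymmetric splitting $\tilde\beta_{kl}=k\,\partial_x\mu_{k+l-1}$, and saying you would obtain it ``from the same inductive scheme'' leaves it unproven: antisymmetry plus the $k=1$ case does not determine the splitting for general $k$ unless you exhibit how it propagates. The paper instead proves~\eqref{eqn:JBeqn_b} by a separate direct computation: it reduces the claim to the pointwise identity $\frac{\partial^2\mu_k^{(m)}}{\partial\nu_n\partial\nu_j}\frac{\partial\mu_l^{(m)}}{\partial\nu_{m+1-j}}=k\,\frac{\partial\mu_{k+l-1}^{(m)}}{\partial\nu_n}$, evaluates the second derivative with two applications of identity~\eqref{eqn:dmu_JB} (giving $k(k-1)\mu_{k-2}^{(n-2+j-m)}$), collapses the sum over $j$ with the already-established convolution identity at level $n-1$, and converts the result $k(k+l-1)\mu_{k+l-2}^{(n-1)}$ back into $k\,\partial_{\nu_n}\mu_{k+l-1}^{(m)}$ via~\eqref{eqn:dmu_JB}. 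You would need to supply both this computation (or an equivalent) and the missing convolution argument before the proposal can stand as a proof.
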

    
\begin{proof}
    See Appendix~\ref{proof_JBrel}.
\end{proof}

\begin{proposition}
\label{coro:JBrelbis}
    The moments $\mu_n^{(m)}$, for $n=0,\ldots, m$, are given by 
\begin{equation}
    \label{eqn:muJB_expr}
    \mu_n^{(m)}(\nu_n,\cdots,\nu_m) = \frac{1}{n+1}\sum_{n\leq i_1,i_2,\ldots, i_{n+1}\leq m \atop i_1+\cdots+i_{n+1} =n(m+1)}\nu_{i_1}\nu_{i_2}\cdots\nu_{i_{n+1}}.
\end{equation}
    
\end{proposition}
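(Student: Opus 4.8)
The plan is to recast the claimed formula as a coefficient-extraction identity and then verify that this candidate satisfies the hypotheses of Lemma~\ref{prop:murel}, so that equality with the uniquely determined family $\mathcal M$ follows. Introduce the generating polynomial
$$
    G_{n,m}(t) = \sum_{i=n}^{m} \nu_i\, t^{i}.
$$
Expanding $G_{n,m}(t)^{n+1}$ produces the monomials $\nu_{i_1}\cdots\nu_{i_{n+1}}$ weighted by $t^{\,i_1+\cdots+i_{n+1}}$, each index lying in $\{n,\dots,m\}$; extracting the coefficient of $t^{n(m+1)}$ imposes precisely the constraint $i_1+\cdots+i_{n+1}=n(m+1)$ appearing in~\eqref{eqn:muJB_expr}. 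Hence the right-hand side of~\eqref{eqn:muJB_expr} equals $\tfrac{1}{n+1}[t^{n(m+1)}]\,G_{n,m}(t)^{n+1}$, where $[t^{d}]$ denotes the coefficient of $t^{d}$. I will denote this candidate by $\Phi_n^{(m)}$ and show it coincides with $\mu_n^{(m)}$.

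Since $\Phi_n^{(m)}$ is a homogeneous polynomial of degree $n+1\geq1$ in the $\nu_i$, it vanishes at the origin, so the hypothesis $\Phi_n^{(m)}(\mathbf 0)=0$ of Lemma~\ref{prop:murel} holds; for $n=0$ it reduces to the first variable $\nu_0$, consistent with the stated convention. The base case $n=m$ is immediate, since the only multi-index with all entries in $\{m\}$ and sum $m(m+1)$ is $(m,\dots,m)$, giving $\Phi_m^{(m)}=\nu_m^{m+1}/(m+1)$. It therefore remains to verify the differential identities $(i)$ of Lemma~\ref{prop:murel}; the equivalence $(i)\Leftrightarrow(ii)$ together with the fact that the recursion $(ii)$, including its base case $n=m$, determines a family uniquely, will then force $\Phi_n^{(m)}=\mu_n^{(m)}$ by induction on the level $m$.

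For the differential identity, differentiating the generating-function form gives, for $k=n,\dots,m$,
$$
    \frac{\partial \Phi_n^{(m)}}{\partial \nu_k} = [t^{\,n(m+1)-k}]\,G_{n,m}(t)^{n}.
$$
On the other side, writing $\mu_{n-1}^{(k-1)}$ in generating-function form and performing the index shift dictated by its evaluation at $(\nu_{m-k+n},\dots,\nu_m)$, namely the abstract index $j$ becomes $j+m-k+1$, one collects the powers of $t$ and uses $(n-1)k+n(m-k+1)=n(m+1)-k$ (the factor $n$ cancelling the prefactor $1/n$) to obtain
$$
    n\,\mu_{n-1}^{(k-1)}(\nu_{m-k+n},\dots,\nu_m) = [t^{\,n(m+1)-k}]\Bigl(\sum_{i=m-k+n}^{m}\nu_i\,t^{i}\Bigr)^{n}.
$$
The two expressions differ only in the lower summation limit of the inner polynomial.

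The main (and only genuine) step is to check that this difference is immaterial. Any monomial of $G_{n,m}(t)^{n}$ contributing to the coefficient of $t^{\,n(m+1)-k}=t^{\,nm+n-k}$ arises from indices $i_1,\dots,i_n\in\{n,\dots,m\}$ with $i_1+\cdots+i_n=nm+n-k$. If some $i_a\leq m-k+n-1$, then
$$
    i_1+\cdots+i_n \leq (n-1)m + (m-k+n-1) = nm+n-k-1 < nm+n-k,
$$
a contradiction; hence every index satisfies $i_a\geq m-k+n$, so the restricted sum and the full sum extract the same coefficient. This establishes identity $(i)$ for $\Phi_n^{(m)}$, and the argument outlined in the second paragraph then completes the proof.
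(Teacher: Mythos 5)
Your proof is correct and takes essentially the same route as the paper's: both arguments verify that the explicit candidate satisfies the differential identities $(i)$ of Lemma~\ref{prop:murel} — by differentiating the formula and reindexing the multi-index sum via $i_\ell \mapsto i_\ell - (m-k+1)$ — and then conclude equality with $\mu_n^{(m)}$ by a uniqueness argument rooted in vanishing at the origin. The differences are presentational rather than structural: you package the computation as coefficient extraction from $G_{n,m}(t)^{n+1}$ and conclude via the uniqueness of the recursion $(ii)$, whereas the paper inducts on $n$ and uses equal gradients plus vanishing at the origin; your explicit check that indices below $m-k+n$ cannot contribute to the coefficient of $t^{n(m+1)-k}$ makes rigorous the range restriction that the paper's reindexing step leaves implicit, which is a welcome refinement.
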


\begin{proof}
We prove Eq.~\eqref{eqn:muJB_expr} by induction. The first two cases $n=0$ and $n=1$ are straightforward. We assume Eq.~\eqref{eqn:muJB_expr} holds for $n-1$, i.e.,
$$
    \mu_{n-1}^{(m)}(\nu_{n-1},\dots,\nu_m) = \frac{1}{n} \sum_{n-1\leq i_1,i_2,\ldots, i_{n}\leq m \atop i_1+\dots+i_n = (n-1)(m+1)} \nu_{i_1} \cdots \nu_{i_n}.
$$
We define the candidate formula for $n$ as
$$
    F_n^{(m)}(\nu_n,\dots,\nu_m) = \frac{1}{n+1} \sum_{n\leq i_1,i_2,\ldots, i_{n+1}\leq m \atop i_1+\dots+i_{n+1} = n(m+1)} \nu_{i_1}\cdots\nu_{i_{n+1}}.
$$
For all $k\in \{n,\cdots,m\}$, we compute the partial derivative of $F_n^{(m)}$ with respect to $\nu_k$:
$$
    \frac{\partial F_n^{(m)}}{\partial \nu_k} = \sum_{\substack{i_1+\dots+i_n = n(m+1)-k}} \nu_{i_1}\cdots\nu_{i_n}.
$$
By reindexing 
$$
    j_\ell = i_\ell - (m-k+1), \quad \text{for $\ell=1,\dots,n$},
$$ 
we rewrite the sum as
$$
    \frac{\partial F_n^{(m)}}{\partial \nu_k} = n \, \mu_{n-1}^{(k-1)}(\nu_{m-k+n},\dots,\nu_m),
$$
which matches the identity~\eqref{eqn:dmu_JB}. Since both $F_n^{(m)}$ and $\mu_n^{(m)}$ vanish at the origin, they must coincide: 
$\mu_n^{(m)}(\nu_n,\dots,\nu_m) = F_n^{(m)}(\nu_n,\dots,\nu_m)$. 
\end{proof}

\begin{proposition}
    The parameterizations $\mu_n^{(m)}= \mu_n^{(m)}({\bm\nu})$ are locally invertible for all $m$, and are in fact globally invertible when $m$ is even.
\end{proposition}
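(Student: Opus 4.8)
The plan is to treat the two assertions separately, exploiting the triangular dependence structure $\mu_n^{(m)}=\mu_n^{(m)}(\nu_n,\ldots,\nu_m)$ together with the differential identity~\eqref{eqn:dmu_JB}. First I would compute the Jacobian of the map ${\bm\nu}\mapsto{\bm\mu}$, where ${\bm\mu}=(\mu_1^{(m)},\ldots,\mu_m^{(m)})$. Since $\mu_n^{(m)}$ depends only on $\nu_n,\ldots,\nu_m$, the matrix with entries $\partial\mu_n^{(m)}/\partial\nu_k$ vanishes for $k<n$ and is therefore upper triangular, so its determinant is the product of the diagonal entries $\partial\mu_n^{(m)}/\partial\nu_n$. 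Specializing~\eqref{eqn:dmu_JB} to $k=n$ gives $\partial\mu_n^{(m)}/\partial\nu_n = n\,\mu_{n-1}^{(n-1)}(\nu_m)=\nu_m^{\,n}$, using that the top moment at each level is $\mu_{n-1}^{(n-1)}(x)=x^{n}/n$. Hence the Jacobian determinant equals $\nu_m^{\,m(m+1)/2}$, which is nonzero precisely when $\nu_m\neq0$, and the inverse function theorem yields local invertibility on the open dense set $\{\nu_m\neq0\}$ for every $m$.

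For global invertibility when $m$ is even, I would upgrade the triangular structure to an explicit backward substitution. The same specialization of~\eqref{eqn:dmu_JB} shows that the coefficient $\partial\mu_n^{(m)}/\partial\nu_n=\nu_m^{\,n}$ is independent of $\nu_n$, so each moment is \emph{affine} in $\nu_n$, namely $\mu_n^{(m)}=\nu_m^{\,n}\nu_n+g_n(\nu_{n+1},\ldots,\nu_m)$, with $g_n$ involving only higher-index variables. Starting from the top relation $\mu_m^{(m)}=\nu_m^{\,m+1}/(m+1)$ (Proposition~\ref{coro:JBrelbis} with $n=m$), the hypothesis that $m$ is even makes $m+1$ odd, so $\nu_m\mapsto\nu_m^{\,m+1}$ is a bijection of $\mathbb{R}$ and $\nu_m$ is recovered uniquely. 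Proceeding downward in $n$, once $\nu_{n+1},\ldots,\nu_m$ are known the coefficient $\nu_m^{\,n}$ is a nonzero constant and the affine relation inverts uniquely for $\nu_n$. This produces a single-valued global inverse on $\{\nu_m\neq0\}$.

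The main obstacle, and the place where the parity of $m$ genuinely enters, is the sign ambiguity in the top relation. When $m$ is odd, $m+1$ is even and $\nu_m$ is determined only up to sign; structurally, the involution $\sigma:\nu_i\mapsto(-1)^i\nu_i$ leaves every moment invariant, since each monomial in~\eqref{eqn:muJB_expr} satisfies $\sum_j i_j=n(m+1)$ and hence $\mu_n^{(m)}\mapsto(-1)^{n(m+1)}\mu_n^{(m)}=\mu_n^{(m)}$ for $m$ odd, forcing the map to be at least two-to-one. For $m$ even the same computation gives $\mu_n^{(m)}\mapsto(-1)^n\mu_n^{(m)}$, so $\sigma$ is no longer a symmetry and this obstruction disappears. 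I would also record why the degenerate locus $\{\nu_m=0\}$ is excluded: from the affine normal form $\mu_1^{(m)}=\nu_m\nu_1+g_1(\nu_2,\ldots,\nu_m)$, the moment $\mu_1^{(m)}$ ceases to depend on $\nu_1$ once $\nu_m=0$, matching the vanishing of the Jacobian there and showing that injectivity truly fails on that hyperplane. Beyond carefully verifying the affine normal form, the argument is routine once the diagonal of the Jacobian has been identified.
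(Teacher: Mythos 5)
Your proof is correct, and it overlaps with the paper's own proof on the global step while taking a genuinely different route on the local one. For global invertibility at even $m$ you and the paper do essentially the same thing: exploit the affine anti-triangular structure $\mu_n^{(m)}=\nu_m^{n}\nu_n+\chi_n^{(m)}(\nu_{n+1},\ldots,\nu_m)$, solve the top equation by the odd root $\nu_m=\left((m+1)\mu_m^{(m)}\right)^{1/(m+1)}$, and back-substitute downward; the paper extracts the affine form from identity ($ii$) of Lemma~\ref{prop:murel}, you from identity ($i$), and the lemma makes these equivalent. The local step is where you diverge: the paper inverts explicitly and, for $m$ odd, splits into the cases $\mu_m^{(m)}>0$ and $\mu_m^{(m)}<0$, the latter handled through the sign-flipped moments $(-1)^n\mu_n^{(m)}$ associated with the metric $-g$; you instead note that the Jacobian of ${\bm\nu}\mapsto{\bm\mu}$ is upper triangular with diagonal $\partial\mu_n^{(m)}/\partial\nu_n=\nu_m^{n}$, hence determinant $\nu_m^{m(m+1)/2}$, and invoke the inverse function theorem --- a uniform argument for all $m$ that bypasses the case analysis entirely. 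What the paper's route buys is closed-form inverse formulas, which are precisely what Appendix~\ref{app:code} implements to produce the Casimir densities; what your route buys is brevity and, more importantly, the involution $\sigma:\nu_i\mapsto(-1)^i\nu_i$, which (by the exponent count $\sum_j i_j=n(m+1)$ in Eq.~\eqref{eqn:muJB_expr}) is a nontrivial symmetry of all the moments when $m$ is odd, so the map is at least two-to-one there: this shows the parity hypothesis is \emph{necessary}, a sharpness statement the paper never establishes. One caveat applies to both proofs equally: invertibility, local or global, can only hold on $\{\nu_m\neq0\}$ (equivalently $\mu_m^{(m)}\neq 0$), since the Jacobian vanishes identically on that hyperplane; the paper leaves this restriction implicit in its formulas, whereas you state it explicitly, which is the more careful reading of the proposition.
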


\begin{proof}
We proceed by inverting the anti-triangular system starting from $\mu_m^{(m)}$. We notice that the relation $(ii)$ in Lemma~\ref{prop:murel} can be rewritten as
$$
    \mu_n^{(m)} = \nu_n\nu_m^n + \chi_n^{(m)}(\nu_{n+1},\ldots,\nu_m),
$$
where 
$$
    \chi_n^{(m)}(\nu_{n+1},\ldots,\nu_m)= \sum_{k=1}^{n} \binom{n}{k} \nu_m^{n-k} \mu_k^{(m - n - 1)}(\nu_{k+n}, \ldots, \nu_{m-1}),
$$
First, we consider $m$ even. 
For $m \geq 1$, we have:
$$
    \nu_m(\mu_m) = {\rm sgn}(\mu_m^{(m)}) \left((m+1)\vert \mu_m^{(m)}\vert\right)^{1/(m+1)}.
$$
We now move upwards in the anti-triangular system using the induction:
$$
    \nu_n = \left({\rm sgn}(\mu_m^{(m)})\right)^n \left((m+1)\vert \mu_m^{(m)}\vert\right)^{-n/(m+1)}\left( \mu_n^{(m)} -  \overline{\chi}_n^{(m)}(\mu_{n+1}^{(m)},\ldots,\mu_m^{(m)})\right),
$$
where we have inserted the inversion relation from $m$ to $n-1$ such that
$$
    \chi_n^{(m)}(\nu_{n+1},\ldots,\nu_m) = \overline{\chi}_n^{(m)}(\mu_{n+1}^{(m)},\ldots,\mu_m^{(m)}).
$$
Using the same argument the relations $(ii)$ in Lemma~\ref{prop:murel} can also be locally inverted for $m$ odd and $\mu_n^{(m)}>0$. When $m$ is odd and $\mu_n^{(m)}<0$, we consider the following variables:
$$
    \overline{\mu}_n^{(m)}(\nu_{n},\ldots,\nu_m)= (-1)^n \mu_n^{(m)}(\nu_{n},\ldots,\nu_m).
$$
If the polynomials $\overline{\mu}_n^{(m)}$ are solutions of Eq.~\eqref{eqn:JBeqn} with $g$ given by Eq.~\eqref{eqn:JBg}, then the polynomials $\mu_n^{(m)}$ are solution of Eq.~\eqref{eqn:JBeqn} with $-g$. 
We consider the same expressions for the inverse as for $\mu_n^{(m)}$ and replace $\mu_n^{(m)}$ by $(-1)^n \overline{\mu}_n^{(m)}$. 
\end{proof}

\begin{proposition}
    The signature of the bracket~\eqref{eqn:PBJB} is $(\lfloor N/2 \rfloor, \lceil N/2 \rceil)$ or $(\lceil N/2 \rceil, \lfloor N/2 \rfloor)$.
\end{proposition}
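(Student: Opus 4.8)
The plan is to read off the signature from a coordinate system in which the bracket is already flat, and to add the macroscopic and microscopic contributions separately. By the construction underlying the partial decoupling proposition together with Proposition~\ref{coro:JBrel}, the bracket~\eqref{eqn:PBJB}, after passing to the variables $(\rho,\psi,\bm\nu)$ where $\bm\nu=\bm\nu(\bar{\bm\mu})$ flattens the microscopic part, takes the constant-coefficient form
\begin{equation*}
    \{F,G\} = \int \left( \partial_x F_\psi G_\rho - F_\rho \partial_x G_\psi + \partial_x F_k\, g_{kl}\, G_l \right) {\rm d}x,
\end{equation*}
with $g$ the $m\times m$ antidiagonal matrix~\eqref{eqn:JBg} and $m=N-2$. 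Since the signature of a hydrodynamic bracket is invariant under any locally invertible change of field variables (the Corollary following Theorem~\ref{thm:flat}, via the congruence $\bar g = J^t g J$), it suffices to compute the signature of the constant symmetric metric attached to this flat form.

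First I would identify the block structure. After an integration by parts in the macroscopic term, the flat metric in the ordered variables $(\rho,\psi,\nu_1,\ldots,\nu_{N-2})$ is block-diagonal, with a $2\times 2$ macroscopic block $\bigl(\begin{smallmatrix} 0 & 1 \\ 1 & 0 \end{smallmatrix}\bigr)$ produced by the $(\rho,\psi)$ pairing and the $(N-2)\times(N-2)$ microscopic block $g$, and with no cross terms. The macroscopic block has eigenvalues $\pm 1$, hence signature $(1,1)$, consistent with Corollary~\ref{coro:PBksign}. By Sylvester's law of inertia the signature of a block-diagonal symmetric matrix is the sum of the signatures of its blocks, so the total signature equals $(1,1)$ added to the signature of $g$.

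Next I would compute the signature of the exchange matrix $g$ of size $m=N-2$. Since $g$ is symmetric with $g^2=\mathbb{I}_m$, its eigenvalues are $\pm 1$, and the difference between the number $n_+$ of $+1$ eigenvalues and the number $n_-$ of $-1$ eigenvalues equals $\mathrm{tr}\,g$, which counts the fixed points of the involution $k\mapsto m+1-k$. This trace is $0$ for $m$ even and $1$ for $m$ odd, giving $\mathrm{sig}(g)=(m/2,m/2)$ for $m$ even and $\mathrm{sig}(g)=\bigl((m+1)/2,(m-1)/2\bigr)$ for $m$ odd. Adding the macroscopic $(1,1)$ and substituting $m=N-2$ yields signature $(N/2,N/2)$ when $N$ is even and $\bigl((N+1)/2,(N-1)/2\bigr)$ when $N$ is odd, i.e.\ $(\lceil N/2\rceil,\lfloor N/2\rfloor)$ in both cases, the two entries coinciding when $N$ is even.

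Finally, to account for the alternative ordering $(\lfloor N/2\rfloor,\lceil N/2\rceil)$ in the statement, I would invoke the sign freedom established in the preceding proposition: replacing $\mu_n^{(m)}$ by $\overline{\mu}_n^{(m)}=(-1)^n\mu_n^{(m)}$ solves Eq.~\eqref{eqn:JBeqn} with $-g$ in place of $g$, which flips $n_+\leftrightarrow n_-$ and hence exchanges the two entries of the microscopic signature; this genuinely swaps the pair for $N$ odd and is harmless for $N$ even. I expect the only delicate point to be justifying that the signature may be read off legitimately in the $(\rho,\psi,\bm\nu)$ frame despite the $\rho$-dependent maps relating it to the original variables, which is exactly what the congruence-invariance of the signature guarantees.
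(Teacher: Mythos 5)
Your proof is correct and takes essentially the same approach as the paper: the paper's (far terser) proof likewise reads the signature off the flat form by directly computing the eigenvalues of the exchange matrix $g$ of Eq.~\eqref{eqn:JBg}, and accounts for the alternative ordering via the replacement $g\mapsto -g$ occurring when $\mu_m<0$. Your trace argument for counting the $\pm 1$ eigenvalues and your explicit addition of the macroscopic $(1,1)$ block are simply fuller write-ups of steps the paper leaves implicit.
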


\begin{proof}
    This follow from a direct computation of the eigenvalues of $g$ given by Eq.~\eqref{eqn:JBg}. If $\mu_m<0$, the metric is $-g$, so the eigenvalues are opposite to the ones for $\mu_m>0$, where $m=N-2$. 
\end{proof}
For $N$ even, we notice that the signature is the same as for the multi-delta closure of Sec.~\ref{sec:part1delta}.

The cubic polynomial generating the closure is
$$
    \mu_2(\nu_2,\ldots, \nu_{N-2})=\frac{1}{3}\sum_{2\leq i,j,k \leq N-2 \atop i+j+k=2(N-1)}\nu_i \nu_j \nu_k.
$$

\subsection{Four-field Hamiltonian closure of \texorpdfstring{Ref.~\cite{chandre_four-field_2022}}{Ref. [Chandre 2022]}}
\label{sec:part1CS}

We consider a fluid model obtained by using the first $N=4$ fluid moments derived in Ref.~\cite{chandre_four-field_2022}. We recall that the Hamiltonian closure was not obtained from a closed form for the distribution function, but by solving directly the Jacobi identity. We consider the variables $(\rho,u,S_2,S_3)$. The bracket has the form~\eqref{eqn:PBdec} with the \( 2 \times 2 \) matrices \( \alpha \) and \( \beta\) given by
$$
    \alpha = \begin{pmatrix}
    4S_3 & 5S_4 - 9S_2^2 \\
    5S_4 - 9S_2^2 & 6S_5 - 24S_2 S_3
    \end{pmatrix},
$$
and
$$
    \beta = \partial_x \begin{pmatrix}
    2S_3 & 2S_4 - 3S_2^2 \\
    3S_4 - 6S_2^2 & 3S_5 - 12S_2 S_3
    \end{pmatrix}.
$$
In Ref.~\cite{perin_hamiltonian_2015-2}, it was shown that the Jacobi identity leads to the following constraints on the closure functions $S_4$ and $S_5$:
\begin{eqnarray*}
    && 6S_5 = 12S_2 S_3 + 4S_3 \frac{\partial S_4}{\partial S_2} + \left(5S_4 - 9S_2^2\right) \frac{\partial S_4}{\partial S_3},\\
    && \frac{\partial S_5}{\partial S_2} = 4S_3 + \frac{\partial S_4}{\partial S_3} \left( \frac{\partial S_4}{\partial S_2} - 3S_2 \right),\\
    && \frac{\partial S_5}{\partial S_3} = \frac{\partial S_4}{\partial S_2} + \left( \frac{\partial S_4}{\partial S_3} \right)^2.
\end{eqnarray*}
Instead of solving explicitly these coupled partial differential equations, it is wiser to give solutions in parametric form:
$$
    S_n = S_n(\Gamma_2,\Gamma_3),
$$
for $n=2,3,4,5$. Solving for the constraints yields:
\begin{align*}
    S_2 &= \Gamma_2^3 + \Gamma_2(\kappa-\Gamma_2)\Gamma_3^2,\\
    S_3 &= \Gamma_2\Gamma_3(\kappa-\Gamma_2)(3\Gamma_2^2+(\kappa-2\Gamma_2)\Gamma_3^2),\\
    S_4 &= \frac{9\kappa}{5}\Gamma_2^5 + 6\Gamma_2^3(\kappa-\Gamma_2)^2\Gamma_3^2\\
    &+\Gamma_2(\kappa-\Gamma_2)(\kappa^2-3\Gamma_2(\kappa-\Gamma_2))\Gamma_3^4,\\
    S_5 &= 9\kappa\Gamma_2^5(k-\Gamma_2)\Gamma_3 + 10\Gamma_2^3(\kappa-\Gamma_2)^3\Gamma_3^3\\
    &+\Gamma_2(\kappa-\Gamma_2)(\kappa-2\Gamma_2)(\kappa^2-2\kappa\Gamma_2 + 2\Gamma_2^2)\Gamma_3^5,
\end{align*}
where $\kappa$ is a free parameter. We verify that, as expected, $S_n$ is a polynomial of degree $2n$ in the variables $(\Gamma_2,\Gamma_3)$, and these expressions verify the recurrence relation:
$$
    S_{n+1}=\frac{3n}{n+2}S_2 S_{n-1}+\frac{1}{n+2}\left( \frac{\partial S_n}{\partial \Gamma_2}\frac{\partial S_2}{\partial \Gamma_3}+\frac{\partial S_n}{\partial \Gamma_3}\frac{\partial S_2}{\partial \Gamma_2}\right).
$$
Under the change of variables $(\rho,u,S_2,S_3)\mapsto (\rho,u,\Gamma_2,\Gamma_3)$, the bracket becomes of the form~\eqref{eqn:PBdec_nu} with 
$$
    g=\begin{pmatrix}
    0 & 1 \\
    1 & 0
    \end{pmatrix}.
$$
Therefore the signature of the bracket is $(1,1)$. 
Once again, centering with respect to $\mu_1 = \Gamma_2 \Gamma_3$ simplifies the parametric expression of the moments:
\begin{align*}
    \mu_1 &= \Gamma_2 \Gamma_3,\\
    \mu_2 &= \Gamma_2^3 +\kappa\Gamma_2\Gamma_3^2,\\
    \mu_3 &= \kappa \Gamma_2\Gamma_3\left(3\Gamma_2^2+\kappa \Gamma_3^2 \right),\\
    \mu_4 &= \kappa\left(\frac{9\Gamma_2^5}{5}+6\kappa\Gamma_2^3\Gamma_3^2+\kappa^2\Gamma_2\Gamma_3^4 \right),\\
    \mu_5 &= \kappa^2\Gamma_2\Gamma_3\left(9\Gamma_2^4+10\kappa\Gamma_2^2\Gamma_3^2+\kappa^2\Gamma_3^4 \right),
\end{align*}
where we notice that $\mu_n$ is a homogeneous polynomial of degree $n+1$ in the two variables $(\Gamma_2,\Gamma_3)$. The parametric closure satisfies the following recursion relation:
$$
    \mu_{n+1}=\frac{1}{n+2}\left(\frac{\partial \mu_n}{\partial \Gamma_2}\frac{\partial \mu_2}{\partial \Gamma_3}+ \frac{\partial \mu_n}{\partial \Gamma_3}\frac{\partial \mu_2}{\partial \Gamma_2}\right),
$$
which is the same recursion relation as in Sec.~\ref{sec:part1JB} for the case $N=4$.
In addition, we notice that for $\kappa=0$, we recover the case of Sec.~\ref{sec:part1JB}, with in particular, $\mu_n=0$ for $n\geq 4$. However, contrary to the case in Sec.~\ref{sec:part1JB}, Casimir invariants in the variables $\mu_n$ are not expressed in a convenient and simple format, given that they require an inversion of a fourth-order polynomial (see Ref.~\cite{chandre_four-field_2022} for more details).

From the above recursion relation, we deduce that all the information on the closure is contained in the function $\mu_2$, a homogeneous cubic polynomial. For $N=4$, the following generators of the closures are found:
\begin{itemize}
    \item 4-field closure (Sec.~\ref{sec:part1CS}): $\mu_2=\Gamma_2^3+\kappa\Gamma_2\Gamma_3^2$,
    \item Burby's closure (Sec.~\ref{sec:part1JB}): $\mu_2=\Gamma_2^3/3$,
    \item multi-delta closure (Sec.~\ref{sec:part1delta}): $\mu_2=\Gamma_2\Gamma_3^2$.
\end{itemize}
Therefore the closure found in Ref.~\cite{chandre_four-field_2022} is a linear interpolation between the multi-delta closure and Burby's closure~\cite{burby_variable-moment_2023}. As in the waterbag case, this closure has the advantage of having a free parameter $\kappa$ to adjust depending on the kinetic features to be reproduced (see Ref.~\cite{chandre_four-field_2022} for more detail).

Given that the closure is composed of homogeneous polynomials $\mu_n$ of degree $n+1$, the change of variables $(\rho,u,\Gamma_2,\Gamma_3) \mapsto (\rho,u,\mu_1,\mu_2)$ yields a bracket of the form~\eqref{eqn:PBdec} with $\alpha_{nm}({\bm \mu}) = (n+m)\mu_{n+m-1}$ and  $\beta_{nm}({\bm \mu},\partial_x {\bm \mu}) = m\partial_x\mu_{n+m-1}$.

\section*{Conclusions}

In this article, we addressed the problem of finding closure functions for the fluid moments $P_n$, i.e., of the form $P_{n\geq N}=P_{n\geq N}(P_0, \ldots,P_{N-1})$ by first finding suitable centered moments $\mu_{n}$ such that the closure reduces to
$$
    \mu_{n\geq N-1}=\mu_{n\geq N-1}(\mu_1,\ldots,\mu_{N-2}),
$$
i.e., closure functions of $N-2$ variables. These closure functions are constrained by the requirement that the resulting dynamical systems should respect the Hamiltonian structure of the parent model, namely the one-dimensional Vlasov equation. We have shown that all these closure functions, when expressed parametrically with well-chosen normal variables $\nu_k$, are generated by a single moment $\mu_2$.

We found that all known hydrodynamic Hamiltonian closures for the one-dimensional Vlasov equation, as listed in Sec.~\ref{sec:part1}, share a fundamental property: There exists a convenient parameterization of the variables $\mu_{n< N-1}$ and the closure functions  $\mu_{n\geq N-1}$, i.e., 
$$
    \mu_n=\mu_n(\nu_1,\ldots,\nu_{N-2}),
$$
where $\mu_n$ is a polynomial of degree $n+1$ in the normal variables $\nu_k$. It was shown that the bracket in the $\bm\mu$ variables and the change of variables from $(\rho, u, \mu_1,\ldots, \mu_{N-2})$ to $(P_0, P_1,\ldots, P_{N-1})$ involve the specification of $\mu_n$ for $n=1,\ldots,2N-3$ (by including the closure functions $\mu_n$ appearing explicitly in the bracket). The specification of these polynomials of increasing degree from 2 to $2N-2$ determines completely the hydrodynamic Hamiltonian closure. We have demonstrated that a given closure is uniquely determined by a single moment, $\mu_2$. For all known closures, this moment is expressed as a cubic polynomial in the $N-2$ variables $\nu_k$. Expressed in the normal variables, the Hamiltonian is given by:
$$
    H[\rho, u, \nu_1,\ldots, \nu_{N-2}]= \frac{1}{2}\int \left[\rho u^2 + \rho^3 \left(\mu_2(\nu_1,\ldots, \nu_{N-2}) - \mu_1(\nu_1,\ldots, \nu_{N-2})^2 \right) + E^2 \right]\, {\rm d}x,
$$
where $\mu_1$ is a homogeneous quadratic polynomial, defined by Eq.~\eqref{eqn:mu1}. Within this framework, the internal energy density $\rho^2 (\mu_2 - \mu_1^2)/2$ is given by the difference between a cubic polynomial in the $N-2$ variables $\nu_k$ and the square of a homogeneous quadratic polynomial in $\nu_k$. Consequently, the specification of the polynomial $\mu_2$ fully characterizes the Hamiltonian system, i.e., Poisson bracket and Hamiltonian. Despite defining a Hamiltonian system, any cubic polynomial does not necessarily lead to a model linked to a reduction of the Vlasov equation. However, following up on a suggestion made in Ref.~\cite{burby_variable-moment_2023}, this family of solutions can be a good starting point for data-driven closures, using sparse identification of non-linear dynamics (SINdy~\cite{Brunton_Kutz_2022}) or sparse physics-informed discovery of empirical relations (SPIDER~\cite{Gurevich_Golden_Reinbold_Grigoriev_2024}).

In addition to providing natural variables in which the closures get the simple polynomial form, these normal variables $\nu_k$ greatly simplify the expression of the Poisson bracket:
$$
    \{F,G\}=\sum_k \epsilon_k  \int \partial_x \frac{\delta F}{\delta \nu_k} \frac{\delta G}{\delta \nu_k} \ {\rm d}x,
$$
which is very convenient from a computational point of view since it allows for a split of the Poisson bracket as a direct sum of Poisson brackets. Each Poisson bracket individually preserves all the Casimir invariants. The implementation of a split-integrator (by splitting the Poisson bracket, not necessarily the Hamiltonian) ensures the conservation of these quantities.

\begin{acknowledgments}
The authors would like to thank useful conversations with Josh Burby, Phil Morrison, Maxime Perin and Brad Shadwick. This work has been carried out within the framework of the EUROfusion Consortium, funded by the European Union via the Euratom Research and Training Programme (Grant Agreement No 101052200 — EUROfusion). Views and opinions expressed are however those of the author(s) only and do not necessarily reflect those of the European Union or the European Commission. Neither the European Union nor the European Commission can be held responsible for them. 
\end{acknowledgments}

\section*{Data Availability Statement} 
This manuscript describes an entirely theoretical approach and does not report on any primary data. Data sharing is therefore not applicable.

\section*{Author Contribution Statement}
{\bf R. Oufar}: Methodology, Formal analysis, Writing - Original Draft.
{\bf C. Chandre}: Conceptualization, Methodology, Formal analysis, Software, Writing - Original Draft.

\appendix

\section{Expression of the bracket in the \texorpdfstring{${\bm \mu}$}{mu}-variables}
\label{app:PBmu}

We perform the change of variables $(P_0,P_1,\ldots,P_{N-1})\mapsto (\rho,\psi,\mu_1,\ldots,\mu_{N-2})$, given by 
\begin{equation}
    \label{eqn:Pvsmu}
    P_n=\sum_{k=0}^n\binom{n}{k}\mu_k\rho^{k+1}\psi^{n-k},
\end{equation}
obtained from
$$
    P_n = \int (v-\psi+\psi)^n f(t,x,v)\ {\rm d}v.
$$
The relation~\eqref{eqn:Pvsmu} is inverted:
\begin{equation}
    \label{eqn:muvsP}
    \mu_n=\frac{1}{\nu_0^{n+1}}\sum_{k=0}^n\binom{n}{k}P_k(-\nu_1)^{n-k}.
\end{equation}
Following Lemma~\ref{lem:albe}, the moments $P_n$ satisfy the following relations in the variables $\bm\nu$:
\begin{subequations}
\begin{eqnarray}
    && (n+m)P_{n+m-1}=\frac{\partial P_n}{\partial\nu_k} g_{kl} \frac{\partial P_m}{\partial\nu_l},\label{eqn:nmP}\\
    && n\partial_x P_{n+m-1}=\partial_x \frac{\partial P_n}{\partial\nu_k} g_{kl} \frac{\partial P_m}{\partial\nu_l},\label{eqn:dxP}
\end{eqnarray}
\end{subequations}
for $n,m=0,\ldots N-1$, and where $g$ is the constant metric.
We assume that $g$ takes the block form
$$
    g = \begin{pmatrix}
        g_0 & \mathbb{O} \\
        \mathbb{O}^t & \overline{g}
    \end{pmatrix},
$$
where $\mathbb{O}$ is the $2\times (N-2)$ zero matrix, $\overline{g}$ is a $(N-2)\times(N-2)$ constant matrix, and 
$$
    g_0 = \begin{pmatrix}
        0 & 1\\
        1 & 0
    \end{pmatrix},
$$
to account for the partial decoupling in the variables $\nu_0=\rho$ and $\nu_1=\psi$, i.e., $\rho$ is canonically conjugate to $\psi$. Using Eq.~\eqref{eqn:nmP} for $m=0$, we have
\begin{equation}
\label{eqn:appPn1}
    \frac{\partial P_n}{\partial \nu_1}=n P_{n-1}.
\end{equation}
By differentiating Eq.~\eqref{eqn:Pvsmu} with respect to $\nu_1$, and using the equation above, we show that
$$
    \frac{\partial \mu_n}{\partial \nu_1} =0.
$$
From Eq.~\eqref{eqn:dxP} with $n=1$, we have
$$
    \left(g_{kl}\frac{\partial^2 P_1}{\partial\nu_k\nu_l}-\delta_{kl}\right) \frac{\partial P_m}{\partial {\nu_l}}=0,
$$
for all $m=0,\ldots,N-1$. Therefore, $P_1$ is a quadratic polynomial in $\bm\nu$:
$$
    P_1= \frac{1}{2}{\bm\nu}\cdot g^{-1}{\bm\nu}+ {\bf h}\cdot {\bm\nu}+ k,
$$
where ${\bf h}$ and $k$ are constants. Using a gauge on ${\bm\nu}$ (arbitrary translation in ${\bm\nu}$ by a constant vector), we can always assume that ${\bf h}={\bf 0}$. From Eq.~\eqref{eqn:nmP} with $n=m=1$, we deduce that $k=0$. As a consequence, for all closures, we can always find variables ${\bm\nu}$ such that $P_1$ is a homogeneous quadratic polynomial in these variables, i.e.,
$$
    P_1= \frac{1}{2}{\bm\nu}\cdot g^{-1}{\bm\nu}.
$$
Ensuring that $P_1$ is a homogeneous quadratic polynomial is fixing the gauge on ${\bm\nu}$. 
Since $P_1=\nu_0\nu_1+\nu_0^2\mu_1$, the centered moment $\mu_1$ can be rewritten as
$$
    \mu_1=\frac{1}{2}\frac{\nu_k}{\nu_0} \left(\bar{g}^{-1}\right)_{kl}\frac{\nu_l}{\nu_0},
$$
where the implicit summation over $k,l$ is from 2 to $N-1$. Next, from Eq.\eqref{eqn:muvsP}, we compute the partial derivative of $\mu_n$ with respect to $\nu_0$:
\begin{equation}
    \label{eqn:appmu0}
    \nu_0\frac{\partial \mu_n}{\partial \nu_0}=-(n+1)\mu_n + \frac{1}{\nu_0^{n+1}}\sum_{k=0}^n\binom{n}{k}\nu_0\frac{\partial P_k}{\partial \nu_0}(-\nu_1)^{n-k},
\end{equation}
and using Eq.~\eqref{eqn:nmP} with $m=1$:
$$
    \nu_0\frac{\partial P_k}{\partial \nu_0}=(k+1)P_k -\nu_1\frac{\partial P_k}{\partial \nu_1} -\nu_l\frac{\partial P_k}{\partial \nu_l},
$$
where the implicit summation over $l$ is from 2 to $N-1$. The operator $\nu_l{\partial}/{\partial \nu_l}$ commutes with the sum over $k$ in the right-hand side of Eq.~\eqref{eqn:appmu0}.  As for the operator $\nu_1{\partial}/{\partial \nu_1}$, we use Eq.~\eqref{eqn:appPn1}, to show that the term of the right-hand side of Eq.~\eqref{eqn:appmu0} involving $\nu_1{\partial}/{\partial \nu_1}$ cancels out with the term involving $(k+1)P_k$. Consequently, we have
$$
    \nu_0\frac{\partial \mu_n}{\partial \nu_0}=-\nu_k \frac{\partial \mu_n}{\partial \nu_k},
$$
where the implicit sum over $k$ is from 2 to $N-1$. Thus leads to the solution:
$$
    \mu_n(\nu_0, \nu_2,\ldots, \nu_{N-1})=\overline{\mu}_n \left(\frac{\nu_2}{\nu_0},\ldots,\frac{\nu_{N-1}}{\nu_0}\right).
$$
In other words, the centered moments $\mu_n$ for $n\geq 1$ are functions of the rescaled normal variables $\nu_k/\nu_0$ for $k=2,\ldots,N-1$. Next we compute 
$$
    \nu_0^2 \frac{\partial \mu_n}{\partial \nu_k}\overline{g}_{kl}\frac{\partial \mu_m}{\partial \nu_l}=\frac{1}{\nu_0^{n+m}}\sum_{q=0}^n\sum_{p=0}^m \binom{n}{q}\binom{m}{p}\frac{\partial P_q}{\partial\nu_k} \overline{g}_{kl} \frac{\partial P_p}{\partial\nu_l}(-\nu_1)^{n+m-q-p},
$$
for $n,m\geq 2$ and where the implicit sums over $k$ and $l$ are from 2 to $N-1$.
Using Eq.~\eqref{eqn:nmP}, we have
$$
\frac{\partial P_n}{\partial\nu_k} g_{kl} \frac{\partial P_m}{\partial\nu_l}= \frac{\partial P_n}{\partial\nu_k} \overline{g}_{kl} \frac{\partial P_m}{\partial\nu_l} + mP_{m-1} \frac{\partial P_n}{\partial\nu_0} + nP_{n-1} \frac{\partial P_m}{\partial\nu_0},
$$
where we have used Eq.~\eqref{eqn:appPn1}. Using the Vandermonde identity, 
$$
    \sum_{q=0}^s \binom{n}{q}\binom{m}{s-q}=\binom{n+m}{s},
$$
the expression simplifies to
$$
    \frac{\partial \bar\mu_n}{\partial \bar\nu_k}\overline{g}_{kl}\frac{\partial \bar\mu_m}{\partial \bar\nu_l} = (n+m) \bar\mu_{n+m-1}-n\bar\nu_{n-1}\gamma_m-m\bar\nu_{m-1}\gamma_n,
$$
where $\gamma_n$ is given by Eq.~\eqref{eqn:gamma} and $\bar{\bm\nu}={\bm\nu}/\nu_0$. Using similar calculations, we have
$$
    \partial_x\frac{\partial \bar\mu_n}{\partial \bar\nu_k}\overline{g}_{kl}\frac{\partial \bar\mu_m}{\partial \bar\nu_l} = n \partial_x \bar\mu_{n+m-1}-n\gamma_m \partial_x \nu_{n-1}-m\nu_{m-1}\partial_x \gamma_n.
$$
As a consequence, the bracket in the $\bar{\bm\mu}$ variables is a hydrodynamic bracket of the form~\eqref{eqn:PBdec} with elements
\begin{eqnarray}
    && \bar{\alpha}_{nm}=(n+m) \bar\mu_{n+m-1}-n\bar\nu_{n-1}\gamma_m-m\bar\nu_{m-1}\gamma_n,\\
    && \bar{\beta}_{nm}=n \partial_x \bar\mu_{n+m-1}-n\gamma_m \partial_x \nu_{n-1}-m\nu_{m-1}\partial_x \gamma_n.
\end{eqnarray}

\section{Proofs of statements in \texorpdfstring{Sec.~\ref{sec:part1JB}}{Section 1}}

\subsection{Proof of Lemma~\ref{prop:murel}}
\label{proof_murel}

Since we are looking for solutions of the type defined in Lemma~\ref{prop:munu}, $\mu_n^{(m)}$ only depends on $m+1-n$ variables, e.g., $(\nu_n, \ldots,\nu_m)$, only the mention of the index of the last variable is needed, e.g., without ambiguity, $\mu_n^{(m)}(\ldots,x_l)$ is thus a shorthand for $\mu_n^{(m)}(x_{n-m+l},\ldots,x_l)$. Consequently, the identity ($i$) can be rewritten as
\begin{equation}
\label{eqn:derivk}
    \partial_k \mu_n^{(m)}(\ldots,x_{m+1-n})= n \mu_{n-1}^{(n+k-2)}(\ldots,x_{m+1-n}),
\end{equation}
where $\partial_k$ is the partial derivative of $\mu_n^{(m)}$ with respect to the $k$-th variable (here $x_k$). In particular, we notice that the derivative with respect to the last variable, i.e., for $k=m+1-n$, the variable dependencies do not change between the left and the right-hand side, and
$$
    \partial_{m+1-n} \mu_n^{(m)}(x_1,\ldots,x_{m+1-n})= n \mu_{n-1}^{(m-1)}(x_1,\ldots,x_{m+1-n}).
$$

$\bullet$ ($i$) implies ($ii$):  We assume that $(ii)$ is satisfied at all lower levels than $m$ (i.e., for $\mu_n^{(k)}$ with $k<m$), and we show that $(ii)$ is satisfied at level $m$ using $(i)$. \\
From Eq.~\eqref{eqn:derivk}, we compute the $k$-th partial derivative of $\mu_n^{(m)}$ with respect to $\nu_m$:
$$
    \frac{\partial^k}{\partial \nu_m^k}\left[\mu_n^{(m)}(\ldots,\nu_m)\right]=n(n-1)\cdots(n-k+1)\mu_{n-k}^{(m-k)}(\ldots,\nu_m),
$$
for $1\leq k\leq n$, and 
\begin{eqnarray*}
    && \frac{\partial^{n+1}}{\partial \nu_m^{n+1}}\left[\mu_n^{(m)}(\ldots,\nu_m)\right]=n! \delta_{mn},\\
    && \frac{\partial^k}{\partial \nu_m^k}\left[\mu_n^{(m)}(\ldots,\nu_m)\right]=0,
\end{eqnarray*}
for $k>n+1$, where we have used $\mu_0^{(l)}(\ldots,x_l)=x_0$. Consequently, $\mu_n^{(m)}$ is a polynomial in the variable $\nu_m$ of degree $n$ if $n<m$, and degree $m+1$ if $n=m$. By writing the polynomial using a series expansion around $\nu_m=0$, we have for $m>n$,
\begin{eqnarray*}
    \mu_n^{(m)}(\ldots,\nu_m)&=&\sum_{k=0}^n \frac{\nu_m^{n-k}}{(n-k)!} \frac{\partial^{n-k}\mu_n^{(m)}}{\partial \nu_m^{n-k}}(\ldots,\nu_{m-1},0),\\
    &=& \sum_{k=0}^n\binom{n}{k} \nu_m^{n-k} \mu_k^{(m-n+k)}(\ldots,\nu_{m-1},0).
\end{eqnarray*} 
From $(ii)$ at lower levels, i.e., for $k<n$, we have by taking $\nu_m=0$:
$$
    \mu_k^{(m-n+k)}(\ldots,\nu_{m-1},0)=\mu_k^{(m-n-1)}(\ldots,\nu_{m-1}).
$$
In order to prove that this identity is also true for $k=n$, we differentiate with respect to $\nu_l$ for $l=n,\ldots,m-1$, and show that
$$
    \frac{\partial}{\partial \nu_l}\mu_n^{(m)}(\ldots,\nu_{m-1},0)=\frac{\partial}{\partial \nu_l}\mu_n^{(m-n-1)}(\ldots,\nu_{m-1}),
$$
since this identity is equivalent to
$$
    \mu_{n-1}^{(l-1)}(\ldots,\nu_{m-1},0)=\mu_{n-1}^{(l-n-1)}(\ldots,\nu_{m-1}),
$$
which is true for all $l=n,\ldots,m-1$, given $(ii)$ at lower levels. Assuming that $\mu_n^{(m)}({\bf 0})=0$, we deduce that 
$$
    \mu_n^{(m)}(\ldots,\nu_{m-1},0)=\mu_n^{(m-n-1)}(\ldots,\nu_{m-1}).
$$
The identity $(ii)$ follows for $n<m$. As for $n=m$, since
$$
    \frac{\partial}{\partial \nu_m}\left[\mu_m^{(m)}(\nu_m)\right] = m\mu_{m-1}^{(m-1)}(\nu_{m}),
$$
and $\nu_0^{(0)}(\nu_m)=0$, we deduce that $\nu_m^{(m)}(\nu_{m})=\nu_m^{m+1}/(m+1)$ by induction.

$\bullet$ ($ii$) implies ($i$): \\
We assume that ($ii$) is satisfied for $\mu_n^{(p)}$ where $p=1,\ldots,m-1$ (and $n=1,\ldots,p$). For $k=m$, the partial derivative of $\mu_n^{(m)}$ with respect to $\nu_m$ becomes
\begin{eqnarray*}
    \frac{\partial \mu_n^{(m)}}{\partial \nu_m}(\ldots,\nu_m)&=& \sum_{k=0}^{n-1} \binom{n}{k} (n-k) \nu_m^{n-k-1} \mu_k^{(m - n - 1)}(\ldots, \nu_{m-1}),\\
    &=& n  \sum_{k=0}^{n-1} \binom{n-1}{k} \nu_m^{n-k-1} \mu_k^{(m - n - 1)}( \ldots, \nu_{m-1}),\\
    &=& n\mu_{n-1}^{(m-1)}(\ldots,\nu_m),
\end{eqnarray*}
where we have used $(n-k)\binom{n}{k}=n\binom{n-1}{k}$.
Next, for $k<m$, the partial derivative of $\mu_n^{(m)}$ with respect to $\nu_l$ is given by
\begin{eqnarray*}
    \frac{\partial \mu_n^{(m)}}{\partial \nu_l}(\ldots,\nu_m)&=& \sum_{k=0}^{n} \binom{n}{k} \nu_m^{n-k} \frac{\partial}{\partial \nu_l}\left[\mu_k^{(m - n - 1)}( \ldots, \nu_{m-1})\right],\\
    &=&  \sum_{k=0}^{n} k \binom{n}{k} \nu_m^{n-k} \mu_{k-1}^{(l - n - 1)}( \ldots, \nu_{m-1}),\\
    &=& n\mu_{n-1}^{(l-1)}(\ldots,\nu_m),
\end{eqnarray*}
where we have used $(i)$ at strictly lower orders in the form of Eq.~\eqref{eqn:derivk}, and $k\binom{n}{k}=n \binom{n-1}{k-1}$.

\subsection{Proof of Proposition~\ref{coro:JBrel}}
\label{proof_JBrel}

Given the functional dependencies of $\mu_k^{(m)}$, we already know that these identities are satisfied for $k+l>m+1$ at any level $m$ (see Lemma~\ref{prop:munu}). 

    $\bullet$ Proof of Eq.~\eqref{eqn:JBeqn_a}:

From $(i)$ of Lemma~\ref{prop:murel}, we obtain
$$
    \frac{\partial \mu_k^{(m)}}{\partial \nu_j}\frac{\partial \mu_l^{(m)}}{\partial \nu_{m+1-j}}=k l \sum_{j=k}^{m-l+1} \mu_{k-1}^{(j-1)}(\ldots,\nu_m) \mu_{l-1}^{(m-j)}(\ldots,\nu_m).
$$
We extend the summation to all $j$ by assuming that $\mu_k^{(j)}=0$ if $k>j$ or $j<0$. Below the index $j$ is implicitly summed. By using $(ii)$ of Lemma~\ref{prop:murel}, the summation is further expanded to
\begin{eqnarray*}
    \frac{\partial \mu_k^{(m)}}{\partial \nu_j}\frac{\partial \mu_l^{(m)}}{\partial \nu_{m+1-j}}&=&k l \sum_{k',l'} \binom{k-1}{k'}\nu_m^{k-1-k'}\mu_{k'}^{(j-k-1)}(\ldots,\nu_{m-1})\binom{l-1}{l'}\nu_m^{l-1-l'}\mu_{l'}^{(m-j-l)}(\ldots,\nu_{m-1}),\\
    &=& k l \sum_{k',l'}\binom{k-1}{k'}\binom{l-1}{l'}\nu_m^{k-1-k'}\nu_m^{l-1-l'} \mu_{k'}^{(j-k-1)}(\ldots,\nu_{m-1}) \mu_{l'}^{(m-j-l)}(\ldots,\nu_{m-1}).
\end{eqnarray*}
Next we use $(ii)$ at levels strictly smaller than $m$:
$$
    \mu_{k'}^{(j-k-1)} \mu_{l'}^{(m-j-l)}=\frac{k'+l'+2}{(k'+1)(l'+1)}\mu_{k'+l'+1}^{(m-k-l)}.
$$
Using $k\binom{k-1}{k'}=(k'+1)\binom{k}{k'+1}$, we have
$$
    \frac{\partial \mu_k^{(m)}}{\partial \nu_j}\frac{\partial \mu_l^{(m)}}{\partial \nu_{m+1-j}}=\sum_{k'}\sum_s \binom{k}{k'+1}\binom{l}{s-k'+1}\nu_m^{k+l-s-2}(s+2)\mu_{s+1}^{m-k-l}.
$$
We simplify the product of the binomial coefficients using the Vandermonde identity
$$
\sum_{k'}\binom{k}{k'+1}\binom{l}{s-k'+1}=\binom{k+l}{s+2}, 
$$
to get
$$
    \frac{\partial \mu_k^{(m)}}{\partial \nu_j}\frac{\partial \mu_l^{(m)}}{\partial \nu_{m+1-j}} = \sum_{s} (s+2) \binom{k+l}{s+2}\nu_m^{k+l-s-2}\mu_{s+1}^{(m-k-l)},
$$
which is equal to $(k+l)\mu_{k+l-1}^{(m)}$ since $(s+2) \binom{k+l}{s+2}=(k+l) \binom{k+l-1}{s+1}$. 

To initiate the induction, we prove that the identity~\eqref{eqn:JBeqn_a} is trivially satisfied at level $m=1$ where there is only one non-zero polynomial, $\nu_1^{(1)}(\nu_1)=\nu_1^2/2$. 

    $\bullet$ Proof of Eq.~\eqref{eqn:JBeqn_b}:

This proof follows the same lines as for Eq.~\eqref{eqn:JBeqn_a}. 
Proving Eq.~\eqref{eqn:JBeqn_b} at level $m$ is equivalent to proving:
$$
    \frac{\partial^2 \mu_k^{(m)}}{\partial \nu_n\partial \nu_j}\frac{\partial \mu_l^{(m)}}{\partial \nu_{m+1-j}}=k\frac{\partial \mu_{k+l-1}^{(m)}}{\partial \nu_{n}}.
$$
We make use of the identities $(i)$ of Lemma~\ref{prop:murel}, from which we deduce:
\begin{eqnarray*}
    && \frac{\partial}{\partial\nu_n}\left[\mu_k^{(m)}(\ldots,\nu_m) \right]=k\mu_{k-1}^{(n-1)}(\ldots,\nu_m),\\
    && \frac{\partial^2}{\partial\nu_n\partial \nu_j}\left[\mu_k^{(m)}(\ldots,\nu_m) \right]=k(k-1)\mu_{k-2}^{(n-2+j-m)}(\ldots,\nu_m),\\
    && \frac{\partial}{\partial\nu_{m+1-j}}\left[\mu_l^{(m)}(\ldots,\nu_m) \right]=l\mu_{l-1}^{(m-j)}(\ldots,\nu_m).
\end{eqnarray*}
Furthermore, we use $(ii)$ from Lemma~\ref{prop:murel}, i.e., 
$$
    \mu_{k-2}^{(n-2+j-m)} \mu_{l-1}^{(m-j)}=\frac{k+l-1}{(k-1)l}\mu_{k+l-2}^{(n-1)}.
$$
Consequently,
\begin{eqnarray*}
    \frac{\partial^2 \mu_k^{(m)}}{\partial \nu_n\partial \nu_j}\frac{\partial \mu_l^{(m)}}{\partial \nu_{m+1-j}}&=&k(k+l-1)\mu_{k+l-2}^{(n-1)}, \\
    &=& k\frac{\partial}{\partial\nu_n}\left[\mu_{k+l-1}^{(m)}(\ldots,\nu_m) \right],
\end{eqnarray*}
where we have used $(i)$ of Lemma~\ref{prop:murel}. 

To initiate the recursion, we also consider the lowest level $m=1$ where the identity~\eqref{eqn:JBeqn_b} is trivially satisfied.

\section{Python code for the computation of Casimir invariants associated with Burby's closure~\texorpdfstring{\cite{burby_variable-moment_2023}}{[Burby, 2023]}}
\label{app:code}

\begin{lstlisting}

import sympy as sp

def compute_mu(n, l, *args):
    if len(args) == 0 or l < 0 or l > n:
        return 0
    if len(args) != n - l + 1:
        raise ValueError(f"Dimension of mu({n},{l}) should be {n - l + 1} 
                            and not {len(args)}")
    mu = args[0] * args[-1]**l if len(args)>=2 else args[0]**(l + 1) / (l + 1)
    for k in range(1, l + 1):
        mu += sp.binomial(l, k) * args[-1]**(l - k)\ 
                * compute_mu(n - l - 1, k, *args[k:-1])
    return mu

def compute_vec_mu(n, *args):
    mu = [None] * n
    for l in range(n):
        mu[l] = sp.simplify(compute_mu(n, l + 1, *args[l:]))
    return mu

def inverse_mu(n, *args):
    x_ = sp.Matrix(sp.symbols(f"y1:{n+1}"))
    x_[-1] = ((n + 1) * args[-1])**sp.Rational(1, n + 1)
    for l in range(2, n + 1):
        x_[-l] = args[-l] - sp.simplify(compute_mu(n, n - l + 1, *x_[-l:])\ 
                                        - x_[-l] * x_[-1]**(n - l + 1)) 
        x_[-l] *= x_[-1]**(l - 1 - n)
    return [eq for row in sp.simplify(x_).tolist() for eq in row]

def print_var(vars, **kargs):
    output, name = kargs.get("output", "normal"), kargs.get("name", "x")
    vars = [vars] if isinstance(vars, sp.Basic) else vars
    for var in vars:
        if output == "normal":
            print(var)
        elif output == "latex":
            print(sp.latex(var).replace("x", name))

m = 3
x = sp.symbols(f"x1:{m + 1}")
print_var(compute_vec_mu(m, *x))
print_var(compute_vec_mu(m, *x), output="latex", name=r"\nu")
print_var(inverse_mu(m, *x))
print_var(inverse_mu(m, *x), output="latex", name=r"\mu")
\end{lstlisting}

The first levels $m$ are given below:
\begin{itemize}
    \item $m=2$:
        \begin{eqnarray*}
            && \mu_1^{(2)} = \nu_{1} \nu_{2},\\
            && \mu_2^{(2)} = \frac{\nu_{2}^{3}}{3}.
        \end{eqnarray*}
    \item $m=3$:
        \begin{eqnarray*}
            && \mu_1^{(3)} = \nu_{1} \nu_{3} + \frac{\nu_{2}^{2}}{2},\\
            && \mu_2^{(3)} = \nu_{2} \nu_{3}^{2},\\
            && \mu_3^{(3)} = \frac{\nu_{3}^{4}}{4}.
        \end{eqnarray*}
    \item $m=4$:
        \begin{eqnarray*}
            && \mu_1^{(4)} = \nu_{1} \nu_{4} + \nu_{2} \nu_{3},\\
            && \mu_2^{(4)} =  \nu_{2} \nu_{4}^2 + \nu_{3}^{2}\nu_{4},\\
            && \mu_3^{(4)} = \nu_{3} \nu_{4}^{3},\\
            && \mu_4^{(4)} = \frac{\nu_{4}^{5}}{5}.
        \end{eqnarray*}
    \item $m=5$:
        \begin{eqnarray*}
            && \mu_1^{(5)} = \nu_{1} \nu_{5} + \nu_{2} \nu_{4} + \frac{\nu_{3}^{2}}{2},\\
            && \mu_2^{(5)} = \nu_{2} \nu_{5}^{2} + 2 \nu_{3} \nu_{4} \nu_{5} + \frac{\nu_{4}^{3}}{3},\\
            && \mu_3^{(5)} =  \nu_{3} \nu_{5}^3 + \frac{3}{2} \nu_{4}^{2}\nu_{5}^{2},\\
            && \mu_4^{(5)} = \nu_{4} \nu_{5}^{4},\\
            && \mu_5^{(5)} = \frac{\nu_{5}^{6}}{6}.
        \end{eqnarray*}
\end{itemize}
Their inversion provide the explicit expression of the density of the Casimir invariants $C=\int \rho \nu_n({\bm\mu}) {\rm d} x$: 
\begin{itemize}
    \item $m=2$:
        \begin{eqnarray*}
            && \nu_1^{(2)} = \frac{3^{\frac{2}{3}} \mu_{1}}{3 \sqrt[3]{\mu_{2}}},\\
            && \nu_2^{(2)} = \sqrt[3]{3} \sqrt[3]{\mu_{2}}.
        \end{eqnarray*}
    \item $m=3$:
        \begin{eqnarray*}
            && \nu_1^{(3)} = \frac{\sqrt{2} \left(8 \mu_{1} \mu_{3} - \mu_{2}^{2}\right)}{16 \mu_{3}^{\frac{5}{4}}},\\
            && \nu_2^{(3)} = \frac{\mu_{2}}{2 \sqrt{\mu_{3}}},\\
            && \nu_3^{(3)} = \sqrt{2} \sqrt[4]{\mu_{3}}.
        \end{eqnarray*}
    \item $m=4$:
        \begin{eqnarray*}
            && \nu_1^{(4)} = \frac{5^{\frac{4}{5}} \left(25 \mu_{1} \mu_{4}^{2} - \mu_{3} \left(5 \mu_{2} \mu_{4} - \mu_{3}^{2}\right)\right)}{125 \mu_{4}^{\frac{11}{5}}},\\
            && \nu_2^{(4)} = \frac{5^{\frac{3}{5}} \left(5 \mu_{2} \mu_{4} - \mu_{3}^{2}\right)}{25 \mu_{4}^{\frac{7}{5}}} ,\\
            && \nu_3^{(4)} = \frac{5^{\frac{2}{5}} \mu_{3}}{5 \mu_{4}^{\frac{3}{5}}},\\
            && \nu_4^{(4)} = \sqrt[5]{5} \sqrt[5]{\mu_{4}}.
        \end{eqnarray*}
    \item $m=5$:
        \begin{eqnarray*}
            && \nu_1^{(5)} = \frac{6^{\frac{5}{6}} \left(5184 \mu_{1} \mu_{5}^{3} - 864 \mu_{2} \mu_{4} \mu_{5}^{2} - 432 \mu_{3}^{2} \mu_{5}^{2} + 504 \mu_{3} \mu_{4}^{2} \mu_{5} - 91 \mu_{4}^{4}\right)}{31104 \mu_{5}^{\frac{19}{6}}},\\
            && \nu_2^{(5)} = \frac{6^{\frac{2}{3}} \left(27 \mu_{2} \mu_{5}^{2} - \mu_{4}  \left(9 \mu_{3} \mu_{5} - 2 \mu_{4}^{2}\right)\right)}{162 \mu_{5}^{\frac{7}{3}}},\\
            && \nu_3^{(5)} =  \frac{\sqrt{6} \left(4 \mu_{3} \mu_{5} - \mu_{4}^{2}\right)}{24 \mu_{5}^{\frac{3}{2}}},\\
            && \nu_4^{(5)} = \frac{\sqrt[3]{6} \mu_{4}}{6 \mu_{5}^{\frac{2}{3}}},\\
            && \nu_5^{(5)} = \sqrt[6]{6} \sqrt[6]{\mu_{5}}.
        \end{eqnarray*}
\end{itemize}


\begin{thebibliography}{46}%
\makeatletter
\providecommand \@ifxundefined [1]{%
 \@ifx{#1\undefined}
}%
\providecommand \@ifnum [1]{%
 \ifnum #1\expandafter \@firstoftwo
 \else \expandafter \@secondoftwo
 \fi
}%
\providecommand \@ifx [1]{%
 \ifx #1\expandafter \@firstoftwo
 \else \expandafter \@secondoftwo
 \fi
}%
\providecommand \natexlab [1]{#1}%
\providecommand \enquote  [1]{``#1''}%
\providecommand \bibnamefont  [1]{#1}%
\providecommand \bibfnamefont [1]{#1}%
\providecommand \citenamefont [1]{#1}%
\providecommand \href@noop [0]{\@secondoftwo}%
\providecommand \href [0]{\begingroup \@sanitize@url \@href}%
\providecommand \@href[1]{\@@startlink{#1}\@@href}%
\providecommand \@@href[1]{\endgroup#1\@@endlink}%
\providecommand \@sanitize@url [0]{\catcode `\\12\catcode `\$12\catcode `\&12\catcode `\#12\catcode `\^12\catcode `\_12\catcode `\%12\relax}%
\providecommand \@@startlink[1]{}%
\providecommand \@@endlink[0]{}%
\providecommand \url  [0]{\begingroup\@sanitize@url \@url }%
\providecommand \@url [1]{\endgroup\@href {#1}{\urlprefix }}%
\providecommand \urlprefix  [0]{URL }%
\providecommand \Eprint [0]{\href }%
\providecommand \doibase [0]{https://doi.org/}%
\providecommand \selectlanguage [0]{\@gobble}%
\providecommand \bibinfo  [0]{\@secondoftwo}%
\providecommand \bibfield  [0]{\@secondoftwo}%
\providecommand \translation [1]{[#1]}%
\providecommand \BibitemOpen [0]{}%
\providecommand \bibitemStop [0]{}%
\providecommand \bibitemNoStop [0]{.\EOS\space}%
\providecommand \EOS [0]{\spacefactor3000\relax}%
\providecommand \BibitemShut  [1]{\csname bibitem#1\endcsname}%
\let\auto@bib@innerbib\@empty
\bibitem [{\citenamefont {Hazeltine}\ and\ \citenamefont {Waelbroeck}(2004)}]{Hazeltine2004}%
  \BibitemOpen
  \bibfield  {author} {\bibinfo {author} {\bibfnamefont {R.~D.}\ \bibnamefont {Hazeltine}}\ and\ \bibinfo {author} {\bibfnamefont {F.~L.}\ \bibnamefont {Waelbroeck}},\ }\href@noop {} {\emph {\bibinfo {title} {The Framework of Plasma Physics}}},\ Frontiers in Physics\ (\bibinfo  {publisher} {Westview Press},\ \bibinfo {address} {Boulder, Colorado},\ \bibinfo {year} {2004})\BibitemShut {NoStop}%
\bibitem [{\citenamefont {Braginskii}(1965)}]{Braginskii_1965}%
  \BibitemOpen
  \bibfield  {author} {\bibinfo {author} {\bibfnamefont {S.~I.}\ \bibnamefont {Braginskii}},\ }\href@noop {} {\bibfield  {journal} {\bibinfo  {journal} {Reviews of Plasma Physics}\ }\textbf {\bibinfo {volume} {1}},\ \bibinfo {pages} {205} (\bibinfo {year} {1965})}\BibitemShut {NoStop}%
\bibitem [{\citenamefont {Ott}\ and\ \citenamefont {Sudan}(1969)}]{Ott_1969}%
  \BibitemOpen
  \bibfield  {author} {\bibinfo {author} {\bibfnamefont {E.}~\bibnamefont {Ott}}\ and\ \bibinfo {author} {\bibfnamefont {R.~N.}\ \bibnamefont {Sudan}},\ }\href {https://doi.org/10.1063/1.1692358} {\bibfield  {journal} {\bibinfo  {journal} {The Physics of Fluids}\ }\textbf {\bibinfo {volume} {12}},\ \bibinfo {pages} {2388} (\bibinfo {year} {1969})}\BibitemShut {NoStop}%
\bibitem [{\citenamefont {Hazeltine}\ \emph {et~al.}(1985)\citenamefont {Hazeltine}, \citenamefont {Kotschenreuther},\ and\ \citenamefont {Morrison}}]{Hazeltine_1985}%
  \BibitemOpen
  \bibfield  {author} {\bibinfo {author} {\bibfnamefont {R.~D.}\ \bibnamefont {Hazeltine}}, \bibinfo {author} {\bibfnamefont {M.}~\bibnamefont {Kotschenreuther}},\ and\ \bibinfo {author} {\bibfnamefont {P.~J.}\ \bibnamefont {Morrison}},\ }\href {https://doi.org/10.1063/1.865255} {\bibfield  {journal} {\bibinfo  {journal} {The Physics of Fluids}\ }\textbf {\bibinfo {volume} {28}},\ \bibinfo {pages} {2466} (\bibinfo {year} {1985})}\BibitemShut {NoStop}%
\bibitem [{\citenamefont {Hammett}\ and\ \citenamefont {Perkins}(1990)}]{Hammett_1990}%
  \BibitemOpen
  \bibfield  {author} {\bibinfo {author} {\bibfnamefont {G.~W.}\ \bibnamefont {Hammett}}\ and\ \bibinfo {author} {\bibfnamefont {F.~W.}\ \bibnamefont {Perkins}},\ }\href {https://doi.org/10.1103/PhysRevLett.64.3019} {\bibfield  {journal} {\bibinfo  {journal} {Phys. Rev. Lett.}\ }\textbf {\bibinfo {volume} {64}},\ \bibinfo {pages} {3019} (\bibinfo {year} {1990})}\BibitemShut {NoStop}%
\bibitem [{\citenamefont {Hammett}\ \emph {et~al.}(1993)\citenamefont {Hammett}, \citenamefont {Beer}, \citenamefont {Dorland}, \citenamefont {Cowley},\ and\ \citenamefont {Smith}}]{Hammett_1993}%
  \BibitemOpen
  \bibfield  {author} {\bibinfo {author} {\bibfnamefont {G.~W.}\ \bibnamefont {Hammett}}, \bibinfo {author} {\bibfnamefont {M.~A.}\ \bibnamefont {Beer}}, \bibinfo {author} {\bibfnamefont {W.}~\bibnamefont {Dorland}}, \bibinfo {author} {\bibfnamefont {S.~C.}\ \bibnamefont {Cowley}},\ and\ \bibinfo {author} {\bibfnamefont {S.~A.}\ \bibnamefont {Smith}},\ }\href {https://doi.org/10.1088/0741-3335/35/8/006} {\bibfield  {journal} {\bibinfo  {journal} {Plasma Physics and Controlled Fusion}\ }\textbf {\bibinfo {volume} {35}},\ \bibinfo {pages} {973} (\bibinfo {year} {1993})}\BibitemShut {NoStop}%
\bibitem [{\citenamefont {Sugama}\ \emph {et~al.}(2003)\citenamefont {Sugama}, \citenamefont {Watanabe},\ and\ \citenamefont {Horton}}]{Sugama_2003}%
  \BibitemOpen
  \bibfield  {author} {\bibinfo {author} {\bibfnamefont {H.}~\bibnamefont {Sugama}}, \bibinfo {author} {\bibfnamefont {T.-H.}\ \bibnamefont {Watanabe}},\ and\ \bibinfo {author} {\bibfnamefont {W.}~\bibnamefont {Horton}},\ }\href {https://doi.org/10.1063/1.1544664} {\bibfield  {journal} {\bibinfo  {journal} {Physics of Plasmas}\ }\textbf {\bibinfo {volume} {10}},\ \bibinfo {pages} {726} (\bibinfo {year} {2003})}\BibitemShut {NoStop}%
\bibitem [{\citenamefont {Passot}\ and\ \citenamefont {Sulem}(2004)}]{Passot_2004}%
  \BibitemOpen
  \bibfield  {author} {\bibinfo {author} {\bibfnamefont {T.}~\bibnamefont {Passot}}\ and\ \bibinfo {author} {\bibfnamefont {P.~L.}\ \bibnamefont {Sulem}},\ }\href {https://doi.org/10.5194/npg-11-245-2004} {\bibfield  {journal} {\bibinfo  {journal} {Nonlinear Processes in Geophysics}\ }\textbf {\bibinfo {volume} {11}},\ \bibinfo {pages} {245} (\bibinfo {year} {2004})}\BibitemShut {NoStop}%
\bibitem [{\citenamefont {Shadwick}\ \emph {et~al.}(2004)\citenamefont {Shadwick}, \citenamefont {Tarkenton},\ and\ \citenamefont {Esarey}}]{Shadwick_2004}%
  \BibitemOpen
  \bibfield  {author} {\bibinfo {author} {\bibfnamefont {B.~A.}\ \bibnamefont {Shadwick}}, \bibinfo {author} {\bibfnamefont {G.~M.}\ \bibnamefont {Tarkenton}},\ and\ \bibinfo {author} {\bibfnamefont {E.~H.}\ \bibnamefont {Esarey}},\ }\href {https://doi.org/10.1103/PhysRevLett.93.175002} {\bibfield  {journal} {\bibinfo  {journal} {Phys. Rev. Lett.}\ }\textbf {\bibinfo {volume} {93}},\ \bibinfo {pages} {175002} (\bibinfo {year} {2004})}\BibitemShut {NoStop}%
\bibitem [{\citenamefont {Shadwick}\ \emph {et~al.}(2005)\citenamefont {Shadwick}, \citenamefont {Tarkenton}, \citenamefont {Esarey},\ and\ \citenamefont {Schroeder}}]{Shadwick_2005}%
  \BibitemOpen
  \bibfield  {author} {\bibinfo {author} {\bibfnamefont {B.~A.}\ \bibnamefont {Shadwick}}, \bibinfo {author} {\bibfnamefont {G.~M.}\ \bibnamefont {Tarkenton}}, \bibinfo {author} {\bibfnamefont {E.}~\bibnamefont {Esarey}},\ and\ \bibinfo {author} {\bibfnamefont {C.~B.}\ \bibnamefont {Schroeder}},\ }\href {https://doi.org/10.1063/1.1865032} {\bibfield  {journal} {\bibinfo  {journal} {Physics of Plasmas}\ }\textbf {\bibinfo {volume} {12}},\ \bibinfo {pages} {056710} (\bibinfo {year} {2005})}\BibitemShut {NoStop}%
\bibitem [{\citenamefont {Goswami}\ \emph {et~al.}(2005)\citenamefont {Goswami}, \citenamefont {Passot},\ and\ \citenamefont {Sulem}}]{Goswami_2005}%
  \BibitemOpen
  \bibfield  {author} {\bibinfo {author} {\bibfnamefont {P.}~\bibnamefont {Goswami}}, \bibinfo {author} {\bibfnamefont {T.}~\bibnamefont {Passot}},\ and\ \bibinfo {author} {\bibfnamefont {P.~L.}\ \bibnamefont {Sulem}},\ }\href {https://doi.org/10.1063/1.2096582} {\bibfield  {journal} {\bibinfo  {journal} {Physics of Plasmas}\ }\textbf {\bibinfo {volume} {12}},\ \bibinfo {pages} {102109} (\bibinfo {year} {2005})}\BibitemShut {NoStop}%
\bibitem [{\citenamefont {Sarazin}\ \emph {et~al.}(2009)\citenamefont {Sarazin}, \citenamefont {Dif-Pradalier}, \citenamefont {Zarzoso}, \citenamefont {Garbet}, \citenamefont {Ghendrih},\ and\ \citenamefont {Grandgirard}}]{Sarazin_2009}%
  \BibitemOpen
  \bibfield  {author} {\bibinfo {author} {\bibfnamefont {Y.}~\bibnamefont {Sarazin}}, \bibinfo {author} {\bibfnamefont {G.}~\bibnamefont {Dif-Pradalier}}, \bibinfo {author} {\bibfnamefont {D.}~\bibnamefont {Zarzoso}}, \bibinfo {author} {\bibfnamefont {X.}~\bibnamefont {Garbet}}, \bibinfo {author} {\bibfnamefont {P.}~\bibnamefont {Ghendrih}},\ and\ \bibinfo {author} {\bibfnamefont {V.}~\bibnamefont {Grandgirard}},\ }\href {https://doi.org/10.1088/0741-3335/51/11/115003} {\bibfield  {journal} {\bibinfo  {journal} {Plasma Physics and Controlled Fusion}\ }\textbf {\bibinfo {volume} {51}},\ \bibinfo {pages} {115003} (\bibinfo {year} {2009})}\BibitemShut {NoStop}%
\bibitem [{\citenamefont {Shadwick}\ \emph {et~al.}(2012)\citenamefont {Shadwick}, \citenamefont {Tarkenton}, \citenamefont {Esarey},\ and\ \citenamefont {Lee}}]{Shadwick_2012}%
  \BibitemOpen
  \bibfield  {author} {\bibinfo {author} {\bibfnamefont {B.}~\bibnamefont {Shadwick}}, \bibinfo {author} {\bibfnamefont {G.}~\bibnamefont {Tarkenton}}, \bibinfo {author} {\bibfnamefont {E.}~\bibnamefont {Esarey}},\ and\ \bibinfo {author} {\bibfnamefont {F.~M.}\ \bibnamefont {Lee}},\ }\href {https://doi.org/https://doi.org/10.1016/j.cnsns.2011.05.045} {\bibfield  {journal} {\bibinfo  {journal} {Communications in Nonlinear Science and Numerical Simulation}\ }\textbf {\bibinfo {volume} {17}},\ \bibinfo {pages} {2153} (\bibinfo {year} {2012})}\BibitemShut {NoStop}%
\bibitem [{\citenamefont {Depackh}(1962)}]{depackh_water-bag_1962}%
  \BibitemOpen
  \bibfield  {author} {\bibinfo {author} {\bibfnamefont {D.~C.}\ \bibnamefont {Depackh}},\ }\href {https://doi.org/10.1080/00207216208937448} {\bibfield  {journal} {\bibinfo  {journal} {Journal of Electronics and Control}\ }\textbf {\bibinfo {volume} {13}},\ \bibinfo {pages} {417} (\bibinfo {year} {1962})}\BibitemShut {NoStop}%
\bibitem [{\citenamefont {Bertrand}\ and\ \citenamefont {Feix}(1968)}]{bertrand_non_1968}%
  \BibitemOpen
  \bibfield  {author} {\bibinfo {author} {\bibfnamefont {P.}~\bibnamefont {Bertrand}}\ and\ \bibinfo {author} {\bibfnamefont {M.~R.}\ \bibnamefont {Feix}},\ }\href {https://doi.org/10.1016/0375-9601(68)90606-3} {\bibfield  {journal} {\bibinfo  {journal} {Physics Letters A}\ }\textbf {\bibinfo {volume} {28}},\ \bibinfo {pages} {68} (\bibinfo {year} {1968})}\BibitemShut {NoStop}%
\bibitem [{\citenamefont {Gosse}\ \emph {et~al.}(2003)\citenamefont {Gosse}, \citenamefont {Jin},\ and\ \citenamefont {Li}}]{Gosse2003}%
  \BibitemOpen
  \bibfield  {author} {\bibinfo {author} {\bibfnamefont {L.}~\bibnamefont {Gosse}}, \bibinfo {author} {\bibfnamefont {S.}~\bibnamefont {Jin}},\ and\ \bibinfo {author} {\bibfnamefont {X.}~\bibnamefont {Li}},\ }\href {https://doi.org/10.1142/S0218202503003082} {\bibfield  {journal} {\bibinfo  {journal} {Mathematical Models and Methods in Applied Sciences}\ }\textbf {\bibinfo {volume} {13}},\ \bibinfo {pages} {1689} (\bibinfo {year} {2003})}\BibitemShut {NoStop}%
\bibitem [{\citenamefont {Fox}(2009)}]{fox_higher-order_2009}%
  \BibitemOpen
  \bibfield  {author} {\bibinfo {author} {\bibfnamefont {R.~O.}\ \bibnamefont {Fox}},\ }\href {https://doi.org/10.1016/j.jcp.2009.07.018} {\bibfield  {journal} {\bibinfo  {journal} {Journal of Computational Physics}\ }\textbf {\bibinfo {volume} {228}},\ \bibinfo {pages} {7771} (\bibinfo {year} {2009})}\BibitemShut {NoStop}%
\bibitem [{\citenamefont {de~Guillebon}\ and\ \citenamefont {Chandre}(2012)}]{de_guillebon_hamiltonian_2012}%
  \BibitemOpen
  \bibfield  {author} {\bibinfo {author} {\bibfnamefont {L.}~\bibnamefont {de~Guillebon}}\ and\ \bibinfo {author} {\bibfnamefont {C.}~\bibnamefont {Chandre}},\ }\href {https://doi.org/10.1016/j.physleta.2012.07.031} {\bibfield  {journal} {\bibinfo  {journal} {Physics Letters A}\ }\textbf {\bibinfo {volume} {376}},\ \bibinfo {pages} {3172} (\bibinfo {year} {2012})}\BibitemShut {NoStop}%
\bibitem [{\citenamefont {Perin}\ \emph {et~al.}(2014)\citenamefont {Perin}, \citenamefont {Chandre}, \citenamefont {Morrison},\ and\ \citenamefont {Tassi}}]{perin_higher-order_2014-1}%
  \BibitemOpen
  \bibfield  {author} {\bibinfo {author} {\bibfnamefont {M.}~\bibnamefont {Perin}}, \bibinfo {author} {\bibfnamefont {C.}~\bibnamefont {Chandre}}, \bibinfo {author} {\bibfnamefont {P.~J.}\ \bibnamefont {Morrison}},\ and\ \bibinfo {author} {\bibfnamefont {E.}~\bibnamefont {Tassi}},\ }\href {https://doi.org/10.1016/j.aop.2014.05.011} {\bibfield  {journal} {\bibinfo  {journal} {Annals of Physics}\ }\textbf {\bibinfo {volume} {348}},\ \bibinfo {pages} {50} (\bibinfo {year} {2014})}\BibitemShut {NoStop}%
\bibitem [{\citenamefont {Perin}\ \emph {et~al.}(2015{\natexlab{a}})\citenamefont {Perin}, \citenamefont {Chandre}, \citenamefont {Morrison},\ and\ \citenamefont {Tassi}}]{perin_hamiltonian_2015-2}%
  \BibitemOpen
  \bibfield  {author} {\bibinfo {author} {\bibfnamefont {M.}~\bibnamefont {Perin}}, \bibinfo {author} {\bibfnamefont {C.}~\bibnamefont {Chandre}}, \bibinfo {author} {\bibfnamefont {P.~J.}\ \bibnamefont {Morrison}},\ and\ \bibinfo {author} {\bibfnamefont {E.}~\bibnamefont {Tassi}},\ }\href {https://doi.org/10.1088/1751-8113/48/27/275501} {\bibfield  {journal} {\bibinfo  {journal} {Journal of Physics A: Mathematical and Theoretical}\ }\textbf {\bibinfo {volume} {48}},\ \bibinfo {pages} {275501} (\bibinfo {year} {2015}{\natexlab{a}})}\BibitemShut {NoStop}%
\bibitem [{\citenamefont {Chandre}\ and\ \citenamefont {Shadwick}(2022)}]{chandre_four-field_2022}%
  \BibitemOpen
  \bibfield  {author} {\bibinfo {author} {\bibfnamefont {C.}~\bibnamefont {Chandre}}\ and\ \bibinfo {author} {\bibfnamefont {B.~A.}\ \bibnamefont {Shadwick}},\ }\href {https://doi.org/10.1063/5.0102418} {\bibfield  {journal} {\bibinfo  {journal} {Physics of Plasmas}\ }\textbf {\bibinfo {volume} {29}},\ \bibinfo {pages} {102101} (\bibinfo {year} {2022})}\BibitemShut {NoStop}%
\bibitem [{\citenamefont {Morrison}(1982)}]{morrison_poisson_1982}%
  \BibitemOpen
  \bibfield  {author} {\bibinfo {author} {\bibfnamefont {P.~J.}\ \bibnamefont {Morrison}},\ }\href {https://doi.org/10.1063/1.33633} {\bibfield  {journal} {\bibinfo  {journal} {AIP Conference Proceedings}\ }\textbf {\bibinfo {volume} {88}},\ \bibinfo {pages} {13} (\bibinfo {year} {1982})}\BibitemShut {NoStop}%
\bibitem [{\citenamefont {Morrison}(1998)}]{morrison_hamiltonian_1998}%
  \BibitemOpen
  \bibfield  {author} {\bibinfo {author} {\bibfnamefont {P.~J.}\ \bibnamefont {Morrison}},\ }\href {https://doi.org/10.1103/RevModPhys.70.467} {\bibfield  {journal} {\bibinfo  {journal} {Reviews of Modern Physics}\ }\textbf {\bibinfo {volume} {70}},\ \bibinfo {pages} {467} (\bibinfo {year} {1998})}\BibitemShut {NoStop}%
\bibitem [{\citenamefont {Kupershmidt}\ and\ \citenamefont {Manin}(1978)}]{kupershmidt_equations_1978}%
  \BibitemOpen
  \bibfield  {author} {\bibinfo {author} {\bibfnamefont {B.~A.}\ \bibnamefont {Kupershmidt}}\ and\ \bibinfo {author} {\bibfnamefont {Y.~I.}\ \bibnamefont {Manin}},\ }\href {https://doi.org/10.1007/BF01077560} {\bibfield  {journal} {\bibinfo  {journal} {Functional Analysis and Its Applications}\ }\textbf {\bibinfo {volume} {12}},\ \bibinfo {pages} {20} (\bibinfo {year} {1978})}\BibitemShut {NoStop}%
\bibitem [{\citenamefont {Kupershmidt}(1987)}]{kupershmidt_hydrodynamical_1987}%
  \BibitemOpen
  \bibfield  {author} {\bibinfo {author} {\bibfnamefont {B.~A.}\ \bibnamefont {Kupershmidt}},\ }\href {https://doi.org/10.1016/0375-9601(87)90647-5} {\bibfield  {journal} {\bibinfo  {journal} {Physics Letters A}\ }\textbf {\bibinfo {volume} {121}},\ \bibinfo {pages} {167} (\bibinfo {year} {1987})}\BibitemShut {NoStop}%
\bibitem [{\citenamefont {Dubrovin}\ and\ \citenamefont {Novikov}(1983)}]{dubrovin_hamiltonian_1983}%
  \BibitemOpen
  \bibfield  {author} {\bibinfo {author} {\bibfnamefont {B.~A.}\ \bibnamefont {Dubrovin}}\ and\ \bibinfo {author} {\bibfnamefont {S.~P.}\ \bibnamefont {Novikov}},\ }\href@noop {} {\bibfield  {journal} {\bibinfo  {journal} {Soviet Mathematics. Doklady}\ }\textbf {\bibinfo {volume} {27}},\ \bibinfo {pages} {665} (\bibinfo {year} {1983})}\BibitemShut {NoStop}%
\bibitem [{\citenamefont {Mokhov}(1988)}]{mokhov_dubrovin-novikov_1988}%
  \BibitemOpen
  \bibfield  {author} {\bibinfo {author} {\bibfnamefont {O.~I.}\ \bibnamefont {Mokhov}},\ }\href {https://doi.org/10.1007/BF01077434} {\bibfield  {journal} {\bibinfo  {journal} {Functional Analysis and Its Applications}\ }\textbf {\bibinfo {volume} {22}},\ \bibinfo {pages} {336} (\bibinfo {year} {1988})}\BibitemShut {NoStop}%
\bibitem [{\citenamefont {Dubrovin}\ and\ \citenamefont {Novikov}(1989)}]{dubrovin_hydrodynamics_1989}%
  \BibitemOpen
  \bibfield  {author} {\bibinfo {author} {\bibfnamefont {B.~A.}\ \bibnamefont {Dubrovin}}\ and\ \bibinfo {author} {\bibfnamefont {S.~P.}\ \bibnamefont {Novikov}},\ }\href {https://doi.org/10.1070/RM1989v044n06ABEH002300} {\bibfield  {journal} {\bibinfo  {journal} {Russian Mathematical Surveys}\ }\textbf {\bibinfo {volume} {44}},\ \bibinfo {pages} {35} (\bibinfo {year} {1989})}\BibitemShut {NoStop}%
\bibitem [{\citenamefont {Mokhov}(2008)}]{mokhov_classification_2008}%
  \BibitemOpen
  \bibfield  {author} {\bibinfo {author} {\bibfnamefont {O.~I.}\ \bibnamefont {Mokhov}},\ }\href {https://doi.org/10.1007/s10688-008-0004-8} {\bibfield  {journal} {\bibinfo  {journal} {Functional Analysis and Its Applications}\ }\textbf {\bibinfo {volume} {42}},\ \bibinfo {pages} {33} (\bibinfo {year} {2008})}\BibitemShut {NoStop}%
\bibitem [{\citenamefont {Perin}\ \emph {et~al.}(2015{\natexlab{b}})\citenamefont {Perin}, \citenamefont {Chandre}, \citenamefont {Morrison},\ and\ \citenamefont {Tassi}}]{perin_hamiltonian_2015-1}%
  \BibitemOpen
  \bibfield  {author} {\bibinfo {author} {\bibfnamefont {M.}~\bibnamefont {Perin}}, \bibinfo {author} {\bibfnamefont {C.}~\bibnamefont {Chandre}}, \bibinfo {author} {\bibfnamefont {P.~J.}\ \bibnamefont {Morrison}},\ and\ \bibinfo {author} {\bibfnamefont {E.}~\bibnamefont {Tassi}},\ }\href {https://doi.org/10.1063/1.4930097} {\bibfield  {journal} {\bibinfo  {journal} {Physics of Plasmas}\ }\textbf {\bibinfo {volume} {22}},\ \bibinfo {pages} {092309} (\bibinfo {year} {2015}{\natexlab{b}})}\BibitemShut {NoStop}%
\bibitem [{\citenamefont {Burby}(2023)}]{burby_variable-moment_2023}%
  \BibitemOpen
  \bibfield  {author} {\bibinfo {author} {\bibfnamefont {J.~W.}\ \bibnamefont {Burby}},\ }\href {https://doi.org/10.1038/s41598-023-45416-5} {\bibfield  {journal} {\bibinfo  {journal} {Scientific Reports}\ }\textbf {\bibinfo {volume} {13}},\ \bibinfo {pages} {18286} (\bibinfo {year} {2023})}\BibitemShut {NoStop}%
\bibitem [{\citenamefont {Yuan}\ and\ \citenamefont {Fox}(2011)}]{yuan_conditional_2011}%
  \BibitemOpen
  \bibfield  {author} {\bibinfo {author} {\bibfnamefont {C.}~\bibnamefont {Yuan}}\ and\ \bibinfo {author} {\bibfnamefont {R.~O.}\ \bibnamefont {Fox}},\ }\href {https://doi.org/10.1016/j.jcp.2011.07.020} {\bibfield  {journal} {\bibinfo  {journal} {Journal of Computational Physics}\ }\textbf {\bibinfo {volume} {230}},\ \bibinfo {pages} {8216} (\bibinfo {year} {2011})}\BibitemShut {NoStop}%
\bibitem [{\citenamefont {Chalons}\ \emph {et~al.}(2012)\citenamefont {Chalons}, \citenamefont {Kah},\ and\ \citenamefont {Massot}}]{chalons_beyond_2012}%
  \BibitemOpen
  \bibfield  {author} {\bibinfo {author} {\bibfnamefont {C.}~\bibnamefont {Chalons}}, \bibinfo {author} {\bibfnamefont {D.}~\bibnamefont {Kah}},\ and\ \bibinfo {author} {\bibfnamefont {M.}~\bibnamefont {Massot}},\ }\href {https://doi.org/10.4310/CMS.2012.v10.n4.a11} {\bibfield  {journal} {\bibinfo  {journal} {Communications in Mathematical Sciences}\ }\textbf {\bibinfo {volume} {10}},\ \bibinfo {pages} {1241} (\bibinfo {year} {2012})}\BibitemShut {NoStop}%
\bibitem [{\citenamefont {Cheng}\ and\ \citenamefont {Rossmanith}(2014)}]{cheng_class_2014}%
  \BibitemOpen
  \bibfield  {author} {\bibinfo {author} {\bibfnamefont {Y.}~\bibnamefont {Cheng}}\ and\ \bibinfo {author} {\bibfnamefont {J.~A.}\ \bibnamefont {Rossmanith}},\ }\href {https://doi.org/10.1016/j.cam.2013.10.041} {\bibfield  {journal} {\bibinfo  {journal} {Journal of Computational and Applied Mathematics}\ } \textbf {\bibinfo {volume} {262}},\ \bibinfo {pages} {384} (\bibinfo {year} {2014})}\BibitemShut {NoStop}%
\bibitem [{\citenamefont {Inglebert}\ \emph {et~al.}(2011)\citenamefont {Inglebert}, \citenamefont {Ghizzo}, \citenamefont {Reveille}, \citenamefont {Del~Sarto}, \citenamefont {Bertrand},\ and\ \citenamefont {Califano}}]{inglebert_multi-stream_2011}%
  \BibitemOpen
  \bibfield  {author} {\bibinfo {author} {\bibfnamefont {A.}~\bibnamefont {Inglebert}}, \bibinfo {author} {\bibfnamefont {A.}~\bibnamefont {Ghizzo}}, \bibinfo {author} {\bibfnamefont {T.}~\bibnamefont {Reveille}}, \bibinfo {author} {\bibfnamefont {D.}~\bibnamefont {Del~Sarto}}, \bibinfo {author} {\bibfnamefont {P.}~\bibnamefont {Bertrand}},\ and\ \bibinfo {author} {\bibfnamefont {F.}~\bibnamefont {Califano}},\ }\href {https://doi.org/10.1209/0295-5075/95/45002} {\bibfield  {journal} {\bibinfo  {journal} {Europhysics Letters}\ }\textbf {\bibinfo {volume} {95}},\ \bibinfo {pages} {45002} (\bibinfo {year} {2011})}\BibitemShut {NoStop}%
\bibitem [{\citenamefont {Inglebert}\ \emph {et~al.}(2012{\natexlab{a}})\citenamefont {Inglebert}, \citenamefont {Ghizzo}, \citenamefont {Reveille}, \citenamefont {Del~Sarto}, \citenamefont {Bertrand},\ and\ \citenamefont {Califano}}]{inglebert_multi-stream_2012}%
  \BibitemOpen
  \bibfield  {author} {\bibinfo {author} {\bibfnamefont {A.}~\bibnamefont {Inglebert}}, \bibinfo {author} {\bibfnamefont {A.}~\bibnamefont {Ghizzo}}, \bibinfo {author} {\bibfnamefont {T.}~\bibnamefont {Reveille}}, \bibinfo {author} {\bibfnamefont {D.}~\bibnamefont {Del~Sarto}}, \bibinfo {author} {\bibfnamefont {P.}~\bibnamefont {Bertrand}},\ and\ \bibinfo {author} {\bibfnamefont {F.}~\bibnamefont {Califano}},\ }\href {https://doi.org/10.1088/0741-3335/54/8/085004} {\bibfield  {journal} {\bibinfo  {journal} {Plasma Physics and Controlled Fusion}\ }\textbf {\bibinfo {volume} {54}},\ \bibinfo {pages} {085004} (\bibinfo {year} {2012}{\natexlab{a}})}\BibitemShut {NoStop}%
\bibitem [{\citenamefont {Inglebert}\ \emph {et~al.}(2012{\natexlab{b}})\citenamefont {Inglebert}, \citenamefont {Ghizzo}, \citenamefont {Reveille}, \citenamefont {Bertrand},\ and\ \citenamefont {Califano}}]{inglebert_electron_2012}%
  \BibitemOpen
  \bibfield  {author} {\bibinfo {author} {\bibfnamefont {A.}~\bibnamefont {Inglebert}}, \bibinfo {author} {\bibfnamefont {A.}~\bibnamefont {Ghizzo}}, \bibinfo {author} {\bibfnamefont {T.}~\bibnamefont {Reveille}}, \bibinfo {author} {\bibfnamefont {P.}~\bibnamefont {Bertrand}},\ and\ \bibinfo {author} {\bibfnamefont {F.}~\bibnamefont {Califano}},\ }\href {https://doi.org/10.1063/1.4772770} {\bibfield  {journal} {\bibinfo  {journal} {Physics of Plasmas}\ }\textbf {\bibinfo {volume} {19}},\ \bibinfo {pages} {122109} (\bibinfo {year} {2012}{\natexlab{b}})}\BibitemShut {NoStop}%
\bibitem [{\citenamefont {Antoine}\ \emph {et~al.}(2025)\citenamefont {Antoine}, \citenamefont {Ghizzo}, \citenamefont {Sarto},\ and\ \citenamefont {Deriaz}}]{antoine_multi-stream_2025}%
  \BibitemOpen
  \bibfield  {author} {\bibinfo {author} {\bibfnamefont {M.}~\bibnamefont {Antoine}}, \bibinfo {author} {\bibfnamefont {A.}~\bibnamefont {Ghizzo}}, \bibinfo {author} {\bibfnamefont {D.~D.}\ \bibnamefont {Sarto}},\ and\ \bibinfo {author} {\bibfnamefont {E.}~\bibnamefont {Deriaz}},\ }\href {https://doi.org/10.1017/S002237782510072X} {\bibfield  {journal} {\bibinfo  {journal} {Journal of Plasma Physics}\ }\textbf {\bibinfo {volume} {91}},\ \bibinfo {pages} {E131} (\bibinfo {year} {2025})}\BibitemShut {NoStop}%
\bibitem [{\citenamefont {Bertrand}\ \emph {et~al.}(1969)\citenamefont {Bertrand}, \citenamefont {Baumann},\ and\ \citenamefont {Feix}}]{bertrand_frequency_1969}%
  \BibitemOpen
  \bibfield  {author} {\bibinfo {author} {\bibfnamefont {P.}~\bibnamefont {Bertrand}}, \bibinfo {author} {\bibfnamefont {G.}~\bibnamefont {Baumann}},\ and\ \bibinfo {author} {\bibfnamefont {M.~R.}\ \bibnamefont {Feix}},\ }\href {https://doi.org/10.1016/0375-9601(69)90390-9} {\bibfield  {journal} {\bibinfo  {journal} {Physics Letters A}\ }\textbf {\bibinfo {volume} {29}},\ \bibinfo {pages} {489} (\bibinfo {year} {1969})}\BibitemShut {NoStop}%
\bibitem [{\citenamefont {Bertrand}\ and\ \citenamefont {Feix}(1976)}]{bertrand_non-linear_1976}%
  \BibitemOpen
  \bibfield  {author} {\bibinfo {author} {\bibfnamefont {P.}~\bibnamefont {Bertrand}}\ and\ \bibinfo {author} {\bibfnamefont {M.~R.}\ \bibnamefont {Feix}},\ }\href {https://doi.org/10.1088/0032-1028/18/8/007} {\bibfield  {journal} {\bibinfo  {journal} {Plasma Physics}\ }\textbf {\bibinfo {volume} {18}},\ \bibinfo {pages} {655} (\bibinfo {year} {1976})}\BibitemShut {NoStop}%
\bibitem [{\citenamefont {Yu}(2000)}]{yu_waterbag_2000}%
  \BibitemOpen
  \bibfield  {author} {\bibinfo {author} {\bibfnamefont {L.}~\bibnamefont {Yu}},\ }\href {https://doi.org/10.1088/0305-4470/33/45/309} {\bibfield  {journal} {\bibinfo  {journal} {Journal of Physics A: Mathematical and General}\ }\textbf {\bibinfo {volume} {33}},\ \bibinfo {pages} {8127} (\bibinfo {year} {2000})}\BibitemShut {NoStop}%
\bibitem [{\citenamefont {Chesnokov}\ and\ \citenamefont {Pavlov}(2012)}]{chesnokov_reductions_2012}%
  \BibitemOpen
  \bibfield  {author} {\bibinfo {author} {\bibfnamefont {A.~A.}\ \bibnamefont {Chesnokov}}\ and\ \bibinfo {author} {\bibfnamefont {M.~V.}\ \bibnamefont {Pavlov}},\ }\href {https://doi.org/10.1007/s10440-012-9749-4} {\bibfield  {journal} {\bibinfo  {journal} {Acta Applicandae Mathematicae}\ }\textbf {\bibinfo {volume} {122}},\ \bibinfo {pages} {367} (\bibinfo {year} {2012})}\BibitemShut {NoStop}%
\bibitem [{\citenamefont {Morrison}\ and\ \citenamefont {Hagstrom}(2013)}]{morrison2013-1}%
  \BibitemOpen
  \bibfield  {author} {\bibinfo {author} {\bibfnamefont {P.~J.}\ \bibnamefont {Morrison}}\ and\ \bibinfo {author} {\bibfnamefont {G.~I.}\ \bibnamefont {Hagstrom}},\ }\bibinfo {title} {Continuum {H}amiltonian {H}opf bifurcation {I}},\ in\ \href {https://doi.org/https://doi.org/10.1002/9781118577608.ch12} {\emph {\bibinfo {booktitle} {Nonlinear Physical Systems}}}\ (\bibinfo {year} {2013})\ Chap.~\bibinfo {chapter} {12}, pp.\ \bibinfo {pages} {247--281} \BibitemShut {NoStop}%
\bibitem [{\citenamefont {Hagstrom}\ and\ \citenamefont {Morrison}(2013)}]{morrison2013-2}%
  \BibitemOpen
  \bibfield  {author} {\bibinfo {author} {\bibfnamefont {G.~I.}\ \bibnamefont {Hagstrom}}\ and\ \bibinfo {author} {\bibfnamefont {P.~J.}\ \bibnamefont {Morrison}},\ }\bibinfo {title} {Continuum {H}amiltonian {H}opf bifurcation {II}},\ in\ \href {https://doi.org/https://doi.org/10.1002/9781118577608.ch13} {\emph {\bibinfo {booktitle} {Nonlinear Physical Systems}}}\ (\bibinfo {year} {2013})\ Chap.~\bibinfo {chapter} {13}, pp.\ \bibinfo {pages} {283--310} \BibitemShut {NoStop}%
\bibitem [{\citenamefont {Brunton}\ and\ \citenamefont {Kutz}(2022)}]{Brunton_Kutz_2022}%
  \BibitemOpen
  \bibfield  {author} {\bibinfo {author} {\bibfnamefont {S.~L.}\ \bibnamefont {Brunton}}\ and\ \bibinfo {author} {\bibfnamefont {J.~N.}\ \bibnamefont {Kutz}},\ }\href@noop {} {\emph {\bibinfo {title} {Data-Driven Science and Engineering: Machine Learning, Dynamical Systems, and Control}}},\ \bibinfo {edition} {2nd}\ ed.\ (\bibinfo  {publisher} {Cambridge University Press},\ \bibinfo {year} {2022})\BibitemShut {NoStop}%
\bibitem [{\citenamefont {Gurevich}\ \emph {et~al.}(2024)\citenamefont {Gurevich}, \citenamefont {Golden}, \citenamefont {Reinbold},\ and\ \citenamefont {Grigoriev}}]{Gurevich_Golden_Reinbold_Grigoriev_2024}%
  \BibitemOpen
  \bibfield  {author} {\bibinfo {author} {\bibfnamefont {D.~R.}\ \bibnamefont {Gurevich}}, \bibinfo {author} {\bibfnamefont {M.~R.}\ \bibnamefont {Golden}}, \bibinfo {author} {\bibfnamefont {P.~A.}\ \bibnamefont {Reinbold}},\ and\ \bibinfo {author} {\bibfnamefont {R.~O.}\ \bibnamefont {Grigoriev}},\ }\href {https://doi.org/10.1017/jfm.2024.813} {\bibfield  {journal} {\bibinfo  {journal} {Journal of Fluid Mechanics}\ }\textbf {\bibinfo {volume} {996}},\ \bibinfo {pages} {A25} (\bibinfo {year} {2024})}\BibitemShut {NoStop}%
\end{thebibliography}

%

\end{document}